\documentclass[a4paper,11pt]{article}
\usepackage{amsfonts,amsthm,amssymb}
\usepackage{authblk}
\oddsidemargin 0cm \evensidemargin 0cm \topmargin -1cm \textwidth
17cm \textheight 23cm
\parindent 0cm

\newcommand\COMP{\hbox{C\kern -.58em {\raise .54ex \hbox{$\scriptscriptstyle |$}}
\kern-.55em {\raise .53ex \hbox{$\scriptscriptstyle |$}} }}
\newcommand\NN{\hbox{I\kern-.2em\hbox{N}}}
\newcommand\RR{\hbox{I\kern-.2em\hbox{R}}}
\newcommand\sRR{{\it \hbox{I\kern-.2em\hbox{R}}}}
\newcommand\QQ{\hbox{I\kern-.53em\hbox{Q}}}
\newcommand\PP{\hbox{I\kern-.53em\hbox{P}}}
\newcommand\EE{\hbox{I\kern-.53em\hbox{E}}}
\newcommand\ZZ{{{\rm Z}\kern-.28em{\rm Z}}}
\newcommand\be{\begin{equation}}
\newcommand\ee{\end{equation}}
%
\newtheorem{theorem}{Theorem}[section]

\newtheorem{proposition}[theorem]{Proposition}
\newtheorem{remark}[theorem]{Remark}

\newtheorem{lemma}[theorem]{Lemma}

\newtheorem{definition}[theorem]{Definitions}
\newtheorem{corollary}[theorem]{Corollary}

\def \Rbrack {[\![}
\def \Lbrack {]\!]}

\def \Rbrack {[\![}
\def \Lbrack {]\!]}

\newcommand{\is}{\centerdot}

\begin{document}
\title{Non-Arbitrage Under Additional Information for Thin Semimartingale Models\thanks{The research of Tahir Choulli  and Jun Deng is supported financially by the
Natural Sciences and Engineering Research Council of Canada,
through Grant G121210818. The research of Anna Aksamit and Monique Jeanblanc is supported
by Chaire Markets in transition, French Banking Federation.}}

\author[2]{Anna  Aksamit}
\author[1]{Tahir Choulli\thanks{corresponding author, {Email: tchoulli@ualberta.ca} }}
\author[3]{Jun Deng}
\author[2]{Monique Jeanblanc}
\affil[1]{Mathematical and Statistical Sciences Depart., University of Alberta, Edmonton, Canada}
\affil[3]{School of Banking and Finance,
University of International Business and Economics,
 Beijing, China}
\affil[2]{Laboratoire Analyse et Probabilit\'es,
Universit\'e d'Evry  Val d'Essonne,  Evry, France \newline  \newline
\textbf{This paper develops the part of thin and single jump processes mentioned in our earlier version: ''Non-arbitrage up to random horizon and after honest times for semimartingale models'',  Available at: http://arxiv.org/abs/1310.1142v1}}



\maketitle

\begin{abstract}
This paper completes the two studies undertaken in \cite{aksamit/choulli/deng/jeanblanc2} and \cite{aksamit/choulli/deng/jeanblanc3}, where the authors quantify the impact of a random time on the No-Unbounded-Risk-with-Bounded-Profit concept (called NUPBR hereafter) when the stock price processes are quasi-left-continuous (do not jump on predictable stopping times). Herein, we focus on  the NUPBR for semimartingales models that live on thin predictable sets only and the progressive enlargement with a random time. For this flow of information, we explain how far the NUPBR property is affected when one stops the model by an arbitrary random time or when one incorporates fully an honest time into the model. This also generalizes \cite{choulli/deng} to the case when the jump times are not ordered in anyway. Furthermore, for the current context, we show how to construct explicitly  local martingale deflator under the bigger filtration from those of the smaller filtration. \end{abstract}




\section{Introduction} We consider a stochastic basis $(\Omega, {\cal G},  {\mathbb  F }=({\cal F}_t)_{t\geq 0},  P)$,  where
${\mathbb  F }$ is a filtration
satisfying the usual hypotheses (i.e., right continuity and completeness), and $ {\cal F}_\infty
 \subseteq {\cal {G}}$. Financially speaking, the filtration $\mathbb F$ represents the flow of public information through time. On this basis, we consider an arbitrary but fixed $d$-dimensional c\`adl\`ag semimartingale, $S$, which represents the discounted price processes of $d$-stocks, while the riskless asset's price is assumed to be constant.
 Beside  the initial  model
 $\left(\Omega,{\cal G}, \mathbb F, P, S\right)$, we  consider a random time $\tau$, i.e. a non-negative ${\cal G}$-measurable random variable. At the practical level, this random time can model the death time, the default time of a firm, or any occurrence time of an event that might affect the market in some way. The main goal of this paper lies in discussing whether the new model $(S,\mathbb F, \tau)$ is arbitrage free or not. To address this question rigourously, we need to specify the non-arbitrage concept adopted herein on the one hand, as arbitrage in continuous time has competing definitions. On the other hand, one need to model the flow of information that catch both the flow $\mathbb F$ and the information represented by $\tau$. To this random time, we associate  the process $D$ and the
filtration $\mathbb G$ given by
\begin{equation}\label{DandfiltrationG}
D:= I_{\Rbrack\tau,+\infty\Rbrack},\ \ \ \mathbb G=\left({\cal G}_t\right)_{t\geq 0},\ \ \ {\cal G}_t =
\displaystyle\bigcap_{s>t}\Bigl({\cal F}_s\vee \sigma(D_u, u\leq s)\Bigr).\end{equation}
The filtration $\mathbb G$ is the smallest
right-continuous filtration which contains ${\mathbb  F }$ and makes $\tau$ a stopping time. In the probabilistic literature, $\mathbb G$ is called the progressive enlargement of $\mathbb F$ with $\tau$. To define mathematically the non-arbitrage condition, we need to define some notations that will be useful throughout the paper.
\subsection{Some General Notations and Definitions} Throughout the paper, $\mathbb H$ denotes a filtration satisfying
the usual hypotheses and $Q$ a probability measure  on the
filtered probability space $(\Omega, \mathbb H)$. The set of
martingales for the filtration $\mathbb H$ under $Q$ is denoted by
${\cal M}(\mathbb H, Q)$. When $Q=P$, we simply denote ${\cal
M}(\mathbb H)$. As usual, ${\cal A}^+(\mathbb H)$ denotes the set
of increasing,
right-continuous, $\mathbb H$-adapted and integrable processes.\\
\noindent If ${\cal C}(\mathbb H)$ is a class of $\mathbb H$-adapted processes,
 we denote by ${\cal C}_0(\mathbb H)$ the set of processes $X\in {\cal C}(\mathbb H)$ with $X_0=0$, and
 by ${\cal C}_{loc}(\mathbb H)$
 the set  of processes $X$
 such that there exists a sequence $(T_n)_{n\geq 1}$ of $\mathbb H$-stopping times  that increases to $+\infty$ and the
 stopped processes $X^{T_n}$ belong
to ${\cal C}(\mathbb H)$. We put $ {\cal C}_{0,loc}(\mathbb H)={\cal
C}_0(\mathbb H)\cap{{\cal
C}}_{loc}(\mathbb H)$.\\
\noindent For a process $K$ with $\mathbb H$-locally integrable variation, we denote by $K^{o,\mathbb H}$  its dual optional projection. The dual predictable projection of $K$ (also called the $\mathbb H$-dual predictable projection) is denoted   $K^{p,\mathbb H}$. For a process $X$, we denote $^{o, \mathbb H \!} X$ (resp.$\,^{p, \mathbb H\!} X$ ) its optional (resp. predictable) projection with respect to $\mathbb H$.\\
\noindent For an $\mathbb H$- semi-martingale $Y$, the set $L(Y, \mathbb H)$ is the set of $\mathbb H$ predictable processes
integrable w.r.t. $Y$ and  for $H\in L(Y, \mathbb H )$, we denote $H\centerdot  Y_t:= \int_0^t H_sdY_s$.\\
As usual, for a process $X$ and a random time $\vartheta$, we
denote by $X^\vartheta$ the stopped process. To distinguish the
effect of filtration, we will denote $\langle ., .\rangle^{\mathbb
F},$ or $\langle ., . \rangle^{\mathbb G}$
 the sharp bracket (predictable covariation process) calculated in the filtration ${\mathbb F}$ or  ${\mathbb G},$ if confusion may rise. We recall that, for general semi-martingales $X$ and $Y$, the
sharp bracket is (if it exists) the dual predictable projection of
the covariation process $[X,Y]$.\\

\noindent We recall
the definition of thin processes/sets for the reader's convenience
\begin{definition} A set $A\subset \Omega\times [0,\infty[$ is thin if, for all $\omega\in \Omega$,
the set $A(\omega)$ is countable.  A process $X$ is called  thin
if there exists a sequence of random variables $\xi_n$ and an
increasinq sequence of random times $T_n$ such that
$X_t=\sum_{n=1}^\infty \xi_n I_{\Rbrack T_n,\infty\Rbrack} $. Its
paths vary on a thin set only, and hence  $$X= I_{ \cup_{n=1}^\infty\Rbrack
T_n\Lbrack} \centerdot X= \sum_{n=1}^\infty I_{ \Rbrack
T_n\Lbrack} \centerdot X= \sum_{n=1}^\infty I_{ \Rbrack
T_n\Lbrack}  \Delta  X_{ T_n} .$$\end{definition}

\subsection{The non-arbitrage concept }
\noindent We introduce the non-arbitrage notion that will be addressed in this paper.

\begin{definition}\label{DefinitionofNUPBR} An $\mathbb H$-semimartingale
$X$ satisfies the {\it No-Unbounded-Profit-with-Bounded-Risk}
condition under $(\mathbb H,Q)$ (called NUPBR$(\mathbb H,Q)$ hereafter) if for any $T\in (0,+\infty)$ the set
$$\label{boundedset}
{\cal K}_{T} (X,\mathbb H):=\displaystyle \Bigl\{(H\centerdot
 S)_T \ \big|\ \
\ H \in L(X,\mathbb H)\ \mbox{and}\ H\is X\geq -1\ \Bigr\}$$ is
bounded in probability under $Q$. When $Q\sim P$, we simply write,
with an abuse of language, $X$ satisfies NUPBR$(\mathbb H)$.
\end{definition}

\noindent This definition was given in \cite{aksamit/choulli/deng/jeanblanc2}, together with the following .

\begin{proposition}\label{charaterisationofNUPBRloc} Let $X$ be an $\mathbb H$-semimartingale. Then the
 following assertions are equivalent.\\
{\rm{(a)}}  $X$ satisfies NUPBR$(\mathbb H)$.\\
  {\rm{(b)}} There exist a positive $\mathbb H$-local martingale, $Y$ and an
  $\mathbb H$-predictable process $\theta$ satisfying $0<\theta\leq 1$ and $Y(\theta\is X)$
   is a local martingale.
\end{proposition}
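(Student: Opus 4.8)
The plan is to establish the two implications separately; the direction (b)$\Rightarrow$(a) is routine, while (a)$\Rightarrow$(b) carries the real content.

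For (b)$\Rightarrow$(a), suppose $Y>0$ is an $\mathbb H$-local martingale and $\theta$ is $\mathbb H$-predictable with $0<\theta\leq 1$ such that, writing $N:=\theta\is X$, the process $YN$ is an $\mathbb H$-local martingale. Fix $T\in(0,\infty)$ and take $H\in L(X,\mathbb H)$ with $H\is X\geq -1$. Since $\theta>0$ one has $H\is X=(H/\theta)\is N$, so $V:=1+H\is X=1+(H/\theta)\is N\geq 0$. Integration by parts together with the fact that both $Y$ and $YN$ are local martingales — so that $Y$ ``deflates'' any stochastic integral against $N$ — shows that $YV$ is a nonnegative $\mathbb H$-local martingale, hence an $\mathbb H$-supermartingale, whence $\E[Y_T V_T]\leq Y_0$. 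As $Y_T>0$ a.s., a standard splitting argument (separating $\{Y_T$ small$\}$ from its complement) then gives that $\mathcal K_T(X,\mathbb H)$ is bounded in probability, i.e. $X$ satisfies NUPBR$(\mathbb H)$. The only delicate point is the claim that $YV$ is a local martingale, which rests on a localization lemma for deflators (the integrand $H/\theta$ need not be locally bounded).

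For (a)$\Rightarrow$(b), I would proceed in two stages. First, invoke the First Fundamental Theorem in its NUPBR form (Takaoka--Schweizer, or Kardaras' num\'eraire-portfolio version): NUPBR$(\mathbb H)$ for $X$ produces a strictly positive $\mathbb H$-local martingale $Z$ and an $\mathbb H$-predictable $\varphi$ with $0<\varphi\leq 1$ such that $Z(\varphi\is X)$ is an $\mathbb H$-sigma-martingale, the factor $\varphi$ serving both to make the integral well defined and to permit truncation. Second, upgrade the sigma-martingale to a true local martingale: choose a further $\mathbb H$-predictable process $0<\psi\leq 1$, built from the jump sizes of $X$ and from $Z$, so that $(\psi\varphi)\is X$ has $Z$-locally integrable variation on the thin set carrying its jumps; a sigma-martingale with this integrability is a local martingale. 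Then $Y:=Z$ and $\theta:=\psi\varphi\in(0,1]$ yield the assertion.

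The main obstacle is this upgrading step in full generality: $X$ is an arbitrary $d$-dimensional c\`adl\`ag semimartingale, not locally bounded, so one cannot simply stop to tame its jumps, and $\theta$ must be engineered to keep $\theta\is X$ integrable against the candidate deflator, to stay in $(0,1]$, and not to destroy the martingale property — all at once, and only ``locally''. This is where a characterization-of-sigma-martingales lemma (\`a la \'Emery, or Delbaen--Schachermayer) combined with a localizing sequence adapted jointly to $Z$ and to the jumps of $X$ is needed; on the thin set of predictable jumps one must in addition control the $\mathbb H$-predictable projections, which is the genuinely new difficulty for the thin models considered here. As an alternative to the sigma-martingale detour one can start from Kardaras' num\'eraire portfolio $\rho$ with $1+\rho\is X>0$ and set $Y:=1/(1+\rho\is X)$, but this only directly gives that $Y(1+H\is X)$ is a supermartingale, and the same truncation is still required to produce $\theta\in(0,1]$; so the localization/truncation is the crux in either approach.
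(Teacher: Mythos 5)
Your outline is correct and is essentially the same route by which this proposition is established in the source the paper relies on: the paper states it without proof, citing \cite{aksamit/choulli/deng/jeanblanc2}, where (b)$\Rightarrow$(a) is exactly the deflator--supermartingale argument you sketch (the product $Y(1+H\centerdot X)$ is a nonnegative sigma-martingale, hence a local martingale and a supermartingale, and boundedness in probability follows from $E[Y_T(1+H\centerdot X)_T]\leq Y_0$ and $Y_T>0$), while (a)$\Rightarrow$(b) is obtained from the Takaoka--Schweizer/Kardaras existence of a strictly positive sigma-martingale deflator combined with the \'Emery-type characterization of sigma-martingales to produce the predictable $\theta\in(0,1]$. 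The only cosmetic adjustment is that the ``upgrade'' to a true local martingale should be carried out on $Y_{-}\centerdot(\varphi\centerdot X)+[Y,\varphi\centerdot X]$ rather than on $(\psi\varphi)\centerdot X$ itself, which changes nothing since $\bigl((\theta\centerdot X)_{-}\bigr)\centerdot Y$ is automatically a local martingale.
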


\noindent For any $\mathbb H$-semimartingale $X$, the local martingales fulfilling the assertion (b) of Proposition \ref{charaterisationofNUPBRloc} are
 called   $\sigma$-martingale densities for $X$. The set of these $\sigma$-martingale densities will be denoted throughout the paper by
\begin{equation}\label{LFS}
{\cal L}({\mathbb H},X):=\left\{ Y\in {\cal M}_{loc}(\mathbb
H)\big|\ Y>0,\  \exists \theta \in {\cal {P}}(\mathbb H),\,
0<\theta\leq 1,\
 Y(\theta \centerdot X)\in {\cal M}_{loc}(\mathbb
H) \right\}\end{equation} where, as usual, ${\cal {P}}(\mathbb H)$
stands for the predictable $\sigma$-field on $\Omega \times
[0,\infty)$ and by abuse of notation $\theta\in  {\cal
{P}}(\mathbb H)$ means that $\theta$ is  ${\cal {P}}(\mathbb
H)$-measurable.
  We state, without proof, an obvious lemma.
\begin{lemma}\label{LY} For any $\mathbb H$-semimartingale $X$ and any $Y\in {\cal L}({\mathbb H},X)$, one has
$\ \  ^{p,\mathbb H} (Y  \vert \Delta X \vert )<\infty$ and  $
^{p,\mathbb H}(Y \Delta X
  )=0$.
\end{lemma}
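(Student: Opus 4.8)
The plan is to peel off, from $N:=Y(\theta\is X)$ --- which belongs to ${\cal M}_{loc}(\mathbb H)$ by the definition of ${\cal L}(\mathbb H,X)$ in $(\ref{LFS})$ --- the part whose jumps are exactly $Y\theta\,\Delta X$, and then to apply two elementary properties of local martingales. First I would use integration by parts on $N=Y\cdot(\theta\is X)$; since $(\theta\is X)_0=0$ this reads
\[
N=Y_-\is(\theta\is X)+(\theta\is X)_-\is Y+[Y,\theta\is X]=(Y_-\theta)\is X+\theta\is[Y,X]+(\theta\is X)_-\is Y .
\]
Because $(\theta\is X)_-$ is the left-limit process of a c\`adl\`ag $\mathbb H$-adapted process, it is $\mathbb H$-locally bounded, so $(\theta\is X)_-\is Y\in{\cal M}_{loc}(\mathbb H)$; since $N\in{\cal M}_{loc}(\mathbb H)$, the process
\[
\widetilde N:=(Y_-\theta)\is X+\theta\is[Y,X]=N-(\theta\is X)_-\is Y
\]
also belongs to ${\cal M}_{loc}(\mathbb H)$, and computing its jumps gives
\[
\Delta\widetilde N=Y_-\,\theta\,\Delta X+\theta\,\Delta Y\,\Delta X=\theta\,\Delta X\,(Y_-+\Delta Y)=Y\,\theta\,\Delta X .
\]

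Next I would use two standard facts valid for any local martingale $L$: (i) ${}^{p,\mathbb H}(|\Delta L|)<\infty$ a.s. (since $|\Delta L|^2\le[L,L]$ and $\sqrt{[L,L]}\in{\cal A}^+_{loc}(\mathbb H)$, so $|\Delta L|$ is dominated by an $\mathbb H$-locally integrable increasing process); and (ii) ${}^{p,\mathbb H}(\Delta L)=0$ (for every $\mathbb H$-predictable time $T$, the process $\Delta L_T\,I_{\Rbrack T,\infty\Rbrack}$ --- the difference of $L^T$ and of $L$ stopped immediately before $T$ --- is a local martingale of $\mathbb H$-locally integrable variation by (i), hence its $\mathbb H$-dual predictable projection, being a predictable local martingale of locally integrable variation, must vanish, and the predictable section theorem then yields ${}^{p,\mathbb H}(\Delta L)=0$). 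Applying (i) and (ii) to $L=\widetilde N$ and using $\Delta\widetilde N=Y\theta\,\Delta X$ with $Y\ge0$ and $\theta\ge0$, I obtain ${}^{p,\mathbb H}(Y\theta|\Delta X|)<\infty$ and ${}^{p,\mathbb H}(Y\theta\,\Delta X)=0$.

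Finally, since $\theta$ is $\mathbb H$-predictable with $0<\theta\le1$, predictable factors come out of the predictable projection: ${}^{p,\mathbb H}(Y\theta|\Delta X|)=\theta\,{}^{p,\mathbb H}(Y|\Delta X|)$, so dividing by the strictly positive $\theta$ gives ${}^{p,\mathbb H}(Y|\Delta X|)<\infty$; this makes ${}^{p,\mathbb H}(Y\Delta X)$ well defined, and $\theta\,{}^{p,\mathbb H}(Y\Delta X)={}^{p,\mathbb H}(Y\theta\,\Delta X)=0$ then forces ${}^{p,\mathbb H}(Y\Delta X)=0$, which is the claim. The only point needing a little care is the bookkeeping with the predictable projections --- checking that each term has an a.s.-finite predictable projection of its modulus before linearity and the pull-out rule may be invoked --- and this is exactly where the (routine) local integrability of the jumps of a local martingale, i.e. property (i), is used; everything else is immediate.
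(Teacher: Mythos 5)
Your proof is correct. The paper states this lemma without proof (labelling it ``obvious''), so there is no written argument in the text to compare against; your route --- integrate by parts to isolate the local martingale $\widetilde N=(Y_-\theta)\is X+\theta\is[Y,X]$ with $\Delta\widetilde N=Y\theta\,\Delta X$, invoke the standard facts that $^{p,\mathbb H}(|\Delta L|)<\infty$ and $^{p,\mathbb H}(\Delta L)=0$ for any $\mathbb H$-local martingale $L$ (the latter via the compensator of the single-jump process $L^T-L^{T-}$ at a predictable time $T$ and the predictable section theorem), and then pull the strictly positive predictable $\theta$ out of the projection --- is a clean and complete way of supplying the missing argument.
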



\noindent Below, we  state a result that was proved in \cite{aksamit/choulli/deng/jeanblanc2}, and will be frequently used throughout the paper.
\begin{proposition}\label{NUPBRLocalization}
Let $X$ be an $\mathbb H$ adapted   process. Then, the following assertions are equivalent.\\
{\rm{(a)}}  There exists a sequence   $(T_n)_{n\geq 1}$ of
$\mathbb H$-stopping times that increases to $+\infty$, such that
for each $n\geq 1$, there exists a probability $Q_n$ on $(\Omega,
{\mathbb H}_{T_n})$ such that $Q_n\sim P$ and $X^{T_n}$ satisfies
 NUPBR$(\mathbb H)$ under $Q_n$.\\
{\rm{(b)}}  $X$ satisfies   NUPBR$(\mathbb H)$.\\
{\rm{(c)}} There exists an $\mathbb H$-predictable process $\phi$,
such that $0<\phi\leq 1$ and  $(\phi\is X)$ satisfies
NUPBR$(\mathbb H)$.
\end{proposition}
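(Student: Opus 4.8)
The plan is to bypass $\sigma$-martingale densities and to compare the three assertions directly through the profit sets ${\cal K}_T(\cdot,\mathbb H)$ of Definition \ref{DefinitionofNUPBR}, relying on two soft facts. \emph{Fact (i):} for a fixed $\sigma$-field ${\cal A}$, boundedness in probability of a family of ${\cal A}$-measurable random variables is unaffected if $P$ is replaced by an equivalent probability on ${\cal A}$. Indeed, were such a family ${\cal K}$ bounded in $Q$-probability but not in $P$-probability, one could find $\varepsilon>0$ and $\xi_n\in{\cal K}$ with $P(A_n)>\varepsilon$ for $A_n:=\{|\xi_n|>n\}\in{\cal A}$; since ${\cal K}$ is $Q$-bounded, $Q(A_n)\to 0$, so along a subsequence $\sum_k Q(A_{n_k})<\infty$, whence $Q(\limsup_k A_{n_k})=0$ by Borel--Cantelli, hence $P(\limsup_k A_{n_k})=0$ because $Q\sim P$ on ${\cal A}$, contradicting $P(\limsup_k A_{n_k})\ge\limsup_k P(A_{n_k})\ge\varepsilon$. \emph{Fact (ii):} stopping $X$ at an $\mathbb H$-stopping time, or replacing $X$ by $\phi\is X$ with $\phi$ strictly positive, bounded and $\mathbb H$-predictable, only relabels the admissible integrands (by associativity of the stochastic integral) without changing the attainable terminal wealths.

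First I would settle (b) $\Leftrightarrow$ (c). Since $\phi$ is $\mathbb H$-predictable with $0<\phi\le 1$, $\phi\in L(X,\mathbb H)$, $\phi\is X$ is an $\mathbb H$-semimartingale, and associativity gives $G\in L(\phi\is X,\mathbb H)$ iff $G\phi\in L(X,\mathbb H)$ with $G\is(\phi\is X)=(G\phi)\is X$, while any $H\in L(X,\mathbb H)$ produces $H\phi^{-1}\in L(\phi\is X,\mathbb H)$ with $(H\phi^{-1})\is(\phi\is X)=H\is X$. Hence $H\mapsto H\phi^{-1}$ is a wealth-preserving bijection between the admissible integrands of $X$ and of $\phi\is X$, so ${\cal K}_T(X,\mathbb H)={\cal K}_T(\phi\is X,\mathbb H)$ for every $T\in(0,\infty)$ and (b) $\Leftrightarrow$ (c). The implication (b) $\Rightarrow$ (a) is equally immediate: take $T_n:=n$ and $Q_n:=P$; if $G\in L(X^{T_n},\mathbb H)$ with $G\is X^{T_n}\ge -1$, then $G\,I_{\Rbrack 0,T_n\Lbrack}\in L(X,\mathbb H)$, its wealth process equals $G\is X^{T_n}\ge -1$ and its terminal value is $(G\is X^{T_n})_T$, so ${\cal K}_T(X^{T_n},\mathbb H)\subseteq{\cal K}_T(X,\mathbb H)$ and the former inherits boundedness in probability; thus $X^{T_n}$ satisfies NUPBR$(\mathbb H)$ under $Q_n=P\sim P$.

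The substance is (a) $\Rightarrow$ (b). Each element of ${\cal K}_T(X^{T_n},\mathbb H)$ is ${\mathbb H}_{T\wedge T_n}$-measurable, hence ${\mathbb H}_{T_n}$-measurable, so, since $Q_n\sim P$ on ${\mathbb H}_{T_n}$, Fact (i) upgrades NUPBR$(\mathbb H,Q_n)$ of $X^{T_n}$ to NUPBR$(\mathbb H,P)$ of $X^{T_n}$, for every $n$. Fix $T\in(0,\infty)$ and $\varepsilon>0$. For $H\in L(X,\mathbb H)$ with $H\is X\ge -1$ we have $H\is X^{T_n}=(H\is X)^{T_n}\ge -1$, so $(H\is X^{T_n})_T\in{\cal K}_T(X^{T_n},\mathbb H)$, and $(H\is X^{T_n})_T=(H\is X)_{T\wedge T_n}$ coincides with $(H\is X)_T$ on $\{T_n\ge T\}$. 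Therefore, for every $c>0$,
$$P\bigl((H\is X)_T>c\bigr)\ \le\ \sup_{\eta\in{\cal K}_T(X^{T_n},\mathbb H)}P(\eta>c)\ +\ P(T_n<T).$$
Choosing $n$ with $P(T_n<T)<\varepsilon/2$ (possible since $T_n\uparrow+\infty$) and then $c$ making the first term $<\varepsilon/2$ (possible by NUPBR$(\mathbb H,P)$ of $X^{T_n}$, every $\eta\in{\cal K}_T(X^{T_n},\mathbb H)$ being $\ge -1$ so that control of the upper tail suffices), and taking the supremum over $H$, we get $\sup_{\xi\in{\cal K}_T(X,\mathbb H)}P(\xi>c)\le\varepsilon$. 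Since also $\xi\ge -1$ for all $\xi\in{\cal K}_T(X,\mathbb H)$ and $T$ was arbitrary, $X$ satisfies NUPBR$(\mathbb H)$. A heavier alternative would instead paste the $\sigma$-martingale densities of $X^{T_n}$ provided by Proposition \ref{charaterisationofNUPBRloc} (pushed under $P$ via the density process of $Q_n$) over the successive stochastic intervals $\Rbrack T_{n-1},T_n\Lbrack$; the route above avoids that concatenation.

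The main obstacle is concentrated in (a) $\Rightarrow$ (b), and really amounts to Fact (i): the measure-change invariance of boundedness in probability requires the short Borel--Cantelli argument above, and it must be invoked precisely where the relevant random variables are ${\mathbb H}_{T_n}$-measurable, since $Q_n$ is defined only on ${\mathbb H}_{T_n}$. The remaining points — that admissibility ($\ge -1$) survives stopping and truncation of integrands, and that associativity of the stochastic integral holds in the form used for (b) $\Leftrightarrow$ (c) — are routine.
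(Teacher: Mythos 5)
Your proof is correct. Note first that the paper states this proposition without proof, crediting the companion paper \cite{aksamit/choulli/deng/jeanblanc2}, so there is no in-text argument to compare against line by line. What you have done is work directly at the level of the profit sets $\mathcal{K}_T(\cdot,\mathbb H)$ of Definition \ref{DefinitionofNUPBR} and boundedness in probability, bypassing the $\sigma$-martingale-density characterization of Proposition \ref{charaterisationofNUPBRloc} entirely --- you correctly flag the density-concatenation route as the heavier alternative. Your (b)$\Leftrightarrow$(c) via the associativity bijection $G\mapsto G\phi$ between $L(\phi\centerdot X,\mathbb H)$ and $L(X,\mathbb H)$ yields the equality $\mathcal{K}_T(X,\mathbb H)=\mathcal{K}_T(\phi\centerdot X,\mathbb H)$ and is standard; (b)$\Rightarrow$(a) is immediate from the inclusion $\mathcal{K}_T(X^{T_n},\mathbb H)\subseteq\mathcal{K}_T(X,\mathbb H)$. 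The substantive step (a)$\Rightarrow$(b) rests on two sound ingredients: (i) the invariance of boundedness in probability of a family of $\mathcal{A}$-measurable random variables under an equivalent change of measure on $\mathcal{A}$ --- your Borel--Cantelli argument, including the reverse-Fatou inequality $P(\limsup_k A_{n_k})\ge\limsup_k P(A_{n_k})$, is correct, and one could alternatively argue from uniform integrability of the single density $dP/dQ_n\in L^1(Q_n)$; you invoke this exactly where needed, on $\mathbb H_{T_n}$, which is simultaneously the $\sigma$-field where the $(H\centerdot X^{T_n})_T$ are measurable (since $H\centerdot X^{T_n}$ is constant after $T_n$) and the $\sigma$-field on which $Q_n$ lives; and (ii) the tail estimate $P((H\centerdot X)_T>c)\le\sup_{\eta\in\mathcal{K}_T(X^{T_n},\mathbb H)}P(\eta>c)+P(T_n<T)$, with $n$ then $c$ chosen uniformly in $H$, which correctly exhausts $\{T_n<T\}$. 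The one cosmetic slip is in your preliminary Fact (ii), where you say stopping ``only relabels the admissible integrands \dots\ without changing the attainable terminal wealths'': stopping is not wealth-preserving the way $\phi\centerdot X$ is; it furnishes only the one-sided inclusions you in fact use, so the proof is unaffected, but the blanket phrasing of Fact (ii) overstates the stopping case.
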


We end this section with a simple but useful result for
predictable process with finite variation.
\begin{lemma}\label{NUPBRforPredictableProcesses}
Let $X$ be an $\mathbb H$-predictable process with finite variation. Then $X$ satisfies NUPBR$(\mathbb H)$ if and only if $X\equiv X_0$ (i.e. the process $X$ is constant).\end{lemma}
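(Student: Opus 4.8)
The plan is to prove the nontrivial implication: if $X$ is an $\mathbb H$-predictable process with finite variation satisfying NUPBR$(\mathbb H)$, then $X$ must be constant. The converse is trivial, since a constant process generates only the zero gain and hence ${\cal K}_T(X,\mathbb H)=\{0\}$, which is bounded in probability. For the forward direction I would argue by contradiction, using the explicit trading strategy $H=\mathrm{sign}$ of the increments of $X$ to manufacture an unbounded profit with zero risk.

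First I would reduce to a one-dimensional situation (the process $X$ has finite variation componentwise, and NUPBR is inherited by, or can be tested on, each coordinate separately by a suitable choice of $H$ supported on that coordinate). Then, since $X$ is predictable with finite variation and not constant, its total variation process $\mathrm{Var}(X)$ is a nondecreasing predictable process that is not identically zero; pick $T\in(0,\infty)$ and a set of positive probability on which $\mathrm{Var}(X)_T>0$. Now take $H$ to be the predictable process equal to $n\,\mathrm{sign}(dX)$, more precisely $H = n\, h$ where $h$ is the $\mathbb H$-predictable density of $X$ with respect to its variation (so that $h\is X = \mathrm{Var}(X)$ and $|h|\le 1$). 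Because $X$ has finite variation, $h\is X$ is itself of finite variation, so $H\is X = n\,\mathrm{Var}(X)\ge 0 \ge -1$, i.e. the admissibility constraint $H\is X\ge -1$ is satisfied for every $n$. Hence $(H\is X)_T = n\,\mathrm{Var}(X)_T$ belongs to ${\cal K}_T(X,\mathbb H)$ for all $n\ge 1$.

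The final step is to observe that the family $\{n\,\mathrm{Var}(X)_T : n\ge 1\}$ fails to be bounded in probability whenever $\PP(\mathrm{Var}(X)_T>0)>0$: on that event the sequence tends to $+\infty$, so for any $\varepsilon>0$ with $\PP(\mathrm{Var}(X)_T>0)>\varepsilon$ one cannot find a constant $M$ with $\sup_n \PP(n\,\mathrm{Var}(X)_T>M)\le \varepsilon$. This contradicts NUPBR$(\mathbb H)$, so $\mathrm{Var}(X)_T=0$ a.s. for every $T$, and therefore $X\equiv X_0$. The main obstacle — and the only point requiring a little care — is the construction of the predictable density $h$ with $|h|\le1$ and $h\is X=\mathrm{Var}(X)$: this is the Radon–Nikodym-type decomposition of a predictable finite-variation process against its variation, which does produce a predictable integrand, so that $H=n h\in L(X,\mathbb H)$ genuinely holds; with that in hand, the admissibility bound $H\is X\ge0\ge-1$ and the unboundedness conclusion are immediate. \findemo
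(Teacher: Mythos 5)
Your proof is correct. The paper states this lemma without giving a proof (it calls it a ``simple but useful result''), and your argument is precisely the standard one: extract an $\mathbb H$-predictable Radon--Nikodym density $h$ of $dX$ against $d\,\mathrm{Var}(X)$ with $|h|\le 1$ and $h\is X=\mathrm{Var}(X)\ge0$, note that $nh\in L(X,\mathbb H)$ satisfies the admissibility constraint $nh\is X\ge 0\ge -1$ for every $n$, and conclude that $n\,\mathrm{Var}(X)_T$ cannot be bounded in probability unless $\mathrm{Var}(X)_T=0$ a.s., which forces $X\equiv X_0$. The one point that genuinely needs care---that the density $h$ of a predictable finite-variation process against its variation can be chosen predictable---is standard, and you correctly isolate and invoke it; the coordinatewise reduction is also harmless since restricting $H$ to a single coordinate shows NUPBR is inherited by each component.
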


\subsection{Our Achievements}
Given the modeling of the new flow of the information, our main goal becomes whether $(S,\mathbb G)$ satisfies the NUPBR or not when $S$ is an $\mathbb F$-semimartingale. Precisely, we characterise the pair of initial market and the random time $(S,\tau)$ for  which the new market $(S,\mathbb G)$ fulfills the NUPBR. This problem was addressed in \cite{aksamit/choulli/deng/jeanblanc2} and \cite{aksamit/choulli/deng/jeanblanc3} for the parts $(S^{\tau},\mathbb G)$ and $(S-S^{\tau},\mathbb G)$ respectively when $S$ is a quasi-left-continuous process. Thus, the case of thin $\mathbb F$-semimartingale with predictable jumps is not covered in these works. The case of discrete time market with finite horizon is presented in \cite{choulli/deng}.  Hence, the main objective of this work lies in deriving results on the NUPBR for thin processes under additional information generated by a random time. It is important to mention that this work complies the other parts towards understanding the effect of extra information on the NUPBR for general semimartingales. This can be seen by recalling that for an $\mathbb H$-semimartingale, $X$, we
associate a sequence of $\mathbb H$-predictable stopping times
$(T^X_n)_{n\geq 1}$  that exhaust the accessible jump times of
$X$, and put $\Gamma_X:=\bigcup_{n=1}^{\infty} \Rbrack
T_n^X\Lbrack$. Then, we can decompose $X$ as follows.

\begin{equation}\label{Sdecomposition}
X=X^{(qc)}+X^{(a)},\ X^{(a)}:=I_{\Gamma_X}\is X,\
X^{(qc)}:=X-X^{(a)}.\end{equation} The process $X^{(a)}$ (the accessible
part of $X$) is a thin process with  predictable jumps only,
 while $X^{(qc)}$ is a $\mathbb H$-quasi-left-continuous process (the quasi-left-continuous part
 of $X$).

\begin{lemma}\label{NUPBRdecomposed}
 Let $X$ be an $\mathbb{H}$-semimartingale. Then $X$ satisfies NUPBR$(\mathbb{H})$ if and only
 if $X^{(a)}$ and $X^{(qc)}$ satisfy NUPBR$(\mathbb{H})$.
\end{lemma}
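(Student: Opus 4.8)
The plan is to prove the two implications separately; the soft direction is ``only if'' and the real work is in ``if''. For necessity: $I_{\Gamma_X}$ and $I_{\Gamma_X^c}:=1-I_{\Gamma_X}$ are $\mathbb H$-predictable and $\{0,1\}$-valued, $X^{(a)}=I_{\Gamma_X}\is X$, and $X^{(qc)}-X_0=I_{\Gamma_X^c}\is X$; hence for $H\in L(X^{(a)},\mathbb H)$ one has $HI_{\Gamma_X}\in L(X,\mathbb H)$ and $H\is X^{(a)}=(HI_{\Gamma_X})\is X$, and for $H\in L(X^{(qc)},\mathbb H)$ one has $HI_{\Gamma_X^c}\in L(X,\mathbb H)$ and $H\is X^{(qc)}=(HI_{\Gamma_X^c})\is X$. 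In particular $H\is X^{(a)}\ge-1$ (resp.\ $H\is X^{(qc)}\ge-1$) forces the corresponding $X$-integral to be $\ge-1$, so ${\cal K}_T(X^{(a)},\mathbb H)\cup{\cal K}_T(X^{(qc)},\mathbb H)\subseteq{\cal K}_T(X,\mathbb H)$ for every $T>0$; if $X$ satisfies NUPBR$(\mathbb H)$ the right-hand side is bounded in probability, hence so are its two subsets, which is the claim.

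For sufficiency, assume $X^{(a)}$ and $X^{(qc)}$ satisfy NUPBR$(\mathbb H)$. By Proposition \ref{charaterisationofNUPBRloc} I fix strictly positive $Z_1,Z_2\in{\cal M}_{loc}(\mathbb H)$ and $\mathbb H$-predictable $\theta_1,\theta_2$ with $0<\theta_i\le1$ such that $Z_1(\theta_1\is X^{(a)})$ and $Z_2(\theta_2\is X^{(qc)})$ lie in ${\cal M}_{loc}(\mathbb H)$, and write $Z_i={\cal E}(L_i)$ with $L_i:=(1/Z_{i,-})\is Z_i\in{\cal M}_{loc}(\mathbb H)$, $\Delta L_i>-1$. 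The first and crucial step is a \emph{support reduction} of the deflators. Splitting $L_1=I_{\Gamma_X}\is L_1+I_{\Gamma_X^c}\is L_1$ into summands with vanishing mutual covariation, Yor's formula gives $Z_1=\widehat Z_1\,Z_1'$, where $\widehat Z_1:={\cal E}(I_{\Gamma_X}\is L_1)>0$ is a local martingale with no continuous part whose jumps are carried by $\Gamma_X$, and $Z_1':={\cal E}(I_{\Gamma_X^c}\is L_1)>0$ is a local martingale that never jumps on $\Gamma_X$. Since $N_1:=\theta_1\is X^{(a)}$ also has no continuous part and jumps only on $\Gamma_X$, the product $\widehat Z_1N_1$ shares these properties, hence $[\widehat Z_1N_1,Z_1']=0$; together with $Z_1N_1=Z_1'(\widehat Z_1N_1)\in{\cal M}_{loc}(\mathbb H)$ and $Z_1'>0$ this forces $\widehat Z_1N_1\in{\cal M}_{loc}(\mathbb H)$, i.e.\ $\widehat Z_1\in{\cal L}(\mathbb H,X^{(a)})$. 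By the symmetric argument — splitting $L_2$ along $\Gamma_X,\Gamma_X^c$ and using that $X^{(qc)}$ is $\mathbb H$-quasi-left-continuous, so that $N_2:=\theta_2\is X^{(qc)}$ does not jump on $\Gamma_X$ — I get $\widehat Z_2:={\cal E}(I_{\Gamma_X^c}\is L_2)\in{\cal L}(\mathbb H,X^{(qc)})$, never jumping on $\Gamma_X$, with $\widehat Z_2N_2\in{\cal M}_{loc}(\mathbb H)$.

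Then I paste: put $Z:=\widehat Z_1\,\widehat Z_2$ and $\theta:=\theta_1I_{\Gamma_X}+\theta_2I_{\Gamma_X^c}$, which is $\mathbb H$-predictable with $0<\theta\le1$ and satisfies $\theta\is X=\theta_1\is X^{(a)}+\theta_2\is X^{(qc)}=N_1+N_2$. Since $[\widehat Z_1,\widehat Z_2]=(I_{\Gamma_X}I_{\Gamma_X^c})\is[L_1,L_2]=0$, $Z$ is a strictly positive local martingale; and in $ZN_1=\widehat Z_2(\widehat Z_1N_1)$ and $ZN_2=\widehat Z_1(\widehat Z_2N_2)$ the two factors of each product are local martingales whose jumps live on the complementary predictable sets $\Gamma_X$ and $\Gamma_X^c$, with only the ``$\Gamma_X^c$-factor'' possibly carrying a continuous part, so their covariation vanishes and $ZN_1,ZN_2$ — hence $Z(\theta\is X)$ — are local martingales. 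Thus $Z\in{\cal L}(\mathbb H,X)$, and Proposition \ref{charaterisationofNUPBRloc} yields NUPBR$(\mathbb H)$ for $X$.

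The main obstacle is the support-reduction step, precisely the implication ``$Z_1'(\widehat Z_1N_1)\in{\cal M}_{loc}(\mathbb H)$, $Z_1'>0$, $[\widehat Z_1N_1,Z_1']=0\ \Rightarrow\ \widehat Z_1N_1\in{\cal M}_{loc}(\mathbb H)$'' (and its analogue for $X^{(qc)}$). I would establish it by localizing $Z_1'$ to a true martingale, changing the probability by it so that $\widehat Z_1N_1$ becomes a local martingale under the new measure (Bayes' rule), and then using the orthogonality of the density to $\widehat Z_1N_1$ to see via Girsanov's theorem that the drift of $\widehat Z_1N_1$ under the original measure vanishes; alternatively one reduces along a localizing sequence using Proposition \ref{NUPBRLocalization}, which also absorbs the $\sigma$-martingale versus local-martingale technicalities. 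Once the deflators are normalized in this way, the remaining arguments are only the elementary facts that a stochastic integral of a locally bounded predictable process against a local martingale is a local martingale, and that a product of two local martingales with vanishing covariation is a local martingale, together with the bracket bookkeeping on the disjoint predictable sets $\Gamma_X$ and $\Gamma_X^c$.
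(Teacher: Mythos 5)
Your proof is correct and follows the same two-step structure as the paper's: characterize NUPBR by a $\sigma$-martingale density (Proposition~\ref{charaterisationofNUPBRloc}), reduce each deflator's support to $\Gamma_X$ resp.\ $\Gamma_X^c$, and glue. The added value over the paper's terse argument is that you rigorously justify the paper's unexplained ``there is no loss of generality in assuming $N_1=I_{\Gamma_X}\is N_1$'' --- via the Yor-formula factorization $Z_1=\widehat Z_1Z_1'$, the vanishing bracket $[\widehat Z_1 N_1,Z_1']=0$, and an integration-by-parts (equivalently Girsanov) step to drop the orthogonal factor $Z_1'$ --- and you handle the ``only if'' direction directly from the definition via the set inclusion ${\cal K}_T(X^{(a)},\mathbb H)\cup{\cal K}_T(X^{(qc)},\mathbb H)\subseteq{\cal K}_T(X,\mathbb H)$, which is cleaner than the paper's appeal to the deflator characterization and its ``it is obvious'' assertion. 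One small note: the hedging in your last paragraph is unnecessary --- the integration-by-parts argument you already sketched ($Z_1'(\widehat Z_1 N_1)=Z'_{1,-}\is(\widehat Z_1 N_1)+(\widehat Z_1 N_1)_-\is Z_1'$ when the bracket vanishes, and $1/Z'_{1,-}$ is locally bounded) closes the gap without a change of measure.
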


\begin{proof}
  Thanks to Proposition \ref{charaterisationofNUPBRloc}, $X$ satisfies NUPBR$(\mathbb{H})$ if
  and only if there exist an $\mathbb{H}$-predictable real-valued process $\phi>0$ and a
  positive $\mathbb{H}$-local martingale $Y$ such that $Y(\phi\centerdot X)$ is an $\mathbb{H}$-
  local martingale. Then, it is obvious that  $Y(\phi I_{\Gamma_X}\centerdot X)$ and
  $Y(\phi I_{{\Gamma_X}^c}\centerdot X)$ are both $\mathbb{H}$-local martingales.
  This proves that $X^{(a)}$ and $X^{(qc)}$ both satisfy NUPNR$(\mathbb{H})$.\\
 \noindent Conversely, if $X^{(a)}$ and $X^{(qc)}$ satisfy NUPNR$(\mathbb{H})$,
 then there exist two $\mathbb{H}$-predictable real-valued processes $\phi_1, \phi_2>0$ and
 two positive $\mathbb{H}$-local martingales $D_1={\cal E}(N_1), D_2={\cal E}(N_2)$  such that $D_1(\phi_1\centerdot (I_{\Gamma_X}\centerdot S))$ and
$D_2(\phi_2\centerdot (I_{{\Gamma_X}^c}\centerdot X))$ are both
$\mathbb{H}$-local martingales. Remark that there is no loss of
generality in assuming $N_1=I_{\Gamma_X}\is N_1$ and
$N_2=I_{{\Gamma_X}^c}\is N_2$. Put
 \begin{eqnarray*}
  N := I_{{\Gamma_X}}\centerdot N_1  + I_{{\Gamma_X}^c}\centerdot N_2 \ \ \ \ \mbox{and} \ \ \ \psi := \phi_1 I_{{\Gamma_X}} + \phi_2I_{{\Gamma_X}^c}.
 \end{eqnarray*}
 Obviously, ${\cal E}( N)>0$, ${\cal E}( N)$ and ${\cal E}( N)(\psi \centerdot S)$ are $\mathbb H$-local martingales, $\psi$ is $\mathbb H$-predictable and $0< \psi\leq 1$. This ends the proof of the lemma.
\end{proof}

\noindent Therefore, throughout the paper $S$ is assumed to be a thin $\mathbb F$-semimartingale.
\noindent This paper is organized as follows. The next section (Section \ref{SectionBeforeTau}) addresses the case of stopping at $\tau$ (i.e. deals with the model $(S^{\tau},\mathbb G)$), while Section \ref{SectionAfterTau} focuses on the model $(S-S^{\tau},\mathbb G)$. Sections \ref{proofs} and \ref{Proof4theoremsNotTech} prove the main results elaborated in Sections \ref{SectionBeforeTau} and \ref{SectionAfterTau}. Section \ref{Proof4theoremsNotTech} is the most technical part of the paper. We conclude this paper with an appendix, where we recall some useful technical results.
%
%

\section{The Case of Stopping at $\tau$ }\label{SectionBeforeTau}
This section elaborates our results on the NUPBR for the model $\left(S^{\tau},\mathbb G\right)$ in two subsections. The first subsection presents our principal results as well as their immediate consequences and/or applications, while the second subsection outlines a method to construct explicitly $\mathbb G$-local martingale deflators from $\mathbb F$-local martingale deflators. To this end, in addition to $\mathbb G$ and $D$ defined in (\ref{DandfiltrationG}), we associate to $\tau$ two important
$\mathbb F$-supermartingales given by
\begin{equation}\label{ZandZtilde}
Z_t := P\Bigl(\tau >t\ \big|\ {\cal F}_t\Bigr)\   \mbox{and}\ \ \ \widetilde Z_t:=P\left(\tau\geq t\ \Big|\ {\cal F}_t\right).
\end{equation}
The supermartingale $Z$ is right-continuous with left limits and
coincides with the $\mathbb F$-optional projection of $I_{\Lbrack
0, \tau\Rbrack}$, while $\widetilde Z$ admits right limits and
left limits only and is the $\mathbb F$-optional projection of
$I_{\Lbrack 0, \tau\Lbrack}$.  The decomposition of $Z$ leads to
an important $\mathbb F$-martingale  $m$,   given by
 \begin{equation}\label{processmZ}
m := Z+D^{o,\mathbb F},\end{equation}
 where $D^{o,\mathbb F}$ is the $\mathbb F$-dual optional projection of $D$
 (see  \cite{Jeu} for more details).


\subsection{The main results}

In this subsection, we  outline the main results on the NUPBR
condition for the stopped thin $\mathbb
F$-semimartingales (with predictable jumps only) with $\tau$. To this end, we start by addressing the case of single jump processes with $\mathbb F$-predictable stopping times.

\begin{theorem}\label{main3} Consider an $\mathbb F$-predictable stopping time $T$
and an ${\cal F}_T$-measurable variable $\xi$ satisfying $E(\vert \xi\vert\big|{\cal F}_{T-})<+\infty$ P-a.s. on $\{T<+\infty\}$.\\ If $S:= \xi  I_{\{ Z_{T-}>0\}}I_{\Rbrack T,+\infty\Rbrack}$, then the following assertions are equivalent:\\
{\rm{(a)}} $S^{\tau}$ satisfies NUPBR$(\mathbb G)$.\\
{\rm{(b)}} The process $\widetilde S:=\xi I_{\{ \widetilde Z_T>0\}}I_{\Rbrack T,+\infty\Rbrack}=
I_{\{ \widetilde Z>0\}}\centerdot S$ satisfies NUPBR$(\mathbb F)$.\\
{\rm{(c)}} $S$ satisfies NUPBR$(\mathbb F,{\widetilde Q}_T)$,
where ${\widetilde Q}_T$ is
\begin{equation}\label{Qtilde(T)}
{\widetilde Q}_T:=\left({{\widetilde Z_T}\over{Z_{T-}}}I_{\{ Z_{T-}>0\}}+I_{\{ Z_{T-}=0\}}\right)\cdot P,\end{equation}
{\rm{(d)}} $S$ satisfies NUPBR$(\mathbb F,Q_T)$, where $Q_T$ is defined by
\begin{equation}\label{QT}
 {{dQ_T}\over{dP}}:=
 {{I_{\{ \widetilde Z_T>0\}\cap\Gamma_0(T)}}\over{P(\widetilde Z_T>0\big|\ {\cal F}_{T-})}}
 +I_{\Omega\setminus\Gamma_0(T)},\ \Gamma_0(T):=\{P(\widetilde Z_T>0|{\cal F}_{T-})>0\}. \end{equation}
\end{theorem}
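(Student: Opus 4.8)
The plan is to prove the chain of equivalences by establishing $(a)\Leftrightarrow(b)$ as the core probabilistic reduction from $\mathbb G$ to $\mathbb F$, and then showing $(b)\Leftrightarrow(c)\Leftrightarrow(d)$ as measure‑change reformulations that take place entirely in the filtration $\mathbb F$. For $(a)\Leftrightarrow(b)$, the natural route is to work on $\Rbrack 0,\tau\Lbrack$: since $S$ is a single‑jump process at the $\mathbb F$‑predictable time $T$, the stopped process $S^\tau$ has its only possible jump at $T$ on the event $\{T\le\tau\}$, and $\widetilde Z_T=P(\tau\ge T\mid{\cal F}_T)$ governs whether $\{T\le\tau\}$ is nontrivial given ${\cal F}_T$. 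I would first reduce to the case $\widetilde Z_T>0$ (on $\{\widetilde Z_T=0\}$ the jump of $S^\tau$ at $T$ vanishes a.s., so both sides trivially hold), and then use Proposition~\ref{charaterisationofNUPBRloc} together with the localization Proposition~\ref{NUPBRLocalization}: a $\sigma$‑martingale density for $S^\tau$ in $\mathbb G$ can be built from one for $\widetilde S$ in $\mathbb F$ by the standard progressive‑enlargement formulae (dividing by $Z$, using the $\mathbb G$‑decomposition of $\mathbb F$‑local martingales on $\Rbrack 0,\tau\Rbrack$), and conversely the $\mathbb G$‑optional projection of a $\mathbb G$‑density onto $\mathbb F$, restricted appropriately, yields an $\mathbb F$‑density for $\widetilde S$.

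For $(b)\Leftrightarrow(c)$, I would observe that ${\widetilde Q}_T$ is a genuine probability measure: $E\!\left({\widetilde Z_T}/{Z_{T-}}\,\big|\,{\cal F}_{T-}\right)=1$ on $\{Z_{T-}>0\}$ because $Z_{T-}={}^{p,\mathbb F}(\id_{\Rbrack 0,\tau\Lbrack})_T$ and $\widetilde Z_T$ is the ${\cal F}_T$‑conditional probability of $\{\tau\ge T\}$, and by the predictable projection/dual predictable projection relation $\widetilde Z_T=Z_{T-}-\Delta m_T+\Delta D^{o,\mathbb F}_T$ one gets $E(\widetilde Z_T\mid{\cal F}_{T-})=Z_{T-}$ on $\{Z_{T-}>0\}$ (using that $m$ is an $\mathbb F$‑martingale, so $E(\Delta m_T\mid{\cal F}_{T-})=0$, and $D^{o,\mathbb F}$ is predictable). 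Since $S=I_{\{Z_{T-}>0\}}\is S$ is supported on $\Rbrack T,\infty\Rbrack$ with the single ${\cal F}_T$‑measurable jump $\xi$, multiplying $P$ by the density $({\widetilde Z_T}/{Z_{T-}})I_{\{Z_{T-}>0\}}+I_{\{Z_{T-}=0\}}$ — which is ${\cal F}_T$‑measurable — amounts to weighting exactly the jump at $T$, and NUPBR$(\mathbb F,{\widetilde Q}_T)$ for $S$ is equivalent to NUPBR$(\mathbb F,P)$ for the process where the jump $\xi$ is kept only on $\{\widetilde Z_T>0\}$, i.e. for $\widetilde S$; this is where one invokes that NUPBR for a single‑jump process is a pointwise condition on the conditional law of the jump given ${\cal F}_{T-}$, so a measure change equivalent to $P$ that does not touch the conditioning $\sigma$‑field ${\cal F}_{T-}$ preserves it. Finally $(c)\Leftrightarrow(d)$ is the same phenomenon with a different, but ${\cal F}_T$‑measurable and ${\cal F}_{T-}$‑conditionally normalized, density: both ${\widetilde Q}_T$ and $Q_T$ differ from $P$ only through a density that is constant given ${\cal F}_{T-}$ on the relevant events and positive on $\{\widetilde Z_T>0\}$, so they induce the same NUPBR property for any single‑jump‑at‑$T$ process; it suffices to check that $P(\widetilde Z_T>0\mid{\cal F}_{T-})>0$ exactly on the set where the jump of $\widetilde S$ is conditionally nondegenerate, which is $\Gamma_0(T)$, and that outside $\Gamma_0(T)$ the jump is conditionally null so both measures coincide with $P$ there.

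The main obstacle I expect is the $(a)\Leftrightarrow(b)$ step, specifically handling the jump of $S^\tau$ at the predictable time $T$ on the boundary event $\{\tau=T\}\cap\{Z_{T-}>0\}$: the $\mathbb G$‑compensator of $D$ at $T$ and the correct form of the $\mathbb G$‑predictable process $\theta$ (needed in Proposition~\ref{charaterisationofNUPBRloc}) must be computed carefully, since $\widetilde Z_T$ rather than $Z_T$ or $Z_{T-}$ is the right quantity, and one must verify that the candidate $\sigma$‑martingale density stays strictly positive on $\Rbrack 0,\tau\Rbrack$ — which forces the $\{\widetilde Z_T>0\}$ restriction in $(b)$ and is precisely why that restriction appears. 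I would isolate this in a lemma computing $^{p,\mathbb G}(Y\Delta S^\tau)$ at $T$ via Lemma~\ref{LY} and the known relation between $\mathbb F$‑ and $\mathbb G$‑predictable projections up to $\tau$, and then let the localization and decomposition lemmas (Propositions~\ref{charaterisationofNUPBRloc}, \ref{NUPBRLocalization} and Lemma~\ref{NUPBRdecomposed}) assemble the global statement.
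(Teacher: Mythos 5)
Your plan to establish $(b)\Leftrightarrow(c)\Leftrightarrow(d)$ by comparing the three $\mathbb F$-densities on the $\mathcal F_{T-}$-measurable partition $\{Z_{T-}=0\}$ vs.\ $\{Z_{T-}>0\}$ is essentially the paper's Steps~1 and~2, and that part is sound in spirit. However, there are two genuine problems.

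First, the bridge between $\mathbb F$ and $\mathbb G$ — which you identify as $(a)\Leftrightarrow(b)$ and rightly call the hard step — is not actually proved; it is only gestured at. You invoke ``standard progressive-enlargement formulae (dividing by $Z$)'' and ``the $\mathbb G$-optional projection of a $\mathbb G$-density onto $\mathbb F$,'' but neither of these is the mechanism the argument actually needs, and neither is carried out. What the paper uses is much more specific. For $(b)\Rightarrow(a)$ it first upgrades $(b)$ to the existence of $R\sim P$ making $\widetilde S$ an $(\mathbb F,R)$-martingale, and then applies Proposition~\ref{cruciallemma1}, which supplies an \emph{explicit} $\mathbb G$-martingale measure $Q^{\mathbb G}_T$ (with density $U^{\mathbb G}(T)=I_{\{T>\tau\}}+I_{\{T\le\tau\}}Z_{T-}/\widetilde Z_T$, normalised over $\mathcal G_{T-}$) under which $M^\tau$ is a $\mathbb G$-martingale. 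For $(a)\Rightarrow(\text{c/d})$ the paper uses the splitting $Y^{\mathbb G}I_{\{T\le\tau\}}=Y^{\mathbb F}_1 I_{\{T<\tau\}}+Y^{\mathbb F}_2 I_{\{T=\tau\}}$ (Lemma~\ref{lemma:predsetFG}) to pull an $\mathbb F$-density out of a $\mathbb G$-density by an algebraic decomposition of the $\mathcal G_T$-measurable variable, not by taking optional projections of processes. Without Proposition~\ref{cruciallemma1} and Lemma~\ref{lemma:predsetFG}, your ``lemma computing $^{p,\mathbb G}(Y\Delta S^\tau)$ at $T$'' has no content: in particular, you would need to verify positivity of the resulting candidate density on $\{\tau=T\}$, and that is precisely what the $Z_{T-}/\widetilde Z_T$ factor in $Q^{\mathbb G}_T$ handles.

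Second, the auxiliary computation you give to show $\widetilde Q_T$ is a probability is wrong as stated. The correct identity is $\widetilde Z_T = Z_{T-}+\Delta m_T$ (from $\widetilde Z = Z + \Delta D^{o,\mathbb F}$ and $m=Z+D^{o,\mathbb F}$), and $E(\widetilde Z_T\mid\mathcal F_{T-})=Z_{T-}$ follows solely from $E(\Delta m_T\mid\mathcal F_{T-})=0$ at the predictable time $T$. Your formula $\widetilde Z_T=Z_{T-}-\Delta m_T+\Delta D^{o,\mathbb F}_T$ does not hold, and the subsequent claim that ``$D^{o,\mathbb F}$ is predictable'' is false (it is optional but in general not predictable); even if it were, $E(\Delta D^{o,\mathbb F}_T\mid\mathcal F_{T-})$ would then equal $\Delta D^{o,\mathbb F}_T$, not $0$. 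This does not destroy the intended conclusion, but it indicates the mechanics of the predictable projection at $T$ are not yet under control, which is concerning given that the $(a)\Leftrightarrow(b)$ bridge turns entirely on those mechanics.
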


\noindent The proof of this theorem is long and requires a result from the next subsection. Thus, this proof is delegated to Section  \ref{proofs}.\\

\begin{remark} \label{remark214}(a) The importance of Theorem \ref{main3} goes beyond its vital role,
as a building block for  the more general result. In fact,  Theorem \ref{main3} provides two different characterizations for   the NUPBR$\left(\mathbb G\right)$ of $S^{\tau}$. The characterizations (c) and (d) are expressed in term of the  NUPBR$\left(\mathbb F\right)$ of $S$ under absolute continuous change of measure, while the characterization (a) uses transformation of $S$ without any change of measure. Furthermore, Theorem \ref{main3} can be easily extended to the case of countably many ordered predictable jump times $T_0=0\leq T_1\leq T_2\leq...$ with $\sup_n T_n=+\infty\ P-a.s.$.\\
(b) In Theorem \ref{main3}, the choice of $S$ having the form $S:=
\xi I_{\{ Z_{T-}>0\}}I_{\Rbrack T,+\infty\Rbrack}$ is not
restrictive. This can be understood from the fact that any single
jump process $S$ can be decomposed as follows
$$S:= \xi  I_{\Rbrack T,+\infty\Rbrack}= \xi  I_{\{ Z_{T-}>0\}}I_{\Rbrack T,+\infty\Rbrack}+ \xi  I_{\{ Z_{T-}=0\}}I_{\Rbrack T,+\infty\Rbrack}
=:{\overline S}+{\widehat S}.$$ Thanks to $\{T\leq\tau\}\subset\{
Z_{T-}>0\}$, we have ${\widehat S}^\tau=\xi  I_{\{
Z_{T-}=0\}}I_{\{T\leq\tau\}}I_{\Rbrack T,+\infty\Rbrack}\equiv 0$
is (obviously) a $\mathbb G$-martingale. Thus, the only part of
$S$ that requires careful attention is ${\overline S}:=\xi I_{\{
Z_{T-}>0\}}I_{\Rbrack T,+\infty\Rbrack}$.
\end{remark}

\noindent The following proposition describes the models of $\tau$ for which any single jump $\mathbb F$-martingale (that jumps at fixed $\mathbb F$-predictable stopping time $T$), stopped at $\tau$, satisfies the NUPBR$(\mathbb G)$.

\begin{proposition}\label{corollaryofmain3}
Let $T$ be an $\mathbb F$-predictable stopping time. Then, the following assertions are equivalent:\\
{\rm{(a)}} On $\left\{ T<+\infty\right\}$, we have
\begin{equation}\label{equation1111}
 \left\{ \widetilde Z_T=0\right\}\subset\displaystyle\Bigl\{ Z_{T-}=0\Bigr\}.\end{equation}
{\rm{(b)}} For any $M:=\xi I_{\Rbrack T,+\infty\Rbrack}$ where $\xi\in L^{\infty}({\cal F}_T)$ such that $E(\xi|{\cal F}_{T-})=0$, $M^{\tau}$ satisfies NUPBR$(\mathbb G)$.
\end{proposition}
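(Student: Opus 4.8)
The plan is to deduce this from Theorem \ref{main3} by exploiting the single-jump structure. Throughout I use two classical facts about an $\mathbb F$-predictable time $T$, valid on $\{T<+\infty\}$: the identity $Z_{T-}=E(\widetilde Z_T\,|\,{\cal F}_{T-})$ (see \cite{Jeu}), which, since $\widetilde Z\geq0$, yields both $\{Z_{T-}=0\}\subset\{\widetilde Z_T=0\}$ and $\Gamma_0(T)=\{P(\widetilde Z_T>0|{\cal F}_{T-})>0\}=\{Z_{T-}>0\}$; and the inclusion $\{T\leq\tau\}\subset\{Z_{T-}>0\}$ already used in Remark \ref{remark214}(b). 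Combined with the first inclusion, assertion (a) is equivalent to the equality $\{\widetilde Z_T>0\}=\{Z_{T-}>0\}$ on $\{T<+\infty\}$. Now, as in Remark \ref{remark214}(b), write $M=\overline M+\widehat M$ with $\overline M:=\xi I_{\{Z_{T-}>0\}}I_{\Rbrack T,+\infty\Rbrack}$ and $\widehat M:=\xi I_{\{Z_{T-}=0\}}I_{\Rbrack T,+\infty\Rbrack}$; since $\widehat M^\tau\equiv0$ we have $M^\tau=\overline M^\tau$, while $\overline M$ has exactly the form treated in Theorem \ref{main3}. Hence $M^\tau$ satisfies NUPBR$(\mathbb G)$ if and only if the single-jump process $N:=\xi I_{\{\widetilde Z_T>0\}}I_{\Rbrack T,+\infty\Rbrack}$ satisfies NUPBR$(\mathbb F)$, and the whole argument reduces to analysing $N$.

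For (a)$\Rightarrow$(b), fix any $\xi\in L^{\infty}({\cal F}_T)$ with $E(\xi|{\cal F}_{T-})=0$. Under (a) one has $\{\widetilde Z_T>0\}=\{Z_{T-}>0\}$, so $N=\overline M$; the jump of $\overline M$ at $T$ is ${\cal F}_T$-measurable with ${\cal F}_{T-}$-conditional mean $I_{\{Z_{T-}>0\}}E(\xi|{\cal F}_{T-})=0$, hence $\overline M$ is a bounded $\mathbb F$-martingale and a fortiori satisfies NUPBR$(\mathbb F)$ (take $Y\equiv1$, $\theta\equiv1$ in Proposition \ref{charaterisationofNUPBRloc}). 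Therefore $N$ satisfies NUPBR$(\mathbb F)$, and Theorem \ref{main3} gives that $M^\tau=\overline M^\tau$ satisfies NUPBR$(\mathbb G)$. Since $\xi$ was arbitrary, (b) follows.

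For (b)$\Rightarrow$(a) I argue by contraposition. Assume $P(A)>0$, where $A:=\{T<+\infty\}\cap\{\widetilde Z_T=0\}\cap\{Z_{T-}>0\}$, and take $\xi:=(I_{\{\widetilde Z_T>0\}}-P(\widetilde Z_T>0|{\cal F}_{T-}))I_{\{T<+\infty\}}$, which lies in $L^{\infty}({\cal F}_T)$ and satisfies $E(\xi|{\cal F}_{T-})=0$. On $\{Z_{T-}=0\}$ we have $\widetilde Z_T=0$ a.s. and $P(\widetilde Z_T>0|{\cal F}_{T-})=0$, so $\xi$ vanishes there and $\overline M=M$; moreover the jump of $N$ at $T$ equals $\xi I_{\{\widetilde Z_T>0\}}=I_{\{\widetilde Z_T>0\}}P(\widetilde Z_T=0|{\cal F}_{T-})\geq0$, so $N$ is an increasing process with $N_0=0$. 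It is not identically zero: on the ${\cal F}_{T-}$-measurable set $A^*:=\{T<+\infty\}\cap\{Z_{T-}>0\}$ the identity forces $P(\widetilde Z_T>0|{\cal F}_{T-})>0$, and from $P(A)=E(I_{A^*}P(\widetilde Z_T=0|{\cal F}_{T-}))>0$ the set $G:=A^*\cap\{P(\widetilde Z_T=0|{\cal F}_{T-})>0\}$ is non-null, whence $P(G\cap\{\widetilde Z_T>0\})=E(I_G P(\widetilde Z_T>0|{\cal F}_{T-}))>0$ and $N_t>0$ on $G\cap\{\widetilde Z_T>0\}\cap\{T\leq t\}$ for $t$ large. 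A non-constant increasing process starting at $0$ cannot satisfy NUPBR — for each $n$ the process $nN$ is $\geq0\geq-1$ and, choosing $t<+\infty$ with $P(N_t>0)>0$, the family $\{(nN)_t:n\geq1\}$ fails to be bounded in probability — so $N$ violates NUPBR$(\mathbb F)$, and hence $M^\tau=\overline M^\tau$ violates NUPBR$(\mathbb G)$ by Theorem \ref{main3}, contradicting (b).

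The only genuinely delicate ingredient is the classical identity $Z_{T-}=E(\widetilde Z_T|{\cal F}_{T-})$ for a predictable $T$ — equivalently the inclusion $\{Z_{T-}=0\}\subset\{\widetilde Z_T=0\}$ together with $\Gamma_0(T)=\{Z_{T-}>0\}$; everything else is routine bookkeeping with ${\cal F}_{T-}$-measurable sets and with the single-jump structure, so I expect pinning down (or precisely citing) this relation to be the part that needs the most care.
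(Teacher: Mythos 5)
Your proposal is correct. The direction (a)$\Rightarrow$(b) is essentially the paper's argument: reduce to $\overline M:=\xi I_{\{Z_{T-}>0\}}I_{\Rbrack T,+\infty\Rbrack}$ as in Remark \ref{remark214}(b), observe that under (a) the neutralized process coincides with $\overline M$ and is a bounded $\mathbb F$-martingale, and invoke Theorem \ref{main3}. For (b)$\Rightarrow$(a) you use the same witness as the paper (your $\xi$ is the negative of theirs), but you close the argument differently: the paper never re-invokes Theorem \ref{main3} in this direction — it notes that, because $\{T\leq\tau\}\subset\{\widetilde Z_T>0\}$, the stopped process $M^{\tau}=-P(\widetilde Z_T=0|{\cal F}_{T-})I_{\{T\leq\tau\}}I_{\Rbrack T,+\infty\Rbrack}$ is $\mathbb G$-predictable with finite variation, so Lemma \ref{NUPBRforPredictableProcesses} forces it to vanish, which translates into (\ref{equation1111}) after taking expectations. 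You instead transfer back to $\mathbb F$ via the equivalence (a)$\Leftrightarrow$(b) of Theorem \ref{main3} and show by hand that the associated process $N=\xi I_{\{\widetilde Z_T>0\}}I_{\Rbrack T,+\infty\Rbrack}$ is nondecreasing, nonconstant (your computation with $A^*$, $G$ and the identity $Z_{T-}=E(\widetilde Z_T|{\cal F}_{T-})$ is sound, and this identity is exactly what the paper uses in Step 1 of the proof of Theorem \ref{main3}, where it also yields $\Gamma_0(T)=\{Z_{T-}>0\}$), hence fails NUPBR$(\mathbb F)$ by the scaling argument $\{(nN)_t\}_{n\geq 1}$. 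Both mechanisms are valid; the paper's route is slightly leaner since it only needs the elementary Lemma \ref{NUPBRforPredictableProcesses} under $\mathbb G$, whereas yours leans on the full equivalence in Theorem \ref{main3} plus an explicit boundedness-in-probability argument on the $\mathbb F$-side, at the price of a somewhat longer bookkeeping with ${\cal F}_{T-}$-measurable sets.
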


\begin{proof} We start by  proving ${\rm{(a)}} \Rightarrow {\rm{(b)}} $.
Suppose that (\ref{equation1111}) holds. Then, due to Remark
\ref{remark214}--(b), we can restrict our attention to the case where
  $M:=\xi I_{\{ Z_{T-}>0\}}I_{\Rbrack T,+\infty\Rbrack}$ with $\xi\in L^{\infty}({\cal F}_T)$ and $E(\xi|{\cal F}_{T-})=0$. Since assertion (a) is equivalent to $\Rbrack T\Lbrack\cap\{\widetilde Z=0\ \&\ Z_{-}>0\}=\emptyset$, we deduce that
  $$\widetilde M:=\xi I_{\{ {\widetilde Z}_{T}>0\}}I_{\{ Z_{T-}>0\}}I_{\Rbrack T,+\infty\Rbrack}=M\ \ \ \ \ \mbox{is an $\mathbb F$-martingale}.$$
  Therefore, a direct application of Theorem \ref{main3} (to $M$)  allows us to conclude that $M^{\tau}$ satisfies the NUPBR$(\mathbb G)$.
  This ends the proof of (a)$\Rightarrow $ (b). To prove the reverse implication, we suppose that assertion (b) holds and consider
$$
M:=\xi I_{\Rbrack T,+\infty\Rbrack},\ \ \ \ \mbox{where}\ \ \xi:=\left(I_{\{ \widetilde Z_T=0\}}-P( \widetilde Z_T=0|{\cal F}_{T-})\right)I_{\{T<+\infty\}}.$$
 Since $\{T\leq \tau\}\subset\{\widetilde Z_T>0\}\subset\{Z_{T-}>0\}$, then we get
$$
M^{\tau}=-P( \widetilde Z_T=0|{\cal F}_{T-})I_{\{ T\leq\tau\}}I_{\Rbrack T,+\infty\Rbrack},$$
and this process is $\mathbb G$-predictable. Therefore, $M^{\tau}$ satisfies NUPBR$(\mathbb G)$ if and only if it is a constant process equal to $M_0=0$ (see Lemma \ref{NUPBRforPredictableProcesses}). This is equivalent to
$$
0=E\Bigl[P( \widetilde Z_T=0|{\cal F}_{T-})I_{\{ T\leq\tau\}}I_{\Rbrack T,+\infty\Rbrack}\Bigr]=E\left(Z_{T-}I_{\{ \widetilde Z_T=0\ \&\ T<+\infty\}}\right).$$
It is obvious that this equality is equivalent to (\ref{equation1111}), and assertion (a) follows. This ends the proof of the theorem.
\end{proof}

\noindent The next theorem is an extension of Theorem
\ref{main3} to  the case where there are countable many arbitrary
predictable jumps, and constitutes our first main result for the general thin semimartingales with predictable jumps only.

\begin{theorem}\label{main4}
Let $S$ be a thin process with predictable jump times only.
Then, the following assertions are equivalent.\\
{\rm{(a)}}  The process $S^{\tau}$ satisfies the NUPBR$(\mathbb G)$.\\
{\rm{(b)}}  For any $\delta>0$, there exists a positive $\mathbb
F$-local martingale, $Y$, such that $\ ^{p,\mathbb F}\left(Y\vert \Delta S\vert I_{\{ \widetilde Z>0\}}\right)<+\infty$ P-a.s. on $\{Z_{-}\geq \delta\}$ and
\begin{equation}\label{mainassumptionbeforetau}
 ^{p,\mathbb F}\left(Y{\Delta S} I_{\{ \widetilde Z>0\}}\right)I_{\{ Z_{-}\geq\delta\}}=0.\end{equation}
 {\rm{(c)}} For any $\delta$, the process
\begin{equation}\label{Szero}
S^{(0)}:=\sum \Delta S I_{\{\widetilde Z>0\ \&\ Z_{-}\geq\delta\}}=I_{\{Z_{-}\geq\delta\}}\cdot S-\sum \Delta S I_{\{ \widetilde Z=0\ \&\ Z_{-}\geq\delta\}},\end{equation}
satisfies the NUPBR$(\mathbb F)$.
\end{theorem}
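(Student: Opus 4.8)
The plan is to prove the chain of equivalences $(a)\Leftrightarrow(b)\Leftrightarrow(c)$, using the single-jump result Theorem \ref{main3} together with the localization tool of Proposition \ref{NUPBRLocalization} and the decomposition Lemma \ref{NUPBRdecomposed}. First I would exhaust the predictable jump times of $S$ by a sequence $(T_n)_{n\geq 1}$ of $\mathbb F$-predictable stopping times with disjoint graphs, so that $S=\sum_n \Delta S_{T_n} I_{\Rbrack T_n,+\infty\Rbrack}$, and reduce each single-jump piece to the form covered by Theorem \ref{main3} and Remark \ref{remark214}--(b) (i.e.\ restrict to $\{Z_{T_n-}>0\}$, the complement being trivially a $\mathbb G$-martingale after stopping). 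The role of the parameter $\delta>0$ is precisely to make the localization work: on $\{Z_->0\}$ one does \emph{not} in general have a uniform bound, but on each set $\{Z_-\geq\delta\}$ one does, and $\{Z_->0\}=\bigcup_{k}\{Z_-\geq 1/k\}$ up to an evanescent set; since $Z_-$ is $\mathbb F$-predictable, the sets $\{Z_-\geq\delta\}$ are predictable and compatible with the $\sigma$-martingale-density machinery.

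For the implication $(a)\Rightarrow(b)$: assuming $S^\tau$ satisfies NUPBR$(\mathbb G)$, stopping at each $T_n$ (using that $\mathbb G$-stopping preserves NUPBR and that, by Proposition \ref{charaterisationofNUPBRloc}, NUPBR is equivalent to existence of a $\sigma$-martingale density) and then applying Theorem \ref{main3} to the single-jump process $\Delta S_{T_n}I_{\{Z_{T_n-}>0\}}I_{\Rbrack T_n,+\infty\Rbrack}$, I obtain via characterization (b) of Theorem \ref{main3} that $\widetilde S_n := I_{\{\widetilde Z>0\}}\centerdot(\Delta S_{T_n}I_{\Rbrack T_n,+\infty\Rbrack})$ satisfies NUPBR$(\mathbb F)$. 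Summing these, or rather invoking Proposition \ref{NUPBRLocalization} and Lemma \ref{NUPBRdecomposed} to pass from the finitely/countably many single-jump pieces back to $I_{\{\widetilde Z>0\}}\centerdot S$, I get that $I_{\{\widetilde Z>0\}}\centerdot S$ satisfies NUPBR$(\mathbb F)$; by Proposition \ref{charaterisationofNUPBRloc} this furnishes a positive $\mathbb F$-local martingale $Y$ and $\mathbb F$-predictable $\theta$ with $0<\theta\leq1$ such that $Y\,\theta\centerdot(I_{\{\widetilde Z>0\}}\centerdot S)$ is a local martingale, whence Lemma \ref{LY} gives $^{p,\mathbb F}(Y|\Delta S|I_{\{\widetilde Z>0\}})<\infty$ and $^{p,\mathbb F}(Y\Delta S I_{\{\widetilde Z>0\}})=0$ on all of $\{\widetilde Z>0\}$, in particular on $\{Z_-\geq\delta\}$ for every $\delta>0$ (here $\theta$ drops out because predictable projections of $\theta\,(\cdot)$ equal $\theta\cdot{}^{p,\mathbb F}(\cdot)$ and $0<\theta\leq 1$). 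The equivalence $(b)\Leftrightarrow(c)$ is essentially Proposition \ref{charaterisationofNUPBRloc} applied to $S^{(0)}$ on each stochastic interval where $Z_-\geq\delta$: condition \eqref{mainassumptionbeforetau} is exactly the statement that $Y$ is a $\sigma$-martingale density for $I_{\{\widetilde Z>0\,\&\,Z_-\geq\delta\}}\centerdot S=S^{(0)}$ once one checks, using $\{\widetilde Z=0\}\cap\{Z_->0\}$ has a $\mathbb G$-evanescent (indeed $\mathbb F$-negligible after the appropriate projection) intersection with the support, that the two expressions for $S^{(0)}$ in \eqref{Szero} agree; then NUPBR$(\mathbb F)$ of $S^{(0)}$ for all $\delta$ and localization along $\{Z_-\geq1/k\}$ give back (b), and conversely.

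For $(c)\Rightarrow(a)$ (or $(b)\Rightarrow(a)$) I would run the construction in reverse: from NUPBR$(\mathbb F)$ of $S^{(0)}$ (equivalently from the $Y$ in (b)) and Theorem \ref{main3} --- specifically the implication (b)$\Rightarrow$(a) there, applied jumpwise to $\Delta S_{T_n}I_{\{Z_{T_n-}>0\}}I_{\Rbrack T_n,+\infty\Rbrack}$ --- each stopped single-jump piece $(\Delta S_{T_n}I_{\Rbrack T_n,+\infty\Rbrack})^\tau$ satisfies NUPBR$(\mathbb G)$; I then need to glue these countably many $\mathbb G$-$\sigma$-martingale densities into a single one for $S^\tau$. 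This gluing is the main obstacle: unlike the finite case it is not immediate that the product (or stochastic-exponential sum, as in the proof of Lemma \ref{NUPBRdecomposed}) of infinitely many local-martingale deflators converges to a well-defined positive $\mathbb G$-local martingale. I expect to handle it exactly by the localization Proposition \ref{NUPBRLocalization}: work on $\{Z_-\geq\delta\}$ where only finitely many $T_n$ "matter" in a controlled sense is false in general, so instead I would stop $S$ along a fast-increasing sequence of $\mathbb G$-stopping times (e.g.\ the successive predictable jump times reordered, or $R_k=\inf\{t:\sum_{n\leq N(t)}(\cdots)\geq k\}$) on each of which $S^\tau$ reduces to finitely many jumps, apply the finite gluing of Lemma \ref{NUPBRdecomposed} there, and conclude NUPBR$(\mathbb G)$ for $S^\tau$ by Proposition \ref{NUPBRLocalization}(a)$\Rightarrow$(b). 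The delicate points will be (i) verifying the integrability condition $^{p,\mathbb F}(Y|\Delta S|I_{\{\widetilde Z>0\}})<\infty$ on $\{Z_-\geq\delta\}$ is genuinely needed to make each $\mathbb G$-deflator a true (not merely formal $\sigma$-) martingale after the change of filtration, and (ii) tracking how the $\mathbb F$-deflator $Y$ transfers to $\mathbb G$ --- which is where the second subsection's explicit construction of $\mathbb G$-deflators from $\mathbb F$-deflators (referenced in the remark after Theorem \ref{main3}) is invoked, typically via $Y\mapsto \widehat Y := (Y/Z_-)I_{\Rbrack0,\tau\Lbrack}+(\cdots)$ type formulas involving $m$ and $\widetilde Z$.
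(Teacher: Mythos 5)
There is a genuine gap, and it sits at the heart of both nontrivial implications: you treat NUPBR as if it were countably stable over the single-jump pieces $\Delta S_{T_n}I_{\Rbrack T_n,+\infty\Rbrack}$. Applying Theorem \ref{main3} jumpwise gives you, for each $n$, \emph{some} deflator for that one piece, but to conclude NUPBR for the sum you must merge countably many deflators into a single positive local martingale, and neither of the tools you invoke can do this: Lemma \ref{NUPBRdecomposed} combines exactly two pieces living on complementary sets (its proof does not pass to an infinite product of exponentials without a convergence argument), and Proposition \ref{NUPBRLocalization} localizes in \emph{time} along stopping times increasing to $+\infty$, whereas here the predictable times $T_n$ are not ordered and may accumulate, so there is no sequence of stopping times on which $S$ (or $S^\tau$) has only finitely many jumps — your proposed $R_k$ in the direction (c)$\Rightarrow$(a) does not exist in general, which is precisely the point of this paper relative to the ordered-jump setting of the discrete-time reference. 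The same flaw makes your (a)$\Rightarrow$(b) step ``summing these \dots I get that $I_{\{\widetilde Z>0\}}\centerdot S$ satisfies NUPBR$(\mathbb F)$'' unjustified; note also that the theorem deliberately claims only the $\delta$-truncated process $S^{(0)}$, not $I_{\{\widetilde Z>0\}}\centerdot S$, which is a hint that a jumpwise-plus-summation argument cannot give the stated conclusion.

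The paper's proof avoids gluing altogether, in both directions. For (a)$\Rightarrow$(b) it takes a $\mathbb G$-$\sigma$-martingale density for $I_{\{Z_-\geq\delta\}}\is S^\tau$, writes it through its Jacod parameters with respect to the random measure $\mu^{\mathbb G}_b$ (Theorem \ref{theosigmadensityiff}), projects the $\widetilde{\cal P}(\mathbb G)$-functional $f^{\mathbb G}$ to an $\widetilde{\cal P}(\mathbb F)$-functional $f$ via Lemma \ref{lemma:predsetFG}, and then builds a \emph{single} $\mathbb F$-deflator $Y^{(0)}={\cal E}(N^{(0)})$ with $N^{(0)}=\frac{g-1}{1-a^0+\widehat g}\star(\mu_0-\nu_0)$; the real work is the integrability estimate $\sqrt{(g-1)^2\star\mu_0}\in{\cal A}^+_{loc}(\mathbb F)$, proved with the bound $Z_-+f_m\leq h_0$ and Proposition \ref{prop:alocundergf}, none of which appears in your outline. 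For (b)$\Rightarrow$(a) the paper does use the explicit-deflator subsection you gesture at in point (ii), but in a decisive way you did not: Theorem \ref{generalThinDeflators} provides one universal $\mathbb G$-deflator $\widetilde L^{(b)}$ (built from $\widehat m$, $K^{\mathbb G}$, $V^{\mathbb G}$, not a $Y/Z_-$-type transfer of $Y$) that works simultaneously for \emph{all} predictable jumps of any thin martingale satisfying the projection condition, applied under the measures $Q_n$ coming from localizing $Y$; the passage from $\{Z_-\geq\delta\}$ to the whole of $\Rbrack 0,\tau\Lbrack$ is then handled by the $\mathbb G$-local boundedness of $Z_-^{-1}I_{\Rbrack 0,\tau\Lbrack}$ and Proposition \ref{NUPBRLocalization}. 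Without either the characteristics-based construction or the universal deflator, your jumpwise strategy cannot be completed.
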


\noindent  The proof of this theorem is technically involved, especially the proof of (a)$\Longrightarrow$(c), and thus it is postponed to
Subsection \ref{proofmain3}.

\begin{remark}\label{remarkformmain4} It is important to notice that, in Theorem \ref{main4}, we did not assume any arbitrage condition on $S$. Therefore, as consequence, we obtain the following. Suppose that $S$ is a thin process --with predictable jumps
only-- satisfying NUPBR$(\mathbb F)$ and
$$\{\widetilde Z=0\ \&\ Z_{-}>0\}\cap\{\Delta S\not=0\}=\emptyset.$$
  Then, S$^{\tau}$ satisfies NUPBR$(\mathbb G)$. This follows immediately from Theorem \ref{main4} by using $Y\in {\cal L}(S,\mathbb F)$ and Lemma \ref{LY}.
\end{remark}

\noindent The following extends Proposition \ref{corollaryofmain3} to the case of countably many jumps that might not be ordered in any way.

\begin{theorem}\label{XtauArbitrary}
The following assertions are equivalent.\\
{\rm{(a)}} The set $\{{\widetilde Z}=0>Z_{-}\}$ is totally inaccessible.\\
{\rm{(b)}} $X^{\tau}$ satisfies the NUPBR$(\mathbb G)$ for any thin process $X$ with predictable jumps satisfying NUPBR$(\mathbb F)$.
\end{theorem}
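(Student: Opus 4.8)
The plan is to deduce both implications from Theorem \ref{main4} together with the structural facts that were already set up for single predictable jump times (Theorem \ref{main3} and Proposition \ref{corollaryofmain3}). First I would prove \textrm{(a)}$\Rightarrow$\textrm{(b)}. Fix a thin process $X$ with predictable jumps that satisfies NUPBR$(\mathbb F)$; by Proposition \ref{charaterisationofNUPBRloc} pick $Y\in{\cal L}(\mathbb F,X)$, so by Lemma \ref{LY} we have $^{p,\mathbb F}(Y|\Delta X|)<+\infty$ and $^{p,\mathbb F}(Y\Delta X)=0$. The goal is to verify condition \textrm{(b)} of Theorem \ref{main4}: for each $\delta>0$ we must produce a positive $\mathbb F$-local martingale whose predictable projection kills $Y\Delta X I_{\{\widetilde Z>0\}}$ on $\{Z_-\ge\delta\}$. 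The point is that on the complement of the set $A:=\{\widetilde Z=0>Z_-\}$ one has, on $\{Z_-\ge\delta\}$, the identity $\Delta X I_{\{\widetilde Z>0\}}=\Delta X$ wherever $Z_->0$, so $^{p,\mathbb F}(Y\Delta X I_{\{\widetilde Z>0\}})I_{\{Z_-\ge\delta\}}$ differs from $^{p,\mathbb F}(Y\Delta X)I_{\{Z_-\ge\delta\}}=0$ only through the contribution carried by $A$. Since $A$ is totally inaccessible by hypothesis while $X$ has predictable jumps (so $\{\Delta X\ne0\}$ is a thin set supported on predictable times, hence $A\cap\{\Delta X\ne 0\}$ is both predictable and totally inaccessible, thus evanescent), that contribution vanishes. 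Hence the same $Y$ works, and Theorem \ref{main4} gives NUPBR$(\mathbb G)$ for $X^\tau$.

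For the converse \textrm{(b)}$\Rightarrow$\textrm{(a)}, I would argue by contraposition, exactly mirroring the second half of the proof of Proposition \ref{corollaryofmain3}. Suppose $A=\{\widetilde Z=0>Z_-\}$ is not totally inaccessible; then its accessible part is non-evanescent, so there is an $\mathbb F$-predictable stopping time $T$ whose graph meets $A$ on a set of positive probability, i.e. $P(\widetilde Z_T=0,\ Z_{T-}>0,\ T<\infty)>0$. Following Proposition \ref{corollaryofmain3}, set $\xi:=\bigl(I_{\{\widetilde Z_T=0\}}-P(\widetilde Z_T=0\mid{\cal F}_{T-})\bigr)I_{\{T<\infty\}}$ and $X:=\xi I_{\Rbrack T,+\infty\Rbrack}$. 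Then $X$ is a bounded single-jump $\mathbb F$-martingale, hence satisfies NUPBR$(\mathbb F)$; but using $\{T\le\tau\}\subset\{\widetilde Z_T>0\}$ one computes $X^\tau=-P(\widetilde Z_T=0\mid{\cal F}_{T-})I_{\{T\le\tau\}}I_{\Rbrack T,+\infty\Rbrack}$, which is $\mathbb G$-predictable with finite variation. By Lemma \ref{NUPBRforPredictableProcesses} it satisfies NUPBR$(\mathbb G)$ iff it is constantly $0$, and taking expectations this is equivalent to $E\bigl(Z_{T-}I_{\{\widetilde Z_T=0,\ T<\infty\}}\bigr)=0$, i.e. to $\{\widetilde Z_T=0\}\cap\{T<\infty\}\subset\{Z_{T-}=0\}$ up to a null set — contradicting the choice of $T$. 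So NUPBR$(\mathbb G)$ fails for this $X^\tau$, which contradicts \textrm{(b)}.

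The main obstacle I anticipate is the first implication, and specifically the bookkeeping that lets one reuse a single deflator $Y$ for all $\delta$ simultaneously and across the countably many (possibly non-ordered) predictable jump times: one must make precise that "$X$ has predictable jumps" forces $\{\Delta X\ne0\}$ into a countable union of predictable graphs, so that intersecting with the totally inaccessible set $A$ produces an evanescent set and the correction term in $^{p,\mathbb F}(Y\Delta X I_{\{\widetilde Z>0\}})I_{\{Z_-\ge\delta\}}$ genuinely disappears. This is where the hypothesis that $X$ is \emph{thin with predictable jumps} (rather than an arbitrary semimartingale) is essential, and where one invokes the decomposition machinery around \eqref{Sdecomposition} and Lemma \ref{NUPBRdecomposed} to stay within the class to which Theorem \ref{main4} applies. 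The identity $\{\widetilde Z=0>Z_-\}=\{\widetilde Z=0\ \&\ Z_->0\}$ (using $Z\ge\widetilde Z\cdot$ appropriate inequalities, so $\widetilde Z=0$ already forces $Z=0\le Z_-$) should be recorded at the outset to align the notation of \textrm{(a)} with the set $\{\widetilde Z=0\ \&\ Z_-\ge\delta\}$ appearing in \eqref{Szero}.
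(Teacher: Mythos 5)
Your plan is correct and follows essentially the same route as the paper: the implication (a)$\Rightarrow$(b) is exactly the content of Remark \ref{remarkformmain4} applied to $X$ (a deflator $Y\in{\cal L}(\mathbb F,X)$ works uniformly in $\delta$ once $\{\widetilde Z=0<Z_-\}\cap\{\Delta X\ne0\}$ is seen to be evanescent), and (b)$\Rightarrow$(a) is the paper's argument via Proposition \ref{corollaryofmain3} together with the decomposition of the thin set $\{\widetilde Z=0<Z_-\}$ into accessible and totally inaccessible parts, which you simply recast as a contraposition. One small caveat in phrasing: $A\cap\{\Delta X\ne0\}$ is not literally ``predictable and totally inaccessible''; rather it is contained in the countable union $\bigcup_n\Rbrack T_n\Lbrack$ of predictable graphs and in the totally inaccessible set $A$, and since $A\cap\Rbrack T_n\Lbrack$ is evanescent for each $n$ by total inaccessibility, the countable union is evanescent — the conclusion is right, but it is the intersection with each predictable graph that does the work.
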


\begin{proof} The proof of the theorem will be achieved in two parts, namely part 1) and part 2) where we prove (b)$\Longrightarrow$(a) and (a)$\Longrightarrow$(b) respectively.\\
{\bf 1)} Suppose that assertion (b) holds. Then, thanks to Proposition \ref{corollaryofmain3}, we deduce that for any  $\mathbb F$-predictable stopping time $T$,
\begin{equation}\label{Equa400}
\Rbrack T\Lbrack\cap \{\widetilde Z=0<Z_{-}\}=\emptyset\end{equation}
on the one hand. On the other hand, since $\{\widetilde Z=0<Z_{-}\}$ is thin, there exists a sequence of $\mathbb F$-stopping times $(\sigma_k)_{k\geq 1}$ with disjoint graphs such that
\begin{equation}\label{equa410}
\{\widetilde Z=0<Z_{-}\}=\bigcup_{k=1}^{+\infty}\Rbrack \sigma_k\Lbrack.\end{equation}
Recall that, for each $\sigma_k$, there exist two $\mathbb F$-stopping times ($\sigma_k^i$ and $\sigma_k^a$ that are totally inaccessible and accessible respectively) and a sequence of $\mathbb F$-predictable stopping times $(T_l^{(k)})_{l\geq 1}$ such that
$$\Rbrack \sigma_k\Lbrack=\Rbrack \sigma_k^i\Lbrack\cup \Rbrack \sigma_k^a\Lbrack,\ \ \ \ \Rbrack \sigma_k^a\Lbrack\subset \bigcup_{l=1}^{+\infty}\Rbrack T_l^{(k)}\Lbrack.$$
Thus, by combining these with $\displaystyle\left(\bigcup_{k=1}^{+\infty} \Rbrack \sigma_k^i\Lbrack\right)\cap\left(\bigcup_{k=1,l=1}^{+\infty}\Rbrack T_l^{(k)}\Lbrack\right)=\emptyset$, (\ref{equa410}) and (\ref{Equa400}), we derive
$$\bigcup_{k=1}^{+\infty} \Rbrack \sigma_k^a\Lbrack=\left(\bigcup_{k=1,l=1}^{+\infty}\Rbrack T_l^{(k)}\Lbrack\right)\cap \{\widetilde Z=0<Z_{-}\}=\emptyset.$$
This proves that $\{\widetilde Z=0<Z_{-}\}$ is a totally inaccessible set and the proof of (b)$\Longrightarrow$(a) is completed.\\
{\bf 2)} To prove the reverse sense, we assume that assertion (a) holds, and consider $X=\sum \xi_n I_{\Rbrack T_n,+\infty\Rbrack}$ satisfying NUPBR$(\mathbb F)$, where $T_n$ is an $\mathbb F$-predictable stopping time and $\xi_n$ is a bounded ${\cal F}_{T_n}$-measurable random variable. Since $\{\Delta X\not=0\}=\displaystyle\bigcup_{n=1}^{+\infty}\Rbrack T_n\Lbrack$ is predictable, we get $\{\widetilde Z=0<Z_{-}\}\cap\{\Delta X\not=0\}=\emptyset$, and hence, from Remark \ref{remarkformmain4}, $X^{\tau}$ satisfies the NUPBR$(\mathbb G)$. This ends the proof of the theorem.\end{proof}

\noindent The complete general result, in this spirit of describing the model for $\tau$ that preserves the NUPBR after stopping with $\tau$, is the following.

\begin{theorem}\label{XtauArbitrary1}
The following assertions are equivalent.\\
{\rm{(a)}} The set $\{{\widetilde Z}=0>Z_{-}\}$ is evanescent.\\
{\rm{(b)}} $X^{\tau}$ satisfies the NUPBR$(\mathbb G)$ for any $X$ satisfying the NUPBR$(\mathbb F)$.
\end{theorem}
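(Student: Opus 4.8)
The plan is to reduce the general claim to the purely thin case already handled in Theorem \ref{XtauArbitrary}, using the canonical decomposition \eqref{Sdecomposition} of a semimartingale into its quasi-left-continuous part $X^{(qc)}$ and its accessible (thin, predictable-jump) part $X^{(a)}$, together with Lemma \ref{NUPBRdecomposed}. The key point is that stopping at $\tau$ interacts with these two parts quite differently: the set $\{\widetilde Z=0>Z_{-}\}$ is thin, hence it can only charge the accessible part of any semimartingale, so the evanescence condition in (a) is exactly what is needed to control the accessible part, while the quasi-left-continuous part is governed by the (strictly weaker) results of \cite{aksamit/choulli/deng/jeanblanc2}.

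First I would prove (a)$\Longrightarrow$(b). Assume $\{\widetilde Z=0>Z_{-}\}$ is evanescent. Let $X$ be any $\mathbb F$-semimartingale satisfying NUPBR$(\mathbb F)$. By Lemma \ref{NUPBRdecomposed}, both $X^{(qc)}$ and $X^{(a)}$ satisfy NUPBR$(\mathbb F)$. For $X^{(qc)}$: since it is $\mathbb F$-quasi-left-continuous, $(X^{(qc)})^{\tau}$ satisfies NUPBR$(\mathbb G)$ by the main result of \cite{aksamit/choulli/deng/jeanblanc2} (the hypothesis there involves only the behaviour of $Z_{-}$, not of $\widetilde Z$, since $\Delta X^{(qc)}$ is carried by a totally inaccessible set disjoint from the thin set $\{\widetilde Z=0<Z_{-}\}$). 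For $X^{(a)}$: it is a thin process with predictable jumps only satisfying NUPBR$(\mathbb F)$; since (a) holds, the set $\{\widetilde Z=0<Z_{-}\}$ is evanescent, hence \emph{a fortiori} totally inaccessible, so Theorem \ref{XtauArbitrary}, assertion (b), applies and $(X^{(a)})^{\tau}$ satisfies NUPBR$(\mathbb G)$. Finally, $X^{\tau}=(X^{(qc)})^{\tau}+(X^{(a)})^{\tau}$, and applying Lemma \ref{NUPBRdecomposed} in the filtration $\mathbb G$ (noting that the $\mathbb F$-decomposition of $X$ remains a valid splitting into a $\mathbb G$-quasi-left-continuous part and a $\mathbb G$-thin part, since $\mathbb F$-predictable stopping times are $\mathbb G$-predictable and $\mathbb F$-quasi-left-continuity is inherited) gives that $X^{\tau}$ satisfies NUPBR$(\mathbb G)$.

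For the converse (b)$\Longrightarrow$(a), I would argue by contraposition: if $\{\widetilde Z=0>Z_{-}\}$ is not evanescent, I need to exhibit an $\mathbb F$-semimartingale $X$ satisfying NUPBR$(\mathbb F)$ whose stopped process $X^{\tau}$ fails NUPBR$(\mathbb G)$. Since the set is thin but non-evanescent, either it has a non-totally-inaccessible (i.e. with a nontrivial accessible) component, or it is totally inaccessible but still non-evanescent. In the first case, Theorem \ref{XtauArbitrary} already supplies a thin process with predictable jumps satisfying NUPBR$(\mathbb F)$ whose $\tau$-stopped version fails NUPBR$(\mathbb G)$, and we are done. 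The remaining case — the set is totally inaccessible and non-evanescent — is the genuinely new obstacle: one must build a \emph{quasi-left-continuous} $\mathbb F$-martingale $X$ with $\{\Delta X\neq 0\}\supset\{\widetilde Z=0>Z_{-}\}$ (up to an evanescent set), in NUPBR$(\mathbb F)$, such that $X^{\tau}$ violates NUPBR$(\mathbb G)$; the natural candidate is a compensated sum of jumps supported on that set, and one then shows, via Lemma \ref{LY} and the structure of $\mathbb G$-local martingales, that the $\mathbb G$-predictable-projection constraint $\,^{p,\mathbb G}(Y\,\Delta X^{\tau})=0$ cannot be met by any positive $\mathbb G$-local martingale $Y$, because on $\{Z_{-}>0,\ \widetilde Z=0\}\cap\Rbrack 0,\tau\Rbrack$ the jump $\Delta X$ has a definite sign forced by the compensation, contradicting NUPBR$(\mathbb G)$. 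I expect this last construction, and the verification that the resulting process lies in NUPBR$(\mathbb F)$ while failing NUPBR$(\mathbb G)$, to be the main difficulty; everything else is bookkeeping built on the lemmas and the two earlier theorems.
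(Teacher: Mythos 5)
Your direction (a)$\Rightarrow$(b) is essentially the paper's own route: split $X=X^{(qc)}+X^{(a)}$ as in (\ref{Sdecomposition}), handle the accessible part by Theorem \ref{XtauArbitrary} (an evanescent set is trivially totally inaccessible), handle the quasi-left-continuous part by the result of \cite{aksamit/choulli/deng/jeanblanc2}, and glue the two $\mathbb G$-deflators along the $\mathbb G$-predictable set $\Gamma_X$ exactly as in the proof of Lemma \ref{NUPBRdecomposed}. Two caveats there: the quasi-left-continuous preservation result is \emph{not} a statement ``about $Z_-$ only'' --- by Proposition 2.22 of \cite{aksamit/choulli/deng/jeanblanc2} it is equivalent to accessibility of $\{\widetilde Z=0<Z_{-}\}$, which under (a) holds trivially, so your conclusion stands but not for the reason you state; and for the gluing you should argue through the predictable set $\Gamma_X$ rather than asserting that $(X^{(qc)})^{\tau}$ is the $\mathbb G$-quasi-left-continuous part of $X^{\tau}$, since the $\mathbb F$- and $\mathbb G$-classifications of jump times need not agree.

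The genuine gap is in (b)$\Rightarrow$(a), precisely in the case you single out: $\{\widetilde Z=0<Z_{-}\}$ totally inaccessible but non-evanescent. There you only describe a hoped-for counterexample and explicitly defer ``the main difficulty'', so the converse is not proved. Moreover your sketch of the mechanism is off: if $U$ carries (part of) that set and $A$ is the continuous $\mathbb F$-compensator of $U$, then for $X:=I_{\{U\le t\}}-A_t$ the jump \emph{never occurs} before $\tau$ (because $\widetilde Z_U=0$ forces $U>\tau$ a.s.), so $X^{\tau}=-A^{\tau}$; the violation of NUPBR$(\mathbb G)$ does not come from a sign constraint on jumps via Lemma \ref{LY}, but from Lemma \ref{NUPBRforPredictableProcesses} applied to the surviving predictable decreasing drift $-A^{\tau}$ --- and for that one must still prove $A^{\tau}\not\equiv 0$, i.e.\ that the compensator charges the stochastic interval up to $\tau$ (using that $Z>0$ strictly before $U$, that $A_U>0$ a.s.\ on $\{U<+\infty\}$, etc.). This verification is exactly the nontrivial content that the paper does not redo: it quotes Proposition 2.22 of \cite{aksamit/choulli/deng/jeanblanc2}, whose ``only if'' half says that if $X^{\tau}$ preserves NUPBR for all quasi-left-continuous $X$ then $\{\widetilde Z=0<Z_{-}\}$ is accessible; combined with Theorem \ref{XtauArbitrary} (which gives total inaccessibility) and the fact that a thin set both accessible and totally inaccessible is evanescent, assertion (a) follows. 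So either invoke that proposition, or carry out and verify the counterexample in full; as written, the hard half of the converse is missing.
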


\begin{proof} The proof follows immediately from the combination of Theorem \ref{XtauArbitrary} and Proposition 2.22 in \cite{aksamit/choulli/deng/jeanblanc2} (where the authors prove that the thin set $\{\widetilde Z=0<Z_{-}\}$ is accessible if and only if assertion (b) above holds for any $\mathbb F$-quasi-left-continuous process $X$) .
\end{proof}

\subsection{Explicit local martingale deflators}\label{proofsOfmainTheorems3}
This section discusses how to construct explicitly $\mathbb G$-local martingale deflators from $\mathbb F$-deflators for a class of processes. This is achieved, for single jump processes and general thin processes afterwards,  by considering $\mathbb F$-neutralized processes.

 \begin{proposition}\label{cruciallemma1} Let $M:=\xi I_{\Rbrack T,+\infty\Rbrack}$ be an $\mathbb F$-martingale, where  $T$ is an $\mathbb F$-predictable stopping time, and $\xi$ is an ${\cal F}_T$-measurable random variable. Then the following assertions are equivalent.\\
{\rm{(a)}} $M$ is an $\mathbb F$-martingale under $Q_T$ given by (\ref{QT}).\\
{\rm{(b)}} On the set $\{ T<+\infty\}$, we have
  \begin{equation}\label{zeroequationbeforetau}
E\left(M_T I_{\{{\widetilde Z}_T=0\}}\big|\ {\cal
F}_{T-}\right)=0,\ \ \ P-a.s.\end{equation}
{\rm{(c)}} $M^{\tau}$
is a $\mathbb G$-martingale under $Q^{\mathbb
G}_T$ given by
\begin{equation}\label{QGbeforetau}
{{dQ^{\mathbb
G}_T}\over{dP}}:={{U^{\mathbb G} (T)}\over{E(U^{\mathbb G} (T)\big|\ {\cal
G}_{T-})}}\ \mbox{where}\ U^{\mathbb
G}(T):=I_{\{T>\tau\}}+I_{\{T\leq\tau\}}{{Z_{T-}}\over{\widetilde
Z_{T}}}. \end{equation}
\end{proposition}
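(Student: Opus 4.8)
The plan is to establish the two equivalences (a)$\Longleftrightarrow$(b) and (b)$\Longleftrightarrow$(c) separately. The common engine is the single-jump martingale criterion: for a filtration $\mathbb H$, a probability $Q$, an $\mathbb H$-predictable stopping time $T$ and a $Q$-integrable $\mathcal H_T$-measurable $\zeta$, the $(Q,\mathbb H)$-dual predictable projection of $\zeta I_{\Rbrack T,+\infty\Rbrack}$ is $E_Q[\zeta\,|\,\mathcal H_{T-}]\,I_{\Rbrack T,+\infty\Rbrack}$, so $\zeta I_{\Rbrack T,+\infty\Rbrack}$ is a $(Q,\mathbb H)$-martingale iff $E_Q[\zeta\,|\,\mathcal H_{T-}]=0$ on $\{T<+\infty\}$; combined with Bayes' formula this turns both (a) and (c) into conditional-expectation identities in $\mathbb F$. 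I would use two classical facts about $\tau$ throughout: since $\widetilde Z=Z_-+\Delta m$ (from $\widetilde Z=Z+\Delta D^{o,\mathbb F}$ and (\ref{processmZ})) and $m$ is a bounded $\mathbb F$-martingale, one has, for every $\mathbb F$-predictable $T$, $E[\widetilde Z_T\,|\,\mathcal F_{T-}]=Z_{T-}$ and $P(T\le\tau\,|\,\mathcal F_{T-})=Z_{T-}$ on $\{T<+\infty\}$, whence $\{T\le\tau\}\cup\{\widetilde Z_T>0\}\subseteq\{Z_{T-}>0\}$ $P$-a.s.; and, for any $P$-integrable $\mathcal F_T$-measurable $W$, the enlargement projection identity
\begin{equation}\label{Gprojformulaproof}
E\big[W\,I_{\{T\le\tau\}}\,\big|\,\mathcal G_{T-}\big]\ =\ I_{\{T\le\tau\}}\,\frac{E\big[W\,\widetilde Z_T\,\big|\,\mathcal F_{T-}\big]}{Z_{T-}}\qquad\mbox{on }\{T<+\infty\},
\end{equation}
which one gets by testing against a generating system of $\mathcal G_{T-}$ (using $\{T\le\tau\}\in\mathcal G_{T-}$ and $E[W I_{\{T\le\tau\}}\,|\,\mathcal F_T]=W\widetilde Z_T$).

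\textbf{(a)$\Longleftrightarrow$(b).} The density $L:=dQ_T/dP$ of (\ref{QT}) is $\mathcal F_T$-measurable with $E[L\,|\,\mathcal F_{T-}]=1$, and $E[\xi\,|\,\mathcal F_{T-}]=0$ on $\{T<+\infty\}$ because $M$ is a $P$-martingale. By the single-jump criterion and Bayes' formula, (a) holds iff $E[\xi L\,|\,\mathcal F_{T-}]=0$ on $\{T<+\infty\}$; splitting on $\Gamma_0(T)$ and its complement and using that $I_{\{\widetilde Z_T>0\}}=0$ $P$-a.s.\ off $\Gamma_0(T)$, this is equivalent to $E[\xi I_{\{\widetilde Z_T>0\}}\,|\,\mathcal F_{T-}]=0$ on $\{T<+\infty\}$; using $E[\xi\,|\,\mathcal F_{T-}]=0$ and $\xi=M_T$ on $\{T<+\infty\}$ once more, this is precisely (\ref{zeroequationbeforetau}).

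\textbf{(b)$\Longleftrightarrow$(c).} The stopped process is $M^{\tau}=\xi I_{\{T\le\tau\}}I_{\Rbrack T,+\infty\Rbrack}$, a single jump at the $\mathbb G$-predictable time $T$ (as $\mathbb F\subseteq\mathbb G$) with $\mathcal G_T$-measurable size; one checks that $U^{\mathbb G}(T)>0$ $P$-a.s.\ (on $\{T\le\tau\}$ one has $\widetilde Z_T>0$ and, by the facts above, $Z_{T-}>0$) and $E[U^{\mathbb G}(T)\,|\,\mathcal G_{T-}]>0$, so $Q^{\mathbb G}_T$ is a genuine probability with $E[dQ^{\mathbb G}_T/dP\,|\,\mathcal G_{T-}]=1$. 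Hence, by the single-jump criterion in $\mathbb G$ under $Q^{\mathbb G}_T$ and Bayes, (c) holds iff $E[\xi I_{\{T\le\tau\}}U^{\mathbb G}(T)\,|\,\mathcal G_{T-}]=0$ on $\{T<+\infty\}$. Since $\xi I_{\{T\le\tau\}}U^{\mathbb G}(T)=\xi Z_{T-}I_{\{T\le\tau\}}/\widetilde Z_T$, a direct application of (\ref{Gprojformulaproof}) (with $W=\xi Z_{T-}I_{\{\widetilde Z_T>0\}}/\widetilde Z_T$, cancelling $Z_{T-}$ on $\{T\le\tau\}\subseteq\{Z_{T-}>0\}$) yields
\[
E\big[\xi I_{\{T\le\tau\}}U^{\mathbb G}(T)\,\big|\,\mathcal G_{T-}\big]\ =\ I_{\{T\le\tau\}}\,E\big[\xi I_{\{\widetilde Z_T>0\}}\,\big|\,\mathcal F_{T-}\big]\qquad\mbox{on }\{T<+\infty\}.
\]
So (c) is equivalent to $I_{\{T\le\tau\}}E[\xi I_{\{\widetilde Z_T>0\}}\,|\,\mathcal F_{T-}]=0$ on $\{T<+\infty\}$, and I would finish by showing this forces $E[\xi I_{\{\widetilde Z_T>0\}}\,|\,\mathcal F_{T-}]=0$ on $\{T<+\infty\}$ — equivalently (b), by the first step — the converse being trivial. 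Indeed $\{\widetilde Z_T>0\}\subseteq\{Z_{T-}>0\}$ makes this conditional expectation vanish on $\{Z_{T-}=0\}$, so the $\mathcal F_{T-}$-set $B$ on which it is nonzero is contained in $\{Z_{T-}>0\}\cap\{T<+\infty\}$; on $B$ the displayed condition gives $I_{\{T\le\tau\}}=0$, i.e.\ $\tau<T$ $P$-a.s.\ on $B$, contradicting $P(T\le\tau\,|\,\mathcal F_{T-})=Z_{T-}>0$ on $B$ unless $P(B)=0$.

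The hardest part will be (b)$\Longleftrightarrow$(c): it relies on the enlargement projection identity (\ref{Gprojformulaproof}) (which I would quote from the appendix or reprove by testing against a generating system of $\mathcal G_{T-}$) and on the final reduction from ``$I_{\{T\le\tau\}}(\cdots)=0$'' to ``$(\cdots)=0$'', where the identities $P(T\le\tau\,|\,\mathcal F_{T-})=Z_{T-}$ and $\{\widetilde Z_T>0\}\subseteq\{Z_{T-}>0\}$ are used essentially. The remaining bookkeeping — integrability of $\xi$ under $Q_T$ and $Q^{\mathbb G}_T$, and the triviality of all three assertions on $\{T=+\infty\}$ — is routine and can be handled by localizing over the $\mathcal F_{T-}$-measurable sets on which $E(\vert\xi\vert\,|\,\mathcal F_{T-})$ is bounded.
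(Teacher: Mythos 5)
Your proposal is correct and follows essentially the same route as the paper: a Bayes-type computation combined with $E(\xi\,|\,{\cal F}_{T-})=0$ for (a)$\Leftrightarrow$(b), and the progressive-enlargement projection formula on $\{T\le\tau\}$ (together with $\{T\le\tau\}\subset\{\widetilde Z_T>0\}\subset\{Z_{T-}>0\}$ and $P(T\le\tau\,|\,{\cal F}_{T-})=Z_{T-}$) for the $\mathbb G$-part. The only cosmetic difference is that you link (b) and (c) directly rather than (a) and (c), and you spell out the final reduction from ``$I_{\{T\le\tau\}}E[\xi I_{\{\widetilde Z_T>0\}}|{\cal F}_{T-}]=0$'' to the unrestricted identity, a step the paper leaves implicit.
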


\begin{proof} The proof will be achieved in two steps where we prove (a)$\Longleftrightarrow$(b) and (a)$\Longleftrightarrow$(c) respectively.\\
{\bf Step 1.} Here, we prove (a)$\Longleftrightarrow$(b). For simplicity we denote by $Q:=Q_T$, where $Q_T$ is
defined in (\ref{QT}), and remark that on $\{ Z_{T-}=0\}$, $Q$
coincides with $P$ and (\ref{zeroequationbeforetau}) holds, due to
$\{Z_{T-}=0\}\subset \{\widetilde Z_{T}=0\}$. Thus, it is enough
to prove (a)$\Longleftrightarrow$(\ref{zeroequationbeforetau}) on the set
$\{T<+\infty\ \&\ Z_{T-}>0\}$. On this set, due to $E(\xi |{\cal
F}_{T-})=0$ (since $M$ is an $\mathbb F$-martingale), we derive

 \begin{eqnarray*} \label{equa30}
 E^{Q}(\xi\big|{\cal F}_{T-})&=&E(\xi I_{\{ \widetilde Z_T>0\}}\big|{\cal F}_{T-})\Bigl(P(\widetilde Z_T>0\big|{\cal F}_{T-})\Bigr)^{-1}\\
 & =&-E(\xi I_{\{ \widetilde Z_T=0\}}\big|{\cal F}_{T-})\left(P(\widetilde Z_T>0\big|{\cal F}_{T-})\right)^{-1}.
\end{eqnarray*}
Therefore, assertion (a) (or equivalently $E^Q(\xi|{\cal F}_{T-})=0$) is equivalent to (\ref{zeroequationbeforetau}). This ends the proof of (a) $\Longleftrightarrow$ (b).\\
{\bf Step 2.} To prove (a)$\Longleftrightarrow$(c), we notice that due to $\{T\leq\tau\}\subset\{\widetilde Z_{T}>0\}\subset
\{Z_{T-}>0\}$, on $\{T\leq\tau\}$ we have

\begin{eqnarray}
  P\left(\widetilde Z_T>0\big|{\cal F}_{T-}\right)E^{Q^\mathbb{G}_T} \left(\xi | \mathcal{G}_{T-}\right) &=& E \left( \frac{Z_{T-}}{\widetilde{Z}_T} \xi I_{\{ T \leq \tau\}} | \mathcal{G}_{T-}\right)=E\left( \xi I_{\{ \widetilde{Z}_T > 0\}} | \mathcal{F}_{T-}\right)   \nonumber \\
  &=&  E^Q\left(\xi | \mathcal{F}_{T-}\right)  P\left(\widetilde Z_T>0\big|{\cal F}_{T-}\right) \nonumber.
\end{eqnarray}


\noindent This equality proves that $M^{\tau}\in {\cal M}(Q^{\mathbb G},\mathbb G)$
  if and only if $M\in {\cal M}(Q,\mathbb F)$, and the proof of (a)$\Longleftrightarrow$(c) is completed. This ends the proof of the theorem.\end{proof}

To generalize this proposition to the case of infinitely many jumps that might not be ordered at all, we need to introduce some notations and recall some facts from \cite{aksamit/choulli/deng/jeanblanc2}. First of all, we refer to \cite{dm2} ( Chapter VIII.2 sections 32-35 pages 356-361) and \cite{Jacod} ( Chapter III.4.b, Definition 3(3.8), pages 106-109) for the optional stochastic integration (see also Definition 3.4 in \cite{aksamit/choulli/deng/jeanblanc2}).
\begin{definition}\label{OIDefinition}
Let $N$ be an $\mathbb H$-local martingale with continuous part $N^c$ and $K$ be an ${\mathbb H}$-optional process. $K$ is said to be integrable with respect to $N$ if $^{p,\mathbb H}(K)$ is $N^c$-integrable, $^{p,\mathbb H}(K\vert \Delta N\vert)<+\infty$ and $$\left(\sum (K\Delta N-\ ^{p,\mathbb H}(K\Delta N)\right)^{1/2}\in {\cal A}^+_{loc}(\mathbb H).$$
\end{definition}
Put
\begin{equation}\label{G-martingaleL(a)}
K^{\mathbb G}:={{Z_{-}^2{\widetilde Z}^{-1}}\over{Z_{-}^2+\Delta\langle m\rangle^{\mathbb F}}}I_{\Lbrack 0,\tau\Lbrack},\ \ \ V^{\mathbb G}:=\sum\ ^{p,\mathbb F}(I_{\{\widetilde Z=0\}})I_{\Lbrack 0,\tau\Lbrack},\end{equation}
and to any $\mathbb F$-local martingale $M$, we associate the $\mathbb G$-local martingale part of $M^{\tau}$ given by
\begin{equation}\label{MG-martingaleprt}
\widehat M:=M^{\tau}-Z_{-}^{-1}I_{\Rbrack 0,\tau\Lbrack}\cdot \langle M,m\rangle^{\mathbb F}.\end{equation}

\noindent Below, we recall some useful results of \cite{aksamit/choulli/deng/jeanblanc2}.

\begin{proposition}\label{ResultoofACDJ1} The following assertions hold.\\
(a) The $\mathbb G$-optional process $K^{\mathbb G}$ is $\widehat m$-integrable in the sense of the above definition. Here $\widehat m:=m^{\tau}-Z_{-}^{-1}I_{\Lbrack 0,\tau\Lbrack}\cdot \langle m\rangle^{\mathbb F}$. Furthermore the resulting integral \\
\begin{equation}\label{GmgLtilde(a)}
{\widetilde L}^{(b)}:={\cal E}\left(-{{K^{\mathbb G}}\over{1-\Delta V^{\mathbb G}}}\odot \widehat m\right),\end{equation}
is a positive (i.e. ${\widetilde L}^{(b)}>0$) $\mathbb G$-local martingale satisfying $[{\widetilde L}^{(b)}, M]\in {\cal A}_{loc}(\mathbb G)$ for any $\mathbb F$-local martingale $M$.\\
(b) $V^{\mathbb G}\in {\cal A}^+_{loc}(\mathbb G)$ and $(1-\Delta V^{\mathbb G})^{-1}$ is $\mathbb G$-locally bounded.
\end{proposition}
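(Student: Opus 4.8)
I would first prove (b), since its two conclusions are exactly what gives meaning to the integrand $K^{\mathbb G}/(1-\Delta V^{\mathbb G})$ in (a), and then establish (a) in four steps: (i) $\widehat m\in{\cal M}_{loc}(\mathbb G)$ (classical enlargement formula); (ii) $K^{\mathbb G}$ is $\widehat m$-integrable in the sense of Definition \ref{OIDefinition}, so that $N^{\mathbb G}:=-\frac{K^{\mathbb G}}{1-\Delta V^{\mathbb G}}\odot\widehat m$ is a well-defined $\mathbb G$-local martingale and hence $\widetilde L^{(b)}={\cal E}(N^{\mathbb G})\in{\cal M}_{loc}(\mathbb G)$ automatically; (iii) $\widetilde L^{(b)}>0$; (iv) $[\widetilde L^{(b)},M]\in{\cal A}_{loc}(\mathbb G)$ for every $\mathbb F$-local martingale $M$. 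The real work lies in step (ii); once it is done, step (iii) reduces to a short telescoping computation that reveals exactly why $K^{\mathbb G}$ carries the denominator $Z_-^2+\Delta\langle m\rangle^{\mathbb F}$ and why the factor $(1-\Delta V^{\mathbb G})^{-1}$ is inserted, and step (iv) follows from standard bracket estimates.

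\medskip
\noindent\textbf{Part (b).} The elementary fact behind both assertions is that $\{\widetilde Z=0\}\cap\Lbrack 0,\tau\Lbrack$ is evanescent: for any $\mathbb F$-stopping time $\sigma$, $E[I_{\{\sigma\leq\tau\}}I_{\{\widetilde Z_\sigma=0\}}]=E[\widetilde Z_\sigma I_{\{\widetilde Z_\sigma=0\}}]=0$, so $\{\sigma\leq\tau\}\subset\{\widetilde Z_\sigma>0\}$ up to a null set; likewise $\{\sigma\leq\tau\}\subset\{Z_{\sigma-}>0\}$. Using in addition $\widetilde Z=m-D_-^{o,\mathbb F}$, which for an $\mathbb F$-predictable stopping time $T$ gives $E[\widetilde Z_T\,|\,{\cal F}_{T-}]=m_{T-}-D_{T-}^{o,\mathbb F}=Z_{T-}$, one obtains that for every $\mathbb F$-predictable stopping time $T$,
$$\Delta V^{\mathbb G}_T={}^{p,\mathbb F}\!\bigl(I_{\{\widetilde Z=0\}}\bigr)_T I_{\{T\leq\tau\}}=P(\widetilde Z_T=0\,|\,{\cal F}_{T-})\,I_{\{T\leq\tau\}}<1\quad\mbox{a.s. on }\{T\leq\tau\}.$$
Moreover, since $Z_->0$ on $\Lbrack 0,\tau\Lbrack$ one has $I_{\{\widetilde Z=0\}}I_{\Lbrack 0,\tau\Lbrack}=I_{\{\widetilde Z=0<Z_-\}}I_{\Lbrack 0,\tau\Lbrack}$, and $\{\widetilde Z=0<Z_-\}$ is thin (indeed, since $Z_t=0$ forces $Z\equiv0$ afterwards, this set lies in the graph of the $\mathbb F$-stopping time $R:=\inf\{t:\,Z_t=0\}$); hence ${}^{p,\mathbb F}(I_{\{\widetilde Z=0\}})I_{\Lbrack 0,\tau\Lbrack}$ is carried by a thin $\mathbb F$-predictable set, so $V^{\mathbb G}$ is a finite-valued pure-jump process, $\mathbb G$-predictable (product of the $\mathbb F$-predictable process $\sum{}^{p,\mathbb F}(I_{\{\widetilde Z=0\}})$ and the left-continuous $\mathbb G$-adapted process $I_{\Lbrack 0,\tau\Lbrack}$), increasing, with $0\leq\Delta V^{\mathbb G}\leq1$ and $\Delta V^{\mathbb G}<1$ on $\Lbrack 0,\tau\Lbrack$ (its support is exhausted by predictable times, at each of which the displayed inequality applies). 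A $\mathbb G$-predictable, increasing, finite-valued process with jumps $\leq1$ is $\mathbb G$-locally integrable (localize along the predictable d\'ebuts $\inf\{t:\,V^{\mathbb G}_t\geq n\}$), so $V^{\mathbb G}\in{\cal A}^+_{loc}(\mathbb G)$; and $(1-\Delta V^{\mathbb G})^{-1}$ is a positive $\mathbb G$-predictable process equal to $1$ at $t=0$ for which $\inf\{t:\,\Delta V^{\mathbb G}_t\geq1-1/n\}\uparrow+\infty$ (because $\Delta V^{\mathbb G}<1$ everywhere), hence it is $\mathbb G$-locally bounded along foretelling sequences of these predictable d\'ebuts. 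This proves (b).

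\medskip
\noindent\textbf{Part (a), steps (i)--(iii).} For (i): $m$ has bounded jumps ($|\Delta m|=|\widetilde Z-Z_-|\leq1$), so it is locally square-integrable, $\langle M,m\rangle^{\mathbb F}$ exists for every $\mathbb F$-local martingale $M$, and the classical Jeulin--Yor decomposition gives that $\widehat M=M^\tau-Z_-^{-1}I_{\Lbrack 0,\tau\Lbrack}\is\langle M,m\rangle^{\mathbb F}$ is a $\mathbb G$-local martingale; in particular $\widehat m\in{\cal M}_{loc}(\mathbb G)$, with $\Delta\widehat m=(\Delta m-Z_-^{-1}\Delta\langle m\rangle^{\mathbb F})I_{\Lbrack 0,\tau\Lbrack}$. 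For (ii): using the classical projection identity ${}^{p,\mathbb G}(UI_{\Lbrack 0,\tau\Lbrack})=Z_-^{-1}\,{}^{p,\mathbb F}(U\widetilde Z)I_{\Lbrack 0,\tau\Lbrack}$ (valid for $\mathbb F$-optional $U$) together with the $\mathbb F$-predictability of $Z_-^2(Z_-^2+\Delta\langle m\rangle^{\mathbb F})^{-1}$, one computes the simple $\mathbb G$-predictable expression ${}^{p,\mathbb G}(K^{\mathbb G})=Z_-(Z_-^2+\Delta\langle m\rangle^{\mathbb F})^{-1}I_{\Lbrack 0,\tau\Lbrack}$; the three requirements of Definition \ref{OIDefinition} are then checked by localizing along the $\mathbb F$-stopping times $R_n:=\inf\{t:\,Z_t\leq1/n\}\wedge\inf\{t:\,\widetilde Z_t\leq1/n\}\wedge\inf\{t:\,\langle m^c\rangle_t\geq n\}\wedge n$, on which $Z_-$ and $\widetilde Z$ stay above $1/n$ on $\Lbrack 0,\tau\Lbrack$ (here $Z_{\tau-}>0$ a.s. is used), and by dominating each $\mathbb G$-object by an $\mathbb F$-object via the projection identities. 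Granting (ii), $\widetilde L^{(b)}={\cal E}(N^{\mathbb G})\in{\cal M}_{loc}(\mathbb G)$, and its positivity amounts to $\Delta N^{\mathbb G}>-1$. Here the computation collapses: with $\widetilde Z=Z_-+\Delta m$ one gets, on $\Lbrack 0,\tau\Lbrack$, that $K^{\mathbb G}\Delta\widehat m=\frac{Z_-^2}{Z_-^2+\Delta\langle m\rangle^{\mathbb F}}-\frac{Z_-}{\widetilde Z}$; projecting (and using $\Delta V^{\mathbb G}={}^{p,\mathbb F}(I_{\{\widetilde Z=0\}})I_{\Lbrack 0,\tau\Lbrack}$) gives ${}^{p,\mathbb G}(K^{\mathbb G}\Delta\widehat m)=\frac{Z_-^2}{Z_-^2+\Delta\langle m\rangle^{\mathbb F}}-1+\Delta V^{\mathbb G}$ on $\Lbrack 0,\tau\Lbrack$; hence, since $(1-\Delta V^{\mathbb G})^{-1}$ is $\mathbb G$-predictable,
$$1+\Delta N^{\mathbb G}=\frac{1-\Delta V^{\mathbb G}-K^{\mathbb G}\Delta\widehat m+{}^{p,\mathbb G}(K^{\mathbb G}\Delta\widehat m)}{1-\Delta V^{\mathbb G}}=\frac{Z_-}{\widetilde Z\,(1-\Delta V^{\mathbb G})}>0\quad\mbox{on }\Lbrack 0,\tau\Lbrack,$$
while $1+\Delta N^{\mathbb G}=1$ off $\Lbrack 0,\tau\Lbrack$; thus $\widetilde L^{(b)}>0$. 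This identity is precisely why the denominator $Z_-^2+\Delta\langle m\rangle^{\mathbb F}$ was built into $K^{\mathbb G}$.

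\medskip
\noindent\textbf{Part (a), step (iv).} Since $\widetilde L^{(b)}={\cal E}(N^{\mathbb G})$, one has $[\widetilde L^{(b)},M]=\widetilde L^{(b)}_-\is[N^{\mathbb G},M]$ with $\widetilde L^{(b)}_-$ $\mathbb G$-locally bounded, so it suffices to show $[N^{\mathbb G},M]\in{\cal A}_{loc}(\mathbb G)$; and as $N^{\mathbb G}$ is stopped at $\tau$, $[N^{\mathbb G},M]=[N^{\mathbb G},M^\tau]$. One writes $[\widehat m,M^\tau]=[m,M]^\tau-\sum\Delta\bigl(Z_-^{-1}I_{\Lbrack 0,\tau\Lbrack}\is\langle m\rangle^{\mathbb F}\bigr)\Delta M$, which belongs to ${\cal A}_{loc}(\mathbb G)$ because $\langle m,M\rangle^{\mathbb F}$ exists, and then, combining this with the $\widehat m$-integrability of $K^{\mathbb G}/(1-\Delta V^{\mathbb G})$ from step (ii), the local boundedness of $(1-\Delta V^{\mathbb G})^{-1}$ from part (b), and a Kunita--Watanabe estimate for the jump part, one concludes $[N^{\mathbb G},M^\tau]\in{\cal A}_{loc}(\mathbb G)$; for a general local martingale $M$ one first reduces, via the standard splitting into a part with locally bounded jumps and a part of locally integrable variation, to these two manageable cases. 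The place where I expect the main difficulty is step (ii): establishing the $\widehat m$-integrability of $K^{\mathbb G}$, and in particular the square-summability requirement of Definition \ref{OIDefinition}, since there one must simultaneously keep $\widetilde Z$ and $Z_-$ bounded below on $\Lbrack 0,\tau\Lbrack$ and control the accumulation of the jumps of $\widehat m$; the positivity and the bracket statement then follow by direct, if lengthy, computation from the identities above.
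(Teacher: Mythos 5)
Note first that the paper itself does not prove Proposition~\ref{ResultoofACDJ1}: the sentence immediately after it reads ``The proof of this proposition can be found in \cite{aksamit/choulli/deng/jeanblanc2} (see Lemma 3.3 and Proposition 3.6).'' So there is no internal proof to compare your attempt against. Your overall architecture---prove (b) first, then establish that $\widehat m\in{\cal M}_{loc}(\mathbb G)$ via the Jeulin--Yor formula, check $\widehat m$-integrability of $K^{\mathbb G}$, compute the jump of $N^{\mathbb G}$ to see positivity, and finally control $[\widetilde L^{(b)},M]$---is the standard and correct route for this type of result, and your key identity in step (iii), $1+\Delta N^{\mathbb G}=Z_-/(\widetilde Z(1-\Delta V^{\mathbb G}))$ on $\Lbrack 0,\tau\Lbrack$, is exactly what the paper records in Remark~\ref{remarkfor Ltilde(b)} (where it derives $U^{\mathbb G}_T=Z_{T-}/\widetilde Z_T=(1-\Delta V^{\mathbb G}_T)\widetilde L^{(b)}_T/\widetilde L^{(b)}_{T-}$), so that computation is reliable.

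There are, however, two concrete gaps. The serious one is the localizing sequence in step (ii): $R_n:=\inf\{t:Z_t\leq 1/n\}\wedge\inf\{t:\widetilde Z_t\leq 1/n\}\wedge\cdots$ is a sequence of $\mathbb F$-stopping times, but $\inf\{t:Z_t\leq 1/n\}$ does \emph{not} increase to $+\infty$; since $Z$ is a nonnegative supermartingale, it increases to $R:=\inf\{t:Z_t=0\}$, which is finite with positive probability, and since $\tau\leq R$ a.s.\ with $\tau=R$ possible, $R_n\wedge\tau$ need not be a valid $\mathbb G$-localizing sequence either. The integrability requirements in Definition~\ref{OIDefinition} are $\mathbb G$-local requirements and must be verified along $\mathbb G$-stopping times increasing to $+\infty$; the standard tool is the $\mathbb G$-local boundedness of $Z_-^{-1}I_{\Lbrack 0,\tau\Lbrack}$, together with a transfer lemma of the form ``$\mathbb G$-local integrability on $\Lbrack 0,\tau\Lbrack$ $\Leftrightarrow$ $\mathbb F$-local integrability on $\{Z_-\geq\delta\}$ for all $\delta>0$'' (precisely what Proposition~\ref{prop:alocundergf} of the Appendix provides). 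Your proof should route through such a lemma rather than attempting a direct $\mathbb F$-localization. The lighter gap is in (b): being ``carried by a thin $\mathbb F$-predictable set'' does not by itself make $V^{\mathbb G}$ finite-valued, since a countable sum of positive jumps can diverge. The finiteness follows because the jumps of $V^{\mathbb G}$ on $\Lbrack 0,\tau\Lbrack$ coincide with the jumps of the $\mathbb F$-dual predictable projection of $I_{\Rbrack R,+\infty\Rbrack}$ restricted to the graph of the accessible part of $R$, a process whose total variation has expectation at most $1$; once this bound on the dual projection is in hand, your local-integrability argument is fine.
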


\noindent The proof of this proposition can be found in \cite{aksamit/choulli/deng/jeanblanc2} (see Lemma 3.3 and Proposition 3.6). The extension of Proposition \ref{cruciallemma1} goes through connecting the random variable $U^{\mathbb G}(T)$ defined in (\ref{QGbeforetau}) to the process ${\widetilde L}^{(a)}$ as follows.

\begin{remark}\label{remarkfor Ltilde(b)} In virtue of the calculation performed in \cite{aksamit/choulli/deng/jeanblanc2} (see equation (B.1) where the authors calculate the jumps of $K^{\mathbb G}\odot \widehat m$), we have
$$
-{{(1-\Delta V^{\mathbb G})\Delta {\widetilde L}^{(b)}}\over{{\widetilde L}^{(b)}_{-}}}=K^{\mathbb G}\Delta{\widehat m}-\ ^{p,\mathbb G}\left(K^{\mathbb G}\Delta{\widehat m}\right)={{\Delta m}\over{\widetilde Z}}I_{\Lbrack 0,\tau\Rbrack}-\Delta V^{\mathbb G}.$$
Thus, for an $\mathbb F$-predictable stopping time $T$, on $\{T\leq\tau\}$ we get
$$U^{\mathbb G}_T={{Z_{T-}}\over{\widetilde Z_T}}=(1-\Delta V^{\mathbb G}_T){{{\widetilde L}^{(b)}_T}\over{{\widetilde L}^{(b)}_{T-}}}.$$
This proves that assertions (a) and (b) of Proposition \ref{cruciallemma1} are equivalent to
\begin{equation}\label{conclusion1}
{\widetilde L}^{(b)}M^{\tau}\ \mbox{is a}\ {\mathbb G}\mbox{-martingale for any single jump $\mathbb F$-martingale}\ M.\end{equation}
\end{remark}

\begin{theorem}\label{generalThinDeflators}
Consider ${\widetilde L}^{(b)}$ defined in (\ref{GmgLtilde(a)}) and let $M$ be a thin $\mathbb F$-martingale satisfying
\begin{equation}\label{Condition1}
^{p,\mathbb F}\left(\Delta M I_{\{\widetilde Z=0<Z_{-}\}}\right)\equiv 0.\end{equation}
Then, ${\widetilde L}^{(b)}M^{\tau}$ is a $\mathbb G$-local martingale.
\end{theorem}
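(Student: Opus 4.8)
The plan is to reduce the general thin martingale case to the single-jump case already settled in Remark~\ref{remarkfor Ltilde(b)} (i.e. to the statement \eqref{conclusion1}), and to do so by decomposing $M$ along the exhausting sequence of its predictable jump times. Write $M=\sum_{n\ge1}\Delta M_{T_n}I_{\Rbrack T_n,+\infty\Rbrack}$ where $(T_n)_{n\ge1}$ is a sequence of $\mathbb F$-predictable stopping times with, without loss of generality, disjoint graphs exhausting $\{\Delta M\neq 0\}$. For each fixed $n$, set $\xi_n:=\Delta M_{T_n}$ and $M^{(n)}:=\xi_n I_{\Rbrack T_n,+\infty\Rbrack}$. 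Since $M$ is a martingale with $\mathbb F$-predictable jump times, $E(\xi_n\mid{\cal F}_{T_n-})=0$ on $\{T_n<\infty\}$, so each $M^{(n)}$ is a single-jump $\mathbb F$-martingale of the form treated in Proposition~\ref{cruciallemma1}. The key point is that hypothesis \eqref{Condition1} localizes to each $T_n$: because the graphs are disjoint, $^{p,\mathbb F}\!\bigl(\Delta M\, I_{\{\widetilde Z=0<Z_-\}}\bigr)\equiv 0$ forces $E\bigl(\xi_n I_{\{\widetilde Z_{T_n}=0\}}\mid{\cal F}_{T_n-}\bigr)I_{\{Z_{T_n-}>0\}}=0$ on $\{T_n<\infty\}$, which is precisely condition \eqref{zeroequationbeforetau} for $M^{(n)}$ (note that on $\{Z_{T_n-}=0\}\subset\{\widetilde Z_{T_n}=0\}$ that identity holds trivially). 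Hence Proposition~\ref{cruciallemma1}(b)$\Rightarrow$(a), together with Remark~\ref{remarkfor Ltilde(b)}, gives that ${\widetilde L}^{(b)}(M^{(n)})^{\tau}$ is a $\mathbb G$-local martingale for every $n$.

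The second step is to assemble these facts into a statement about $M^{\tau}$ itself. Here I would avoid summing the $M^{(n)}$ naively — the series $\sum_n {\widetilde L}^{(b)}(M^{(n)})^\tau$ need not converge in any obvious uniformly integrable sense — and instead argue via the predictable covariation / integration-by-parts identity. Recall $\widehat M=M^\tau-Z_-^{-1}I_{\Rbrack0,\tau\Lbrack}\cdot\langle M,m\rangle^{\mathbb F}$ from \eqref{MG-martingaleprt}; by Proposition~\ref{ResultoofACDJ1}(a), $[{\widetilde L}^{(b)},M]\in{\cal A}_{loc}(\mathbb G)$ for any $\mathbb F$-local martingale $M$, and ${\widetilde L}^{(b)}$ is a positive $\mathbb G$-local martingale. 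By integration by parts, ${\widetilde L}^{(b)}M^\tau$ is a $\mathbb G$-local martingale if and only if a certain $\mathbb G$-predictable finite-variation drift term vanishes, and that drift term is carried entirely by the jumps living on the thin set $\{\Delta M\neq0\}\cap\Rbrack0,\tau\Rbrack=\bigcup_n\Rbrack T_n\Lbrack\cap\Rbrack0,\tau\Rbrack$. Since the drift is a sum over the $T_n$ of terms each of which vanishes by the single-jump computation (using \eqref{zeroequationbeforetau} localized as above and the explicit jump formula $-(1-\Delta V^{\mathbb G}){\widetilde L}^{(b)}_-{}^{-1}\Delta{\widetilde L}^{(b)}=\widetilde Z^{-1}\Delta m\, I_{\Lbrack0,\tau\Rbrack}-\Delta V^{\mathbb G}$ recalled in Remark~\ref{remarkfor Ltilde(b)}), the total drift is zero, and the claim follows. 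Concretely, one checks that on $\{T_n\le\tau\}$ the jump of ${\widetilde L}^{(b)}M^\tau$ has compensator involving $^{p,\mathbb G}({\widetilde L}^{(b)}_-\,\xi_n/\widetilde Z_{T_n}\cdots)$, and a standard conditioning argument transferring $\mathbb G$-predictable projections at $T_n$ to $\mathbb F$-predictable projections at $T_n$ (exactly as in Step~2 of the proof of Proposition~\ref{cruciallemma1}) reduces this to $E(\xi_n I_{\{\widetilde Z_{T_n}=0\}}\mid{\cal F}_{T_n-})$, which is null.

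The main obstacle is the passage from the per-$n$ statements to the statement for the full (possibly infinite) thin martingale $M$, i.e. the localization/convergence issue. One must produce a single sequence of $\mathbb G$-stopping times reducing ${\widetilde L}^{(b)}M^\tau$ simultaneously; the natural candidates are built from the $\mathbb G$-localizing sequence for ${\widetilde L}^{(b)}$ (Proposition~\ref{ResultoofACDJ1}), a localizing sequence for $M$ as an $\mathbb F$-martingale (pulled into $\mathbb G$ via the fact that $\mathbb F$-stopping times are $\mathbb G$-stopping times), and a sequence controlling $Z_-$ away from $0$ of the form $R_\delta:=\inf\{t:Z_{t-}<\delta\}$ as in Theorem~\ref{main4}, so that on $\Rbrack0,\tau\wedge R_\delta\Lbrack$ one has $\widetilde Z\ge\delta$-type lower bounds making $K^{\mathbb G}$ and the relevant projections bounded. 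On each such stochastic interval only finitely many of the delicate estimates are needed at a time and the finite-variation drift is genuinely integrable, so the argument closes. I expect everything else — the integration-by-parts bookkeeping, the $\mathbb G$-to-$\mathbb F$ projection transfer, the verification that \eqref{Condition1} restricts correctly to each $T_n$ — to be routine given Proposition~\ref{cruciallemma1}, Proposition~\ref{ResultoofACDJ1}, and Remark~\ref{remarkfor Ltilde(b)}.
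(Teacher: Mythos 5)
Your reduction to the single--jump case is exactly the right first move and coincides with the paper's: condition (\ref{Condition1}) restricted to each predictable graph $\Rbrack T_n\Lbrack$ gives (\ref{zeroequationbeforetau}) for $M^{(n)}:=\Delta M_{T_n}I_{\Rbrack T_n,+\infty\Rbrack}$, and via Remark \ref{remarkfor Ltilde(b)} this yields $^{p,\mathbb G}\bigl({\widetilde L}^{(b)}\Delta M^{\tau}\bigr)_{T_n}=0$ for every $n$. The genuine gap is in your assembly step. The drift of ${\widetilde L}^{(b)}M^{\tau}$ is carried by the jump sum $\sum {\widetilde L}^{(b)}\Delta M^{\tau}$, and for its $\mathbb G$-compensator to exist (so that ``the drift is a sum over the $T_n$ of vanishing terms'' becomes a theorem rather than a formal identity) one needs this sum to have locally integrable variation. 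Your proposed localization --- a reducing sequence for ${\widetilde L}^{(b)}$, an $\mathbb F$-localizing sequence for $M$, and $R_\delta=\inf\{t:Z_{t-}<\delta\}$ --- does not deliver this: stopping does not reduce to finitely many of the $T_n$ (they may accumulate before any fixed time), a localizing sequence for the martingale $M$ controls $M$ itself but not $\sum_{s\le \cdot}|\Delta M_s|$, which for a thin martingale need not be locally integrable at all (take deterministic $T_n\uparrow 1$ and symmetric jumps $\pm a_n$ with $\sum a_n^2<\infty$, $\sum a_n=\infty$), and a reducing sequence for ${\widetilde L}^{(b)}$ does not bound it. So the claim ``the finite-variation drift is genuinely integrable, so the argument closes'' is unsupported, and as stated the passage from the per-$n$ statements to the full $M$ does not go through.

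The paper closes this hole with a different device, which is the actual content of its proof beyond the single-jump identity: it suffices to exhibit one $\mathbb G$-predictable $\phi$ with $0<\phi\le 1$ such that ${\widetilde L}^{(b)}(\phi\cdot M^{\tau})$ is a $\mathbb G$-local martingale, because then ${\widetilde L}^{(b)}$ is a $\sigma$-martingale density for $M^{\tau}$ and, since $[{\widetilde L}^{(b)},M^{\tau}]\in{\cal A}_{loc}(\mathbb G)$ by Proposition \ref{ResultoofACDJ1}, the Ansel--Stricker results (Proposition 3.3 and Corollary 3.5 of \cite{anselstricker1994}) upgrade this to the stated local-martingale property. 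The explicit choice $\phi=\bigl[1+\ ^{p,\mathbb G}(|\Delta M^{\tau}|)+\ ^{p,\mathbb G}({\widetilde L}^{(b)}|\Delta M^{\tau}|)\bigr]^{-1}\bigl[I_{\Omega\setminus(\cup_n\Rbrack T_n\Lbrack)}+\sum_n 2^{-n}I_{\Rbrack T_n\Lbrack}\bigr]$ forces $\sum {\widetilde L}^{(b)}\phi\,|\Delta M^{\tau}|$ to have integrable variation, after which its compensator $\sum\phi\ ^{p,\mathbb G}({\widetilde L}^{(b)}\Delta M^{\tau})\equiv0$ is computed exactly as in your per-$n$ step. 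An alternative way to repair your route without the weight $\phi$ would be to express the drift through objects whose existence is already secured, namely $\langle M,m\rangle^{\mathbb F}$ and $\langle{\widetilde L}^{(b)},M^{\tau}\rangle^{\mathbb G}$ (the latter via Proposition \ref{ResultoofACDJ1}), rather than through the pathwise jump sum; but some such integrability device is indispensable, and it is precisely what your proposal is missing.
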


\begin{proof} We start by remarking that it is enough to prove that there exists a $\mathbb G$-predictable process $\varphi$ such that $0<\varphi\leq 1$ and ${\widetilde L}^{(b)}(\varphi\cdot M^{\tau})$ is a $\mathbb G$-martingale (local martingale). This means that ${\widetilde L}^{(b)}\in {\cal L}(M^{\tau},\mathbb G)$ (i.e it is a $\sigma$-martingale density for $M^{\tau}$ under $\mathbb G$). This remark that simplifies the proof based on the fact that $[{\widetilde L}^{(b)}, M^{\tau}]$ is locally integrable and Proposition 3.3 and Corollary 3.5 of \cite{anselstricker1994}. Again, thanks to $[{\widetilde L}^{(b)}, M^{\tau}]\in{\cal A}_{loc}(\mathbb G)$, we deduce that $^{p,\mathbb G}\left({\widetilde L}^{(b)}\vert \Delta M^{\tau}\vert\right)<+\infty$, and consider the following $\mathbb G$-predictable process
$$
\phi:=\left[1+^{p,\mathbb G}\left(\vert \Delta M^{\tau}\vert\right)+\ ^{p,\mathbb G}\left({\widetilde L}^{(b)}\vert \Delta M^{\tau}\vert\right)\right]^{-1}\left[I_{\Omega\setminus(\cup_n\Rbrack T_n\Lbrack)}+\sum_{n=1}^{+\infty} 2^{-n} I_{ \Rbrack T_n\Lbrack}\right],$$ where $(T_n)_{n\geq 1}$ is the sequence of $\mathbb F$-predictable stopping times that exhausts the jumps of $M$. Thus, it is easy to check that $0<\phi\leq 1$, and both processes $\phi\cdot M^{\tau}$ and ${\widetilde L}^{(b)}_{-}\phi\cdot M^{\tau}+[{\widetilde L}^{(b)},\phi\cdot M^{\tau}]=\sum {\widetilde L}^{(b)}\phi\Delta M^{\tau}$ have integrable variations on the one hand. On the other hand, since $ \sum {\widetilde L}^{(b)}\phi\Delta M^{\tau}$ jumps on predictable stopping times only, its $\mathbb G$-compensator is
$$\sum\ ^{p,\mathbb G}\left( {\widetilde L}^{(b)}\phi\Delta M^{\tau}\right)=\sum \phi\ ^{p,\mathbb G}\left( {\widetilde L}^{(b)}\Delta M^{\tau}\right)\equiv 0.$$
This proves that ${\widetilde L}^{(b)}_{-}\phi\cdot M^{\tau}+[{\widetilde L}^{(b)},\phi\cdot M^{\tau}]$ is a $\mathbb G$-local martingale or equivalently  ${\widetilde L}^{(b)}(\phi\cdot M^{\tau})$  is a $\mathbb G$-local martingale. This ends the proof of the theorem.\end{proof}

\begin{corollary} For any thin $\mathbb F$-martingale $M$ such that $\{\Delta M\not=0\}\cap\{\widetilde Z=0<Z_{-}\}$ is evanescent, ${\widetilde L}^{(b)} M^{\tau}$ is a $\mathbb G$-local martingale.
\end{corollary}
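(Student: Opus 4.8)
The plan is to deduce this corollary directly from Theorem~\ref{generalThinDeflators}. Both statements concern a thin $\mathbb F$-martingale $M$, so the only thing to verify is that the hypothesis of the corollary --- that $\{\Delta M\neq 0\}\cap\{\widetilde Z=0<Z_{-}\}$ is evanescent --- implies condition (\ref{Condition1}), namely $^{p,\mathbb F}\left(\Delta M\, I_{\{\widetilde Z=0<Z_{-}\}}\right)\equiv 0$. Once (\ref{Condition1}) is in hand, Theorem~\ref{generalThinDeflators} gives at once that ${\widetilde L}^{(b)}M^{\tau}$ is a $\mathbb G$-local martingale.

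First I would note that the process $\Delta M\, I_{\{\widetilde Z=0<Z_{-}\}}$ is supported on the set $\{\Delta M\neq 0\}\cap\{\widetilde Z=0<Z_{-}\}$: off that set it is identically $0$. Since, by hypothesis, that set is evanescent, the process $\Delta M\, I_{\{\widetilde Z=0<Z_{-}\}}$ is indistinguishable from the null process. The $\mathbb F$-predictable projection of a process indistinguishable from $0$ is $0$ (this is immediate from the uniqueness up to indistinguishability of the predictable projection, together with the trivial observation that $0$ serves as a predictable projection of any evanescent process). Hence $^{p,\mathbb F}\left(\Delta M\, I_{\{\widetilde Z=0<Z_{-}\}}\right)\equiv 0$, which is precisely (\ref{Condition1}), and the conclusion follows from Theorem~\ref{generalThinDeflators}.

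I do not expect any real obstacle: the corollary merely replaces the condition ``the predictable projection of $\Delta M\, I_{\{\widetilde Z=0<Z_{-}\}}$ vanishes'' by the stronger, more transparent pathwise condition ``$\Delta M\, I_{\{\widetilde Z=0<Z_{-}\}}$ itself vanishes (up to evanescence)'', and the former is automatically implied by the latter. The single point worth stating explicitly, for completeness, is the elementary fact about predictable projections of evanescent processes recalled above.
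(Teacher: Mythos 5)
Your proof is correct and follows exactly the paper's route: the paper's own one-line argument is that the evanescence of $\{\Delta M\neq 0\}\cap\{\widetilde Z=0<Z_{-}\}$ implies (\ref{Condition1}), which then invokes Theorem~\ref{generalThinDeflators}. You have simply made explicit the (standard) observation that the predictable projection of an evanescent process is null.
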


\begin{proof} The proof of the corollary follows immediately from Theorem \ref{generalThinDeflators}, as the condition $\{\Delta M\not=0\}\cap\{\widetilde Z=0<Z_{-}\}=\emptyset$ implies (\ref{Condition1}).
\end{proof}
\section{The part after $\tau$}\label{SectionAfterTau}
Herein, we focus on the process $S-S^{\tau}$, and in the same spirit of Section \ref{SectionBeforeTau} we summarize results in two subsections. The first subsection outlines the principal results, while the second subsection explains how to obtain $\mathbb G$-local martingale deflators for $S-S^{\tau}$ from the $\mathbb F$-deflators of $S$ when $S$ varies in a class of processes. However in this section we consider the following assumption on $\tau$
\begin{equation}\label{mainassumptionontau}
\tau\ \mbox{is an honest time and }\ \ \ \ \ \ Z_{\tau}<1\ \ \ P-a.s.\end{equation}

\subsection{The main results}
\noindent This subsection presents our main results on the NUPBR for $(S-S^{\tau},\mathbb G)$. These results are elaborated for single jump processes and general thin processes with predictable jumps only as well.

 \begin{theorem}\label{cruciallemma2} Suppose that $\tau$ is an honest time. Consider an $\mathbb F$-predictable stopping time $T$ and an ${\cal F}_T$-measurable
 r.v. $\xi$ such that
$E(\vert \xi\vert\big|\ {\cal F}_{T-})<+\infty$  P-a.s. on $\{T<+\infty\}$.\\ If $S:=\xi I_{\{Z_{T-}<1\}}I_{\Rbrack T,+\infty\Rbrack}$, then the following are equivalent:\\
{\rm{(a)}}  $S-S^{\tau}$ satisfies the NUPBR$(\mathbb G)$.\\
{\rm{(b)}}  $S$ satisfies the NUPBR$(\mathbb F,\widetilde{Q}'_T)$,
where
\begin{equation}\label{probabilityQ'tilde(T)}
{\widetilde Q}'_T:=\Big({{1-\widetilde Z_T}\over{1-Z_{T-}}}I_{\{
Z_{T-}<1\}}+I_{\{ Z_{T-}=1\}}\Bigr)\cdot P.\end{equation}
{\rm{(c)}}  $S$ satisfies the NUPBR$(\mathbb F, Q'_T)$, where for
$\Gamma_1(T):=\{P(\widetilde Z_T<1\big|{\cal F}_{T-})>0\ \&\ T<+\infty\}$ we set
\begin{equation}\label{probabilityQprime}
Q'_T:=\Bigl({{I_{\{\widetilde
Z_T<1\}\cap\Gamma_1(T)}}\over{P(\widetilde Z_T<1\big|{\cal
F}_{T-})}}+I_{\Omega\setminus\Gamma_1(T)}\Bigr)\cdot P.\end{equation} (d)
$\widetilde S:=\xi I_{\{ \widetilde Z_T<1\}}I_{\Rbrack
T,+\infty\Rbrack}$ satisfies the NUPBR$(\mathbb F)$.
\end{theorem}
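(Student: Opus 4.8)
The plan is to establish Theorem~\ref{cruciallemma2} as a mirror image of Theorem~\ref{main3}, with the role of $Z_{-}$ replaced by $1-Z_{-}$ and $\widetilde Z$ replaced by $1-\widetilde Z$, exploiting the duality between the ``before $\tau$'' and ``after $\tau$'' pictures that the honest-time assumption makes available. I would first prove the cycle of equivalences among (b), (c), and (d) purely at the level of the filtration $\mathbb F$, since these three statements do not involve $\mathbb G$ at all. The passage between (b) and (c) is a change-of-measure identity: on the set $\{T<+\infty\ \&\ Z_{T-}<1\}$ both ${\widetilde Q}'_T$ and $Q'_T$ have $\mathcal F_{T-}$-conditional densities that are deterministic multiples of $I_{\{\widetilde Z_T<1\}}$, and since $S$ jumps only at $T$, the law of $(H\is S)_\cdot$ under either measure is governed by the $\mathcal F_{T-}$-conditional law of $\xi I_{\{\widetilde Z_T<1\}}$; hence NUPBR under one is NUPBR under the other. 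The equivalence of (b)/(c) with (d) is the observation that multiplying $S$ by $I_{\{\widetilde Z_T<1\}}$ has exactly the same effect on the terminal wealth set as the absolutely continuous change of measure with density supported on $\{\widetilde Z_T<1\}$, together with Lemma~\ref{NUPBRforPredictableProcesses} to dispose of the predictable leftover $\xi I_{\{\widetilde Z_T=1\}}I_{\Rbrack T,+\infty\Rbrack}$, which satisfies $\{T\leq\tau\}\cap\{\widetilde Z_T=1\}$ considerations symmetric to those in Remark~\ref{remark214}.

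The heart of the proof is the equivalence of (a) with the $\mathbb F$-statements. The natural strategy is to use the honest time structure: for an honest time $\tau$ with $Z_\tau<1$, the process $S-S^\tau$ is supported on $\Rbrack\tau,\infty\Rbrack$, and on that stochastic interval one has an explicit description of $\mathbb G$-local martingales after $\tau$ in terms of $\mathbb F$-local martingales and the Az\'ema supermartingale, entirely parallel to the representation $\widehat M = M^\tau - Z_{-}^{-1}I_{\Rbrack 0,\tau\Lbrack}\is\langle M,m\rangle^{\mathbb F}$ used before $\tau$, but now with $1-Z_{-}$ in the denominator. I would invoke Proposition~\ref{charaterisationofNUPBRloc}: $(S-S^\tau)$ satisfies NUPBR$(\mathbb G)$ iff there is a positive $\mathbb G$-local martingale $Y^{\mathbb G}$ and a $\mathbb G$-predictable $\theta^{\mathbb G}\in(0,1]$ with $Y^{\mathbb G}(\theta^{\mathbb G}\is(S-S^\tau))$ a $\mathbb G$-local martingale; then translate the existence of such a pair, via Lemma~\ref{LY} (giving $^{p,\mathbb G}(Y^{\mathbb G}\Delta(S-S^\tau))=0$ on $\Rbrack T\Lbrack$) and the after-$\tau$ formula for $\mathbb G$-predictable projections in terms of $\mathbb F$-predictable projections on $\{1-\widetilde Z>0\}$, into the existence of an $\mathbb F$-deflator for $\widetilde S$, i.e.\ into (d). Concretely, one shows that on $\Rbrack T\Lbrack$, $^{p,\mathbb G}\bigl(Y^{\mathbb G}\Delta(S-S^\tau)\bigr)=0$ is equivalent, after dividing out the strictly positive $\mathbb F$-predictable factor coming from $(1-\widetilde Z_T)/(1-Z_{T-})$ on $\{T>\tau\}$, to $E^{{\widetilde Q}'_T}(\xi Y^{\mathbb F}_T\mid\mathcal F_{T-})=0$ for an appropriate $\mathbb F$-local martingale $Y^{\mathbb F}$, which is precisely the NUPBR$(\mathbb F,{\widetilde Q}'_T)$ condition in (b). For the converse direction one takes an $\mathbb F$-deflator realizing (b)/(d) and builds the $\mathbb G$-deflator explicitly as a product of (the after-$\tau$ analogue of) ${\widetilde L}^{(b)}$ with the $\mathbb F$-deflator stopped appropriately, checking positivity and the local martingale property as in the proof of Theorem~\ref{generalThinDeflators}.

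The main obstacle I anticipate is the after-$\tau$ projection machinery: unlike the before-$\tau$ case, where $\widehat M$ and the key supermartingale identities are recalled directly from \cite{aksamit/choulli/deng/jeanblanc2}, the honest-time setting requires the analogue of formula (\ref{MG-martingaleprt}) and of Proposition~\ref{ResultoofACDJ1} with $1-Z_{-}$, $1-\widetilde Z$, and the corresponding supermartingale $\widetilde m$ in place of $m$; one must verify that the relevant denominators $1-\widetilde Z_T$ are strictly positive on $\{T>\tau\}$ (this is where $Z_\tau<1$ enters decisively, together with $\{T>\tau\}\subset\{1-\widetilde Z_T>0\}$ on the support of $S-S^\tau$) and that the resulting after-$\tau$ ``Girsanov'' process is a genuine positive $\mathbb G$-local martingale. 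The bookkeeping connecting $\mathbb G$-predictable projections on $\Rbrack\tau,\infty\Rbrack$ to $\mathbb F$-predictable projections — the after-$\tau$ counterpart of the identity in Remark~\ref{remarkfor Ltilde(b)} — is the technically delicate point, and I expect it to be the step that genuinely needs the honest-time hypothesis rather than just $\tau$ being an arbitrary random time. Everything else reduces, as above, to elementary single-jump computations and the cited localization and predictable-finite-variation lemmas.
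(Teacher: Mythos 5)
Your overall plan matches the paper's: first settle the purely $\mathbb F$-level equivalences among (b), (c), (d), then connect to (a) via the honest-time projection formula and a single-jump version of the after-$\tau$ Girsanov argument. The paper's proof of (b)$\Leftrightarrow$(c) is indeed just the remark that $\widetilde Q'_T\sim Q'_T\ll P$, and (c)$\Rightarrow$(d), (d)$\Rightarrow$(a), (a)$\Rightarrow$(b) follow by the single-jump conditional-expectation bookkeeping you describe (using Proposition~5.3 of \cite{Jeu} and Lemma~\ref{lemma:predsetFG}, and, crucially, $\{T>\tau\}\subset\{\widetilde Z_T<1\}\subset\{Z_{T-}<1\}$).

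Two points are worth flagging. First, your invocation of Lemma~\ref{NUPBRforPredictableProcesses} ``to dispose of the predictable leftover $\xi I_{\{\widetilde Z_T=1\}}I_{\Rbrack T,+\infty\Rbrack}$'' is misplaced: $\widetilde Z_T$ is only $\mathcal F_T$-measurable, not $\mathcal F_{T-}$-measurable, so that leftover is \emph{not} a predictable finite-variation process and the lemma does not apply. There is actually nothing to dispose of at the $\mathbb F$-only level: the measures $Q'_T$ and $\widetilde Q'_T$ do not charge the set $\{\widetilde Z_T=1\}\cap\{Z_{T-}<1\}$, so on their support $S$ and $\widetilde S$ coincide, and the transfer of deflators is direct. (The $\{T>\tau\}\subset\{Z_{T-}<1\}$ consideration reminiscent of Remark~\ref{remark214} is a $\mathbb G$-level fact and belongs to the reduction $\widehat S-\widehat S^\tau\equiv 0$, not to the (b)$\Leftrightarrow$(d) step.) Second, for (d)$\Rightarrow$(a) you envisage constructing the after-$\tau$ analogue of $\widetilde L^{(b)}$ and taking a product; the paper bypasses this heavier machinery by first proving the elementary single-jump martingale characterization (Theorem~\ref{cruciallemma2singlejump}) via conditional expectations, and then applying it under a change of measure $Q_3\sim P$, using the equivalence-invariance $\{\widetilde Z^{Q_3}_T<1\}=\{\widetilde Z_T<1\}$. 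Your route would work but is more elaborate than needed for a single jump; the invariance of $\{\widetilde Z_T<1\}$ under equivalent changes of measure is the one ingredient your sketch does not explicitly mention and is essential for that step.
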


\noindent The proof of this theorem is long and requires intermediary results.
 Thus, we postpone the proof to  Subsection \ref{proofmain3}.

\begin{remark}
Theorem \ref{cruciallemma2} provides two equivalent (and
conceptually different) characterisations for the condition that
$S-S^{\tau}$ satisfies   NUPBR$(\mathbb G)$. One of these
characterisations uses  the  NUPBR$(\mathbb F)$  property under
$P$ for a transformation of $S$, while the other characterisation
is essentially based on the NUPBR$(\mathbb F)$ for $S$
under an absolutely continuous probability measure.
\end{remark}

\noindent The next theorem describes the models for $\tau$ that preserve the NUPBR$(\mathbb G)$ after $\tau$ for any single jump $\mathbb F$-martingale.

\begin{theorem}\label{cruciallemma3} Suppose that $\tau$ is an honest and consider an $\mathbb F$-predictable stopping time $T$. Then, the  following assertions are equivalent:\\
{\rm{(a)}}    On $\{ T<+\infty\}$, we have
\begin{equation}\label{zeroequationaftertau1} \left\{{\widetilde
Z}_T=1\right\}\subset \left\{ Z_{T-}=1\right\}.\end{equation}
{\rm{(b)}} For any $\xi \in L^{\infty}({\cal F}_T)$ such that
$E(\xi \big|\ {\cal F}_{T-})=0$ P-a.s on $\{T<+\infty\}$, the process $M-M^{\tau}$
satisfies NUPBR$(\mathbb G)$, where $M:=\xi I_{\Rbrack
T,+\infty\Rbrack}$.
\end{theorem}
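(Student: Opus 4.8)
The plan is to mirror the structure of the proof of Proposition \ref{corollaryofmain3}, using Theorem \ref{cruciallemma2} as the single-jump workhorse just as Theorem \ref{main3} was used there. For the implication $(a)\Rightarrow(b)$, assume (\ref{zeroequationaftertau1}). Fix $\xi\in L^\infty({\cal F}_T)$ with $E(\xi|{\cal F}_{T-})=0$ on $\{T<+\infty\}$ and set $M:=\xi I_{\Rbrack T,+\infty\Rbrack}$. First I would reduce to the "relevant" part of $M$ exactly as in Remark \ref{remark214}--(b), but with the roles of $\{Z_{-}=0\}$ replaced by $\{Z_{-}=1\}$: decompose $M=\xi I_{\{Z_{T-}<1\}}I_{\Rbrack T,+\infty\Rbrack}+\xi I_{\{Z_{T-}=1\}}I_{\Rbrack T,+\infty\Rbrack}=:{\overline M}+{\widehat M}$. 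On $\{Z_{T-}=1\}$ one has (using the honest-time inclusion $\{Z_{-}=1\}\subset$ the complement of $\{\tau<\cdot\}$ on that set, i.e. $\tau\geq T$ is impossible after $T$ in the usual way — more precisely $\Rbrack T\Lbrack\cap\{Z_{T-}=1\}\subset\Rbrack 0,\tau\Lbrack^c$ in the "after $\tau$" sense, so that $({\widehat M}-{\widehat M}^\tau)$ vanishes), hence ${\widehat M}-{\widehat M}^\tau\equiv 0$ is trivially a $\mathbb G$-martingale and only ${\overline M}$ needs care. Next, observe that (\ref{zeroequationaftertau1}) is equivalent to $\Rbrack T\Lbrack\cap\{\widetilde Z=1\ \&\ Z_{-}<1\}=\emptyset$; consequently $\widetilde S:=\xi I_{\{\widetilde Z_T<1\}}I_{\{Z_{T-}<1\}}I_{\Rbrack T,+\infty\Rbrack}={\overline M}$, so ${\overline M}$ equals its own $\{\widetilde Z<1\}$-truncation and is still an $\mathbb F$-martingale. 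Then assertion (d) of Theorem \ref{cruciallemma2} holds for ${\overline M}$ (with $\widetilde S={\overline M}$, which satisfies NUPBR$(\mathbb F)$ as a martingale), so by the equivalence $(a)\Leftrightarrow(d)$ there, ${\overline M}-{\overline M}^\tau$ satisfies NUPBR$(\mathbb G)$; adding back the trivial part gives (b).

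For the converse $(b)\Rightarrow(a)$, I would run the same test-process argument as in Proposition \ref{corollaryofmain3}, but "complemented": take
\[
\xi:=\Bigl(I_{\{\widetilde Z_T=1\}}-P(\widetilde Z_T=1\mid{\cal F}_{T-})\Bigr)I_{\{T<+\infty\}},\qquad M:=\xi I_{\Rbrack T,+\infty\Rbrack},
\]
which is bounded, ${\cal F}_T$-measurable and satisfies $E(\xi|{\cal F}_{T-})=0$. Using the honest-time structure — in particular that "after $\tau$" the process $M-M^\tau$ lives on $\Rbrack\tau,+\infty\Lbrack$ and that $\{\widetilde Z_T=1\}$ forces $T$ to sit on $\{Z_{T-}<1\}$ only in the bad case we want to exclude — I would compute $M-M^\tau$ explicitly and recognize it as a $\mathbb G$-predictable finite-variation process (it will be a deterministic-in-${\cal F}_{T-}$ multiple of $I_{\{T\leq\tau\}^c}I_{\Rbrack T,+\infty\Rbrack}$, up to the sign $-P(\widetilde Z_T=1|{\cal F}_{T-})$). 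By Lemma \ref{NUPBRforPredictableProcesses}, NUPBR$(\mathbb G)$ for this process is equivalent to its being identically $0$, i.e. to
\[
0=E\Bigl[P(\widetilde Z_T=1\mid{\cal F}_{T-})\,I_{\{T>\tau\}}I_{\{T<+\infty\}}\Bigr]
 =E\Bigl[(1-Z_{T-})\,I_{\{\widetilde Z_T=1\ \&\ T<+\infty\}}\Bigr],
\]
where I would use $E[I_{\{T>\tau\}}\mid{\cal F}_{T-}]=1-E[I_{\{T\le\tau\}}\mid{\cal F}_{T-}]$ together with the optional-projection identity linking $I_{\Rbrack\tau,+\infty\Rbrack}$, $Z$ and $\widetilde Z$ (the analogue of the step $E[P(\widetilde Z_T=0|{\cal F}_{T-})I_{\{T\le\tau\}}]=E[Z_{T-}I_{\{\widetilde Z_T=0\}}]$ in Proposition \ref{corollaryofmain3}, now with $1-Z_{-}$ and $1-\widetilde Z$). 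Since $1-Z_{T-}\geq 0$, this vanishing is equivalent to $\{\widetilde Z_T=1\}\cap\{Z_{T-}<1\}\cap\{T<+\infty\}=\emptyset$, which is exactly (\ref{zeroequationaftertau1}).

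The main obstacle I anticipate is the bookkeeping around the honest-time hypothesis (\ref{mainassumptionontau}) in the "after $\tau$" setting: one must be careful that $M-M^\tau$ really is $\mathbb G$-predictable (not merely $\mathbb G$-optional) on $\Rbrack T,+\infty\Rbrack\cap\Rbrack\tau,+\infty\Rbrack$, and that the conditional-expectation identity converting $P(\widetilde Z_T=1|{\cal F}_{T-})I_{\{T>\tau\}}$ into $(1-Z_{T-})I_{\{\widetilde Z_T=1\}}$ is applied on the correct set (handling $\{Z_{T-}=1\}$, where $1-Z_{T-}=0$ and the statement is vacuous, separately). The reduction in $(a)\Rightarrow(b)$ that ${\widehat M}-{\widehat M}^\tau\equiv 0$ also deserves a line justifying why $\{Z_{T-}=1\}$ kills the after-$\tau$ part — this is where $Z_\tau<1$ a.s. and honesty of $\tau$ enter. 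Everything else is a routine transcription of the "before $\tau$" arguments with the substitutions $Z_{-}\leftrightarrow 1-Z_{-}$, $\widetilde Z\leftrightarrow 1-\widetilde Z$, $\{\cdot=0\}\leftrightarrow\{\cdot=1\}$, and $S^\tau\leftrightarrow S-S^\tau$.
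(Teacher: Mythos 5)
Your proposal is correct and follows essentially the same route as the paper: the same decomposition $M=\overline M+\widehat M$ with the $\{Z_{T-}=1\}$ part killed after $\tau$, the same reduction of (a) to $\{\widetilde Z_T<1\}=\{Z_{T-}<1\}$ so that Theorem \ref{cruciallemma2} applies to $\overline M$, and the same test martingale $\xi=\bigl(I_{\{\widetilde Z_T=1\}}-P(\widetilde Z_T=1|{\cal F}_{T-})\bigr)I_{\{T<+\infty\}}$ combined with Lemma \ref{NUPBRforPredictableProcesses} and the identity $E(I_{\{T>\tau\}}|{\cal F}_{T-})=1-Z_{T-}$ for the converse. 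One small correction: the vanishing of $\widehat M-{\widehat M}^{\tau}$ does not come from honesty or $Z_\tau<1$, but simply from $Z_{T-}=P(\tau\geq T|{\cal F}_{T-})=1$ forcing $\tau\geq T$ a.s.\ on $\{Z_{T-}=1\}$, so $I_{\{T>\tau\}}$ vanishes there.
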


\begin{proof}
Suppose that assertion (a) holds, and consider $\xi\in
L^{\infty}({\cal F}_T)$ such that $E(\xi\ |\ {\cal F}_{T-})=0,\ \
P-a.s.$ on $\{T<+\infty\}$. By decomposing $M$ into $$M=
I_{\{Z_{T-}<1\}}\xi I_{\Rbrack T,+\infty\Rbrack}+I_{\{Z_{T-}=1\}}\xi
I_{\Rbrack T,+\infty\Rbrack}:=M^{(1)}+M^{(2)},$$  and noting that
$M^{(2)}-(M^{(2)})^{\tau}=0$, we can restrict our attention to the case
where $M=M^{(1)}$ on the one hand.  On the other hand, since $\{Z_{T-}=1\}\subset\{\widetilde Z_T=1\}$ P-a.s. on $\{T<+\infty\}$, it is obvious that  (\ref{zeroequationaftertau1}) implies $\{\widetilde
Z_T<1\}=\{Z_{T-}<1\}$ on $\{T<+\infty\}$, and hence
$${\widetilde M}:=I_{\{ \widetilde Z_T<1\}} M=M\ \ \ \mbox{is an $\mathbb F$-martingale}.$$
Thus, assertion (b) follows from a direct application of Theorem
\ref{cruciallemma2} to $M$. This ends the proof of
(a)$\Rightarrow$ (b). To prove the converse, we assume that
assertion (b) holds, and we consider the ${\cal F}_T$-measurable
and bounded r.v. $\xi:=(I_{\{ \widetilde Z_T=1\}}-P(\widetilde
Z_T=1|{\cal F}_{T-}))I_{\{T<+\infty\}}$ and the bounded $\mathbb
F$-martingale $M:=\xi I_{\Rbrack T,+\infty\Rbrack}$. Then,  on the
one hand, $M-M^{\tau}$ satisfies NUPBR($\mathbb G)$. On the other
hand, due to $\{T>\tau\}\subset\{\widetilde Z_T<1\}$, the
finite variation process
$$
M-M^{\tau}=-P(\widetilde Z_T=1|{\cal F}_{T-})I_{\{
T>\tau\}}I_{\Rbrack T,+\infty\Rbrack}\ \ \mbox{is}\ \mathbb
G-\mbox{predictable}.$$ Thus, it is null, or equivalently
$\{Z_{T-}<1\}\subset\{\widetilde Z_T<1\}$ $P-a.s.$ on
$\{T<+\infty\}$. This proves assertion (a), and the proof of the
theorem is completed.\end{proof}

\noindent The following extends Theorem \ref{cruciallemma2} to the case of general thin processes.

\begin{theorem}\label{MainTheoremMultiJumps}
Suppose that $\tau$ satisfies (\ref{mainassumptionontau}), and $S$ is a thin process with predictable jumps only. Then, the following assertions are equivalent.\\
{\rm{(a)}}  The process $S-S^{\tau}$ satisfies the  NUPBR$(\mathbb G)$.\\
{\rm{(b)}} For any $\delta>0$, there exists a positive $\mathbb
F$-local martingale  $Y$, such that
\begin{equation}\label{mainEquationMultiJumps}
\ \ \ \ \ ^{p,\mathbb F}\left(Y\vert\Delta S\vert I_{\{\widetilde Z<1\}}\right)<+\infty\ \&\ \ ^{p,\mathbb F}\left(Y\Delta S I_{\{ \widetilde Z<1\}}\right)=0\ \mbox{on}\ \{1-Z_{-}\geq \delta\}.\end{equation}
{\rm{(c)}} For any $\delta$, the process
\begin{equation}\label{Sone}
S^{(1)}:=\sum \Delta S I_{\{\widetilde Z<1\ \&\ 1-Z_{-}\geq\delta\}},\end{equation}
satisfies the NUPBR$(\mathbb F)$.\end{theorem}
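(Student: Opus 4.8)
The plan is to mirror the structure of the proof of Theorem \ref{main4}, which is the analogous statement for the stopped process $S^{\tau}$, transposing every ingredient from the ``before $\tau$'' setting to the ``after $\tau$'' setting: the supermartingale $Z$ is replaced by $1-Z$ (and $Z_-$ by $1-Z_-$), the set $\{\widetilde Z>0\}$ becomes $\{\widetilde Z<1\}$, and the role of the honest-time hypothesis (\ref{mainassumptionontau}) is to guarantee that $1-Z_{\tau}>0$, so that division by $1-\widetilde Z$ or $1-Z_-$ makes sense on the relevant sets. I would organize the argument as the cycle (b)$\Longrightarrow$(c), (c)$\Longrightarrow$(a), (a)$\Longrightarrow$(b), exactly as in the companion theorem.

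For \textbf{(b)$\Longrightarrow$(c)}: fix $\delta>0$ and the associated positive $\mathbb F$-local martingale $Y$ given by (\ref{mainEquationMultiJumps}). The process $S^{(1)}$ of (\ref{Sone}) is a thin $\mathbb F$-semimartingale jumping only on the predictable stopping times $(T_n)$ that exhaust $\{\Delta S\neq 0\}$, so by Proposition \ref{charaterisationofNUPBRloc} it suffices to produce a $\sigma$-martingale density. The idea is to choose an $\mathbb F$-predictable $\phi$ with $0<\phi\le 1$ small enough (a $2^{-n}$-type normalization on each $\Rbrack T_n\Lbrack$, divided by $1+{}^{p,\mathbb F}(|\Delta S|)+{}^{p,\mathbb F}(Y|\Delta S|)$, exactly as $\phi$ is built in the proof of Theorem \ref{generalThinDeflators}) so that $Y_-\,\phi\is S^{(1)}+[Y,\phi\is S^{(1)}]=\sum Y\phi\,\Delta S^{(1)}$ has $\mathbb F$-integrable variation; its $\mathbb F$-compensator is $\sum\phi\,{}^{p,\mathbb F}(Y\Delta S\,I_{\{\widetilde Z<1\}})I_{\{1-Z_-\ge\delta\}}\equiv 0$ by (\ref{mainEquationMultiJumps}). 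Hence $Y(\phi\is S^{(1)})$ is an $\mathbb F$-local martingale and $S^{(1)}$ satisfies NUPBR$(\mathbb F)$; since $\delta$ was arbitrary, (c) holds. Conversely, \textbf{(c)$\Longrightarrow$(b)} is immediate: an $\mathbb F$-$\sigma$-martingale density $Y$ for $S^{(1)}$ lies in ${\cal L}(\mathbb F,S^{(1)})$, so Lemma \ref{LY} gives ${}^{p,\mathbb F}(Y|\Delta S^{(1)}|)<\infty$ and ${}^{p,\mathbb F}(Y\Delta S^{(1)})=0$, which is exactly (\ref{mainEquationMultiJumps}) on $\{1-Z_-\ge\delta\}$ after absorbing the indicator.

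For \textbf{(b)$\Longrightarrow$(a)}: given $Y$ as in (b) for a fixed $\delta$, one transfers it to $\mathbb G$ using the deflator ${\widetilde L}^{(b)}$ machinery from Subsection \ref{proofsOfmainTheorems3} (more precisely, its ``after $\tau$'' analogue, which is the object used in Theorem \ref{cruciallemma2}), together with the localization Proposition \ref{NUPBRLocalization} along the $\mathbb F$-stopping times $R_\delta:=\inf\{t:\,1-Z_{t-}<\delta\}$, on $\Rbrack 0,R_\delta\Rbrack$ of which $1-Z_-\ge\delta$; since $1-Z$ is a supermartingale with $1-Z_0\le 1$ and, by (\ref{mainassumptionontau}), $\{1-Z_-=0\}$ eventually absorbs, the $R_\delta$ increase to $+\infty$ as $\delta\downarrow 0$. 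On each such stochastic interval the honest-time hypothesis lets one write the $\mathbb G$-canonical decomposition of $(S-S^{\tau})$ and exhibit a positive $\mathbb G$-local martingale $\widehat Y$ (built from $Y$ and from the deflator of Proposition \ref{ResultoofACDJ1}-type for the after-$\tau$ filtration) together with a $\mathbb G$-predictable $0<\theta\le 1$ making $\widehat Y(\theta\is(S-S^{\tau}))$ a $\mathbb G$-local martingale; the compensation condition needed is precisely ${}^{p,\mathbb F}(Y\Delta S\,I_{\{\widetilde Z<1\}})=0$ on $\{1-Z_-\ge\delta\}$, i.e. (\ref{mainEquationMultiJumps}). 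Applying Proposition \ref{NUPBRLocalization}(a)$\Rightarrow$(b) then yields NUPBR$(\mathbb G)$ for $S-S^{\tau}$. Finally \textbf{(a)$\Longrightarrow$(b)} is the hard direction: assuming NUPBR$(\mathbb G)$ for $S-S^{\tau}$, one fixes $\delta$ and must manufacture the $\mathbb F$-local martingale $Y$ satisfying (\ref{mainEquationMultiJumps}). The route is to apply Theorem \ref{cruciallemma2} jump-time by jump-time: on each graph $\Rbrack T_n\Lbrack$ the single-jump process $\Delta S\,I_{\Rbrack T_n,\infty\Rbrack}$, suitably truncated and with its $\mathbb F$-conditional mean subtracted so that it becomes a genuine single-jump $\mathbb F$-martingale, inherits NUPBR$(\mathbb G)$ of its ``$-$stopped'' version from that of $S-S^{\tau}$ (a stopping/localization argument), and Theorem \ref{cruciallemma2}(a)$\Leftrightarrow$(d) forces the neutrality identity $E(\cdot\,I_{\{\widetilde Z_{T_n}=1\}}\mid{\cal F}_{T_n-})=0$; assembling these over $n$ and patching the resulting single-jump $\mathbb F$-deflators into one positive $\mathbb F$-local martingale (again as in the patching step of Lemma \ref{NUPBRdecomposed} and Theorem \ref{generalThinDeflators}) produces $Y$ with ${}^{p,\mathbb F}(Y\Delta S\,I_{\{\widetilde Z<1\}})I_{\{1-Z_-\ge\delta\}}=0$.

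The \textbf{main obstacle} I expect is exactly this last assembly in (a)$\Longrightarrow$(b): passing from the jump-by-jump neutrality identities to a \emph{single} $\mathbb F$-local martingale $Y$ that works simultaneously on all of $\{1-Z_-\ge\delta\}$, while keeping $Y>0$ and controlling integrability of ${}^{p,\mathbb F}(Y|\Delta S|I_{\{\widetilde Z<1\}})$; the countably-many, possibly non-ordered predictable jump times mean one cannot simply iterate Theorem \ref{cruciallemma2} sequentially as in Remark \ref{remark214}(a), and one must instead glue the local pieces via a multiplicative/$2^{-n}$-weighted construction and justify the interchange of ${}^{p,\mathbb F}(\cdot)$ with the infinite sum — this is the technical heart, and it is why the authors flag the (a)$\Longrightarrow$(c) direction of the companion Theorem \ref{main4} as the most involved and defer it to Section \ref{Proof4theoremsNotTech}.
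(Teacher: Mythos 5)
Your cycle (b)$\Rightarrow$(c), (c)$\Rightarrow$(b), (b)$\Rightarrow$(a), (a)$\Rightarrow$(b) matches the paper's organization, and the first two implications are essentially the paper's argument: Lemma \ref{LY} for (c)$\Rightarrow$(b), and the $2^{-n}$-weighted predictable process $\varphi$ together with the vanishing of the compensator of the pure-jump finite-variation process $Y_-\varphi\cdot S^{(1)}+[Y,\varphi\cdot S^{(1)}]$ for (b)$\Rightarrow$(c). Your (b)$\Rightarrow$(a) also matches the paper's route in spirit: the paper localizes $Y$ along $\mathbb F$-stopping times $\sigma_n$, passes to $Q_n:=(Y_{\sigma_n}/Y_0)\cdot P$, reduces to $Y\equiv 1$ so that $\widetilde S:=\sum\Delta S\,I_{\{\widetilde Z<1\}}$ is an $\mathbb F$-local martingale satisfying (\ref{Condition4AfterTau}) under $Q_n$ (via (\ref{ZtildeforQandP})), applies the explicit deflator Theorem \ref{explicitdeflatorThin4AfterTau}, and concludes with Proposition \ref{NUPBRLocalization}. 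The relevant $\mathbb G$-deflator is $\widetilde L^{(a)}$ from Proposition \ref{propositionofACDJ2014b}, used through Theorem \ref{explicitdeflatorThin4AfterTau} rather than through Theorem \ref{cruciallemma2}; your $R_\delta$-localization is a different but plausible way to dispose of the $\{1-Z_-\geq\delta\}$ restriction (the paper relies on the $\mathbb G$-local boundedness of $(1-Z_-)^{-1}I_{\Lbrack\tau,+\infty\Rbrack}$ and on $\sigma_n$).

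The genuine gap is (a)$\Rightarrow$(b). You propose to apply the single-jump Theorem \ref{cruciallemma2} jump-time by jump-time, extract a neutrality identity on each $\Rbrack T_n\Lbrack$, and then glue the resulting single-jump $\mathbb F$-deflators into one positive $\mathbb F$-local martingale $Y$. You flag this gluing as the main obstacle, and indeed it is not carried out and is not routine: the $2^{-n}$-weighting device is adequate for summing finite-variation processes, but it does not by itself produce a single positive $\mathbb F$-local martingale with the required integrability $^{p,\mathbb F}(Y|\Delta S|I_{\{\widetilde Z<1\}})<\infty$ across countably many unordered predictable times. The paper does not patch: it applies Theorem \ref{theosigmadensityiff} to a $\mathbb G$-$\sigma$-martingale density ${\cal E}(N^{\mathbb G})$ of $S-S^{\tau}$ to obtain a single global $\widetilde{\cal P}(\mathbb G)$-measurable $f^{\mathbb G}>0$ solving (\ref{mgunderGaftertau}); it then projects $f^{\mathbb G}$ to an $\widetilde{\cal P}(\mathbb F)$-measurable $f$ via Lemma \ref{lemma:predsetFG}, defines $g$, $N^{(1)}=W^{(1)}\star(\mu_1-\nu_1)$ and $Y:={\cal E}(N^{(1)})$, and the technical heart is the verification of the integrability (\ref{mainassumption3.4}) via the inequality $1-Z_--f_m\leq h_1$, the $\mathbb G$-local boundedness of $(1-Z_-)^{-1}I_{\Lbrack\tau,+\infty\Rbrack}$, and the estimates in Propositions \ref{prop:alocundergf} and \ref{Gstoppingtimeaftertau} involving $[m,m]$ and $(1-\widetilde Z)\cdot\mu$. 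So the difficulty you correctly identify is resolved at the level of Jacod parameters of the random measure, not by assembling single-jump deflators; as written, your route for (a)$\Rightarrow$(b) does not close.
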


\noindent The proof of this theorem is long and is based on a result of the next subsection. Thus, this proof  is  postponed to Subsection
\ref{proofofMainTheoremMultiJumps}.

\begin{remark}
1) The process $S^{(1)}$ defined in (\ref{Sone}) is a thin semimartingale. In fact, we have $S^{(1)}=I_{\{1-Z_{-}\geq\delta\}}\cdot S-\sum \Delta S I_{\{ \widetilde Z=1\ \&\ 1-Z_{-}\geq\delta\}}$, and $$\sum I_{\{ \widetilde Z=1\ \&\ 1-Z_{-}\geq\delta\}}\leq \delta^{-2}\sum (\Delta m)^2\leq \delta^{-2} [m,m]\in{\cal A}^+_{loc}(\mathbb F).$$
2) The proof of (a)$\Longrightarrow$(b) is the very technical part in the proof of the theorem, while the rest is easy and is postponed to keep this section short.
\end{remark}

\begin{theorem}\label{XtauArbitraryafteraTau}
The following assertions are equivalent.\\
{\rm{(a)}} The set $\{{\widetilde Z}=1>Z_{-}\}$ is totally inaccessible.\\
{\rm{(b)}} $X-X^{\tau}$ satisfies the NUPBR$(\mathbb G)$ for any thin process $X$ with predictable jumps satisfying NUPBR$(\mathbb F)$.
\end{theorem}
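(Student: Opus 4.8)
The plan is to transcribe, essentially line by line, the proof of Theorem~\ref{XtauArbitrary}, replacing its single-jump building block Proposition~\ref{corollaryofmain3} by the after-$\tau$ counterpart Theorem~\ref{cruciallemma3}, its multi-jump characterisation Theorem~\ref{main4} by Theorem~\ref{MainTheoremMultiJumps}, and the thin set $\{\widetilde Z=0<Z_{-}\}$ by $\{\widetilde Z=1>Z_{-}\}$. Throughout, the standing assumption (\ref{mainassumptionontau}) of Section~\ref{SectionAfterTau} is in force, so Theorems~\ref{cruciallemma3} and \ref{MainTheoremMultiJumps} apply. The only structural property of the new set that is used is that it is thin and optional: from $m=Z+D^{o,\mathbb F}$ and $\widetilde Z=Z+\Delta D^{o,\mathbb F}$ one gets $\Delta m=\widetilde Z-Z_{-}$, hence $\{\widetilde Z\neq Z_{-}\}=\{\Delta m\neq 0\}$, which is the jump set of the c\`adl\`ag martingale $m$ and is therefore thin and optional; so $\{\widetilde Z=1>Z_{-}\}\subset\{\widetilde Z\neq Z_{-}\}$ is a thin optional set.

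For (b)$\Longrightarrow$(a), I would apply assertion (b) to the single-jump bounded $\mathbb F$-martingales $M:=\xi I_{\Rbrack T,+\infty\Rbrack}$, where $T$ ranges over the $\mathbb F$-predictable stopping times and $\xi\in L^{\infty}({\cal F}_T)$ satisfies $E(\xi\,|\,{\cal F}_{T-})=0$; each such $M$ is an $\mathbb F$-martingale, hence satisfies NUPBR$(\mathbb F)$, so by (b) the process $M-M^{\tau}$ satisfies NUPBR$(\mathbb G)$. Theorem~\ref{cruciallemma3} then gives $\{\widetilde Z_T=1\}\subset\{Z_{T-}=1\}$ on $\{T<+\infty\}$ for every $\mathbb F$-predictable $T$, i.e.\ $\Rbrack T\Lbrack\cap\{\widetilde Z=1>Z_{-}\}=\emptyset$ for every such $T$. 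Writing the thin optional set $\{\widetilde Z=1>Z_{-}\}$ as $\bigcup_k\Rbrack\sigma_k\Lbrack$ with $\sigma_k$ $\mathbb F$-stopping times of pairwise disjoint graphs, splitting each $\sigma_k$ into a totally inaccessible part $\sigma_k^i$ and an accessible part $\sigma_k^a$ with $\Rbrack\sigma_k^a\Lbrack\subset\bigcup_l\Rbrack T_l^{(k)}\Lbrack$ for suitable $\mathbb F$-predictable $T_l^{(k)}$, and combining this with the previous display, one gets $\Rbrack\sigma_k^a\Lbrack\subset\{\widetilde Z=1>Z_{-}\}\cap\bigcup_l\Rbrack T_l^{(k)}\Lbrack=\emptyset$; hence $\{\widetilde Z=1>Z_{-}\}=\bigcup_k\Rbrack\sigma_k^i\Lbrack$ is totally inaccessible. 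This reproduces Part~1) of the proof of Theorem~\ref{XtauArbitrary}.

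For (a)$\Longrightarrow$(b), let $X$ be a thin process with $\mathbb F$-predictable jump times only satisfying NUPBR$(\mathbb F)$; then $\{\Delta X\neq 0\}$ is a thin predictable, hence accessible, set, so under (a) it is disjoint up to an evanescent set from the totally inaccessible set $\{\widetilde Z=1>Z_{-}\}$. It then suffices to establish the after-$\tau$ analogue of Remark~\ref{remarkformmain4}, namely: \emph{if $X$ is a thin process with predictable jump times only satisfying NUPBR$(\mathbb F)$ and $\{\Delta X\neq 0\}\cap\{\widetilde Z=1>Z_{-}\}$ is evanescent, then $X-X^{\tau}$ satisfies NUPBR$(\mathbb G)$.} To prove it, pick $Y\in{\cal L}(\mathbb F,X)$ (which exists by Proposition~\ref{charaterisationofNUPBRloc}); Lemma~\ref{LY} gives ${}^{p,\mathbb F}(Y|\Delta X|)<+\infty$ and ${}^{p,\mathbb F}(Y\Delta X)=0$. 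On $\{\Delta X\neq 0\}$ the disjointness assumption forces $\{\widetilde Z=1\}\subset\{Z_{-}=1\}$, whence $Y\Delta X I_{\{\widetilde Z=1\}}=Y\Delta X I_{\{\widetilde Z=1\}}I_{\{Z_{-}=1\}}$, so ${}^{p,\mathbb F}(Y\Delta X I_{\{\widetilde Z=1\}})=I_{\{Z_{-}=1\}}\,{}^{p,\mathbb F}(Y\Delta X I_{\{\widetilde Z=1\}})$ vanishes on every set $\{1-Z_{-}\geq\delta\}$; therefore, for each $\delta>0$, ${}^{p,\mathbb F}(Y\Delta X I_{\{\widetilde Z<1\}})I_{\{1-Z_{-}\geq\delta\}}=0$ and ${}^{p,\mathbb F}(Y|\Delta X|I_{\{\widetilde Z<1\}})\leq{}^{p,\mathbb F}(Y|\Delta X|)<+\infty$. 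Thus assertion (b) of Theorem~\ref{MainTheoremMultiJumps} holds with this single $Y$ (independent of $\delta$), and consequently $X-X^{\tau}$ satisfies NUPBR$(\mathbb G)$.

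I expect the bulk of the work to be routine once Theorems~\ref{cruciallemma3} and \ref{MainTheoremMultiJumps} are granted; the only step needing genuine care is the derivation above of the after-$\tau$ version of Remark~\ref{remarkformmain4} --- precisely, checking that the predictable projection ${}^{p,\mathbb F}(Y\Delta X I_{\{\widetilde Z=1\}})$ is carried by the predictable set $\{Z_{-}=1\}$ and is hence harmless on each $\{1-Z_{-}\geq\delta\}$. The remaining ingredients (thinness and optionality of $\{\widetilde Z=1>Z_{-}\}$, the accessible/totally inaccessible decomposition of a thin optional set, and the reduction to single predictable jumps) are standard and already appear, mutatis mutandis, in the companion Theorem~\ref{XtauArbitrary}.
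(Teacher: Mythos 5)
Your proof is correct and follows essentially the same route as the paper: for (b)$\Rightarrow$(a) both mimic Part~1 of Theorem~\ref{XtauArbitrary} via Theorem~\ref{cruciallemma3}, and for (a)$\Rightarrow$(b) both reduce to Theorem~\ref{MainTheoremMultiJumps}. The only (inessential) difference is that the paper observes $X^{(1)}=I_{\{1-Z_{-}\geq\delta\}}\centerdot X$ and invokes assertion~(c) of Theorem~\ref{MainTheoremMultiJumps}, whereas you verify assertion~(b) directly by exhibiting a deflator $Y$ and checking (\ref{mainEquationMultiJumps}), which amounts to the same thing.
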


\begin{proof}Suppose that assertion (a) holds, and consider a thin process with predictable jumps, $X$, satisfying NUPBR$(\mathbb F)$. Thus, $\{\Delta X\not=0\}$ is a thin accessible set, and hence $\{\widetilde Z=1>Z_{-}\}\cap\{\Delta X\not=0\}=\emptyset$. Therefore, we conclude that
$$
X^{(1)}:=\sum \Delta X I_{\{ \widetilde Z<1\ \&\ 1-Z_{-}\geq\delta\}}=I_{\{1-Z_{-}\geq \delta\}}\cdot X\ \mbox{satisfies NUPBR}(\mathbb F).$$
Then, a direct application of Theorem \ref{MainTheoremMultiJumps} leads to the NUPBR$(\mathbb G)$ of $X-X^{\tau}$. This proves (a)$\Longrightarrow$(b). To prove the reverse, we remark that the set $\{\widetilde Z=1>Z_{-}\}$ is thin, and we mimic exactly the part 1) of the proof of Theorem \ref{XtauArbitrary}. This ends the proof of theorem.
\end{proof}

\begin{theorem}\label{XtauArbitraryAfterTau1}
The following assertions are equivalent.\\
{\rm{(a)}} The set $\{{\widetilde Z}=1>Z_{-}\}$ is evanescent.\\
{\rm{(b)}} $X-X^{\tau}$ satisfies the NUPBR$(\mathbb G)$ for any $X$ satisfying NUPBR$(\mathbb F)$.
\end{theorem}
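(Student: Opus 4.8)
The plan is to obtain the equivalence by combining two ``building-block'' results --- one for thin $\mathbb F$-processes with predictable jumps and one for $\mathbb F$-quasi-left-continuous processes --- exactly in the spirit of the proof of Theorem~\ref{XtauArbitrary1}. Two structural facts will be used throughout. First, $\{\widetilde Z=1>Z_{-}\}$ is a thin set: it is contained in $\{\Delta m\not=0\}$ since $\widetilde Z=Z_{-}+\Delta m$, and $m$ is c\`adl\`ag. Second, a thin random set is evanescent if and only if it is simultaneously totally inaccessible and accessible; this follows by splitting the set along the accessible/totally-inaccessible decomposition of the countably many stopping times that exhaust it, using that the graph of a totally inaccessible stopping time meets the graph of any predictable stopping time on an evanescent set, and symmetrically for the accessible part, so that each of the two hypotheses forces one of the two pieces to vanish. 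The remaining input is the after-$\tau$ companion of Theorem~\ref{XtauArbitraryafteraTau} proved in \cite{aksamit/choulli/deng/jeanblanc3}, namely: $\{\widetilde Z=1>Z_{-}\}$ is accessible if and only if $X-X^{\tau}$ satisfies NUPBR$(\mathbb G)$ for every $\mathbb F$-quasi-left-continuous $X$ that satisfies NUPBR$(\mathbb F)$.

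For (a)$\Longrightarrow$(b): assume $\{\widetilde Z=1>Z_{-}\}$ is evanescent, hence both totally inaccessible and accessible. Given $X$ satisfying NUPBR$(\mathbb F)$, decompose it as $X=X^{(qc)}+X^{(a)}$ via (\ref{Sdecomposition}); by Lemma~\ref{NUPBRdecomposed} both parts satisfy NUPBR$(\mathbb F)$. Total inaccessibility of the set together with Theorem~\ref{XtauArbitraryafteraTau} applied to $X^{(a)}$ gives that $X^{(a)}-(X^{(a)})^{\tau}$ satisfies NUPBR$(\mathbb G)$; accessibility of the set together with the companion result of \cite{aksamit/choulli/deng/jeanblanc3} applied to $X^{(qc)}$ gives that $X^{(qc)}-(X^{(qc)})^{\tau}$ satisfies NUPBR$(\mathbb G)$. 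Since stopping at $\tau$ commutes with the predictable integrands $I_{\Gamma_X}$ and $I_{\Gamma_X^c}$, one has $X^{(a)}-(X^{(a)})^{\tau}=I_{\Gamma_X}\centerdot(X-X^{\tau})$ and $X^{(qc)}-(X^{(qc)})^{\tau}=I_{\Gamma_X^c}\centerdot(X-X^{\tau})$, and $\Gamma_X$ is $\mathbb G$-predictable because the $\mathbb F$-predictable stopping times exhausting the jumps of $X$ remain $\mathbb G$-predictable. Hence the argument of Lemma~\ref{NUPBRdecomposed}, which in fact applies verbatim with any predictable set in place of the canonical accessible support, carried out now in the filtration $\mathbb G$ with splitting set $\Gamma_X$, recombines the two conclusions into NUPBR$(\mathbb G)$ for $X-X^{\tau}$.

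For (b)$\Longrightarrow$(a): specialising (b) to thin processes with predictable jumps and invoking Theorem~\ref{XtauArbitraryafteraTau} shows that $\{\widetilde Z=1>Z_{-}\}$ is totally inaccessible; specialising (b) to $\mathbb F$-quasi-left-continuous processes and invoking the companion result of \cite{aksamit/choulli/deng/jeanblanc3} shows that it is accessible; by the structural fact above the set is therefore evanescent. I expect the write-up itself to be short, since all the hard analysis is packaged inside the two building blocks --- in particular in the technical implication (a)$\Longrightarrow$(b) of Theorem~\ref{MainTheoremMultiJumps} that underlies Theorem~\ref{XtauArbitraryafteraTau}. The one point that genuinely needs a line of care here is the legitimacy of the recombination step, i.e.\ that the accessible/quasi-left-continuous splitting (\ref{Sdecomposition}) is compatible with passage to $\mathbb G$ and with killing at $\tau$; this is precisely the identity $X^{(a)}-(X^{(a)})^{\tau}=I_{\Gamma_X}\centerdot(X-X^{\tau})$ together with the $\mathbb G$-predictability of $\Gamma_X$ noted above.
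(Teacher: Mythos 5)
Your proposal is correct and follows essentially the same route as the paper, which simply combines Theorem~\ref{XtauArbitraryafteraTau} (the thin--predictable building block) with Proposition~2.18 of \cite{aksamit/choulli/deng/jeanblanc3} (the quasi-left-continuous building block); you have just spelled out the two details the paper leaves implicit, namely that a thin set is evanescent iff it is both accessible and totally inaccessible, and that the recombination across $\Gamma_X$/$\Gamma_X^c$ goes through the Lemma~\ref{NUPBRdecomposed} argument in $\mathbb G$.
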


\begin{proof} The proof follows immediately from the combination of Theorem \ref{XtauArbitraryafteraTau} and Proposition 2.18 in \cite{aksamit/choulli/deng/jeanblanc3}(where the authors prove that the this set $\{\widetilde Z=1>Z_{-}\}$ is accessible if and only if assertion (b) of the theorem above holds for any quasi-left-continuous process $X$ (i.e. $X$ does not jump on predictable stopping times).
\end{proof}

\subsection{Explicit construction of local martingale deflators}\label{SubsectionExplicitDeflator2}
To construct $\mathbb G$-deflators for thin $\mathbb F$-local martingale, we start by illustrating this construction for single jump $\mathbb F$-martingales.

\begin{theorem}\label{cruciallemma2singlejump}  Let $\tau$ be
an honest time.
  Consider an $\mathbb F$-predictable stopping time $T$ and an
  ${\cal F}_T$-measurable r.v. $\xi$ such that
  $E[|\xi|| {\cal F}_{T-}] <+\infty, P$-a.s. Define  $M:= \xi I_{\{Z_{T-}<1\}} I_{\Rbrack T,+\infty\Rbrack}$,
  \begin{eqnarray}\label{probabilityQ'}
    \frac{dQ^{\mathbb F}_T}{dP} &:=& D ^{\mathbb F} := \frac{I_{\{\widetilde{Z}_T <1\ \&\ P(\widetilde{Z}_T<1 | {\cal F}_{T-}) > 0\}}}{ P(\widetilde{Z}_T<1 | {\cal F}_{T-})} + I_{\{P(\widetilde{Z}_T<1 | {\cal F}_{T-}) = 0\}},  \ \mbox{ and }\nonumber \\
    &&\frac{dQ^{\mathbb G}_T}{dP} := D ^{\mathbb G} :=\frac{1 - Z_{T-}}{(1-\widetilde{Z}_T) P(\widetilde{Z}_T<1 | {\cal F}_{T-})} I_{\{T>\tau\}} + I_{\{T\leq \tau\}}.
  \end{eqnarray}
  Then the following assertions are equivalent.\\
  {\rm (a)} $M$ is a $(Q^{\mathbb F}_T,\mathbb F)$-martingale.\\
  {\rm (b)} On $\{ Z_{T-}<1\}$, we have
  \begin{equation}\label{mgaftertau}
  E\left(\xi I_{\{ \widetilde Z_T<1\}}\ \big|\ {\cal F}_{T-}\right)=0,\ \ \ \ P-a.s.\end{equation}
  {\rm (c)} $(M-M^\tau)$ is a  $(Q^{\mathbb G}_T, \mathbb G)$-martingale.
\end{theorem}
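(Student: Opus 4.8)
The plan is to prove the cycle $\mathrm{(a)}\Longleftrightarrow\mathrm{(b)}$ and $\mathrm{(a)}\Longleftrightarrow\mathrm{(c)}$, mirroring the two-step structure of Proposition \ref{cruciallemma1}, but now with the roles of $\{Z_{-}>0\}$ and $\widetilde Z$ replaced by $\{Z_{-}<1\}$ and $1-\widetilde Z$ (this is the ``after $\tau$'' dual of the ``before $\tau$'' picture, and is exactly why the hypothesis $Z_\tau<1$, together with honesty of $\tau$, is imposed at the level of (\ref{mainassumptionontau})). Since $M$ is already an $\mathbb F$-martingale that jumps only at $T$ with $E(\xi\,|\,\mathcal F_{T-})=0$ on $\{T<+\infty\}$, being a $(Q^{\mathbb F}_T,\mathbb F)$-martingale is equivalent to $E^{Q^{\mathbb F}_T}(\xi\,|\,\mathcal F_{T-})=0$ on $\{T<+\infty\}$; everything reduces to computing this conditional expectation.

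\textbf{Step 1: $\mathrm{(a)}\Longleftrightarrow\mathrm{(b)}$.} On $\{Z_{T-}=1\}$ one has $Q^{\mathbb F}_T$-a.s.\ agreement with $P$ (since $\{Z_{T-}=1\}\subset\{\widetilde Z_T=1\}$ forces $P(\widetilde Z_T<1\,|\,\mathcal F_{T-})$ to play no role there), so both (\ref{mgaftertau}) and the martingale property hold automatically; hence it suffices to argue on $\{T<+\infty\ \&\ Z_{T-}<1\}$, where moreover $P(\widetilde Z_T<1\,|\,\mathcal F_{T-})>0$ up to a $P$-null set because $E(1-\widetilde Z_T\,|\,\mathcal F_{T-})=1-Z_{T-}>0$. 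On that set I would compute, using the definition of $D^{\mathbb F}$ in (\ref{probabilityQ'}) and the tower property,
\[
E^{Q^{\mathbb F}_T}(\xi\,|\,\mathcal F_{T-})
=\frac{E\big(\xi I_{\{\widetilde Z_T<1\}}\,\big|\,\mathcal F_{T-}\big)}{P(\widetilde Z_T<1\,|\,\mathcal F_{T-})},
\]
so (a) is equivalent to $E(\xi I_{\{\widetilde Z_T<1\}}\,|\,\mathcal F_{T-})=0$, which is (\ref{mgaftertau}). (One may also rewrite this using $E(\xi\,|\,\mathcal F_{T-})=0$ as $E(\xi I_{\{\widetilde Z_T=1\}}\,|\,\mathcal F_{T-})=0$, the exact analogue of (\ref{zeroequationbeforetau}).)

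\textbf{Step 2: $\mathrm{(a)}\Longleftrightarrow\mathrm{(c)}$.} Here I would use the key inclusion $\{T>\tau\}\subset\{\widetilde Z_T<1\}\subset\{Z_{T-}<1\}$ (valid because $\tau$ is honest, hence $\{T>\tau\}\subset\{\widetilde Z_T<1\}$ by the standard characterization of honest times via $\widetilde Z$, together with $\widetilde Z\le \frac{1}{1-?}$-type monotonicity giving $\{\widetilde Z_T<1\}\subset\{Z_{T-}<1\}$ from $1-Z_{T-}=E(1-\widetilde Z_T\,|\,\mathcal F_{T-})$). On $\{T>\tau\}$, $M-M^\tau=\xi I_{\{Z_{T-}<1\}}I_{\Rbrack T,+\infty\Rbrack}$, so $(M-M^\tau)$ is a $(Q^{\mathbb G}_T,\mathbb G)$-martingale iff $E^{Q^{\mathbb G}_T}(\xi I_{\{T>\tau\}}\,|\,\mathcal G_{T-})=0$; the point is that $M-M^\tau$ vanishes on $\{T\le\tau\}$ where $D^{\mathbb G}=1$. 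Using $\frac{dQ^{\mathbb G}_T}{dP}=\frac{1-Z_{T-}}{(1-\widetilde Z_T)P(\widetilde Z_T<1\,|\,\mathcal F_{T-})}$ on $\{T>\tau\}$, and the ``key lemma''-type identity that for $\mathcal F_{T-}$-conditioning one has $E\big(H I_{\{T>\tau\}}\,|\,\mathcal G_{T-}\big)\,(\text{something})=E\big(H\,(1-\widetilde Z_T)\,|\,\mathcal F_{T-}\big)$ for suitable $\mathcal F_T$-measurable $H$, I would obtain, on $\Gamma_1(T)$,
\[
P(\widetilde Z_T<1\,|\,\mathcal F_{T-})\,E^{Q^{\mathbb G}_T}\big(\xi\,\big|\,\mathcal G_{T-}\big)
=E\Big(\frac{1-Z_{T-}}{1-\widetilde Z_T}\,\xi\,I_{\{T>\tau\}}\,\Big|\,\mathcal G_{T-}\Big)
=E\big(\xi\,I_{\{\widetilde Z_T<1\}}\,\big|\,\mathcal F_{T-}\big)
=P(\widetilde Z_T<1\,|\,\mathcal F_{T-})\,E^{Q^{\mathbb F}_T}\big(\xi\,\big|\,\mathcal F_{T-}\big),
\]
which shows $(M-M^\tau)\in\mathcal M(Q^{\mathbb G}_T,\mathbb G)\Longleftrightarrow M\in\mathcal M(Q^{\mathbb F}_T,\mathbb F)$, completing Step 2 and the theorem. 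This is the exact after-$\tau$ transcription of Step 2 in Proposition \ref{cruciallemma1}.

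\textbf{Main obstacle.} The delicate point is Step 2: justifying the conditional-expectation identity $E\big(\frac{1-Z_{T-}}{1-\widetilde Z_T}\xi I_{\{T>\tau\}}\,\big|\,\mathcal G_{T-}\big)=E\big(\xi I_{\{\widetilde Z_T<1\}}\,\big|\,\mathcal F_{T-}\big)$. This rests on the after-$\tau$ projection formula relating $\mathbb G$-conditioning on $\mathcal G_{T-}$ over $\{T>\tau\}$ to $\mathbb F$-conditioning on $\mathcal F_{T-}$, for which honesty of $\tau$ is essential (it gives $\mathcal G_{T-}\cap\{T>\tau\}$ a tractable description and the ``Azéma supermartingale'' $1-\widetilde Z$ as the relevant conditional weight), and where the hypothesis $Z_\tau<1$ guarantees one is not dividing by zero. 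I would handle it by an optional-section / monotone-class argument reducing to indicator $\xi$'s and invoking the known enlargement formulas (as in \cite{aksamit/choulli/deng/jeanblanc3}); the algebra on $\{Z_{T-}=1\}$ and on $\Omega\setminus\Gamma_1(T)$ is routine and only needs the inclusions listed above.
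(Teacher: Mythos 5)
Your proposal is correct and follows essentially the same route as the paper: Step 1 is the same Bayes-type computation of $E^{Q^{\mathbb F}_T}(\xi\,|\,{\cal F}_{T-})$ through the density $D^{\mathbb F}$ (using $\{\widetilde Z_T<1\}\subset\{Z_{T-}<1\}$ and $E(D^{\mathbb F}\,|\,{\cal F}_{T-})=1$), and Step 2 rests on exactly the honest-time projection identity $E\left(H\,\big|\,{\cal G}_{T-}\right)I_{\{T>\tau\}}=E\left(H(1-\widetilde Z_T)\,\big|\,{\cal F}_{T-}\right)\left(1-Z_{T-}\right)^{-1}I_{\{T>\tau\}}$ that the paper invokes (Proposition 5.3 of Jeulin), the only cosmetic difference being that you close the cycle via (a)$\Longleftrightarrow$(c) rather than (b)$\Longleftrightarrow$(c), and that in the converse direction one should, as in the paper, condition on ${\cal F}_{T-}$ and use $P(\tau<T\,|\,{\cal F}_{T-})=1-Z_{T-}>0$ on $\{Z_{T-}<1\}$ to remove the factor $I_{\{T>\tau\}}$. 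One small caveat: the theorem does not assume $E(\xi\,|\,{\cal F}_{T-})=0$, so your opening claim that $M$ is already a $P$-martingale is unwarranted, but it is never actually used in your computations, so nothing breaks.
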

\begin{proof} For the sake of simplicity, throughout the proof, we put  $Q_1:=Q^{\mathbb F}_T$ and $Q_2:=Q^{\mathbb G}_T$.
  The proof of the theorem will be given in two steps.\\
  {\bf 1)} Here, we prove (a)$\Longleftrightarrow$(b). Thanks to $\{\widetilde Z_T<1\}\subset\{Z_{T-}<1\}$ and $E[D ^{\mathbb F} | {\cal F}_{T-}]=1$ on $\{T<+\infty\}$, we derive
  \begin{eqnarray*}
   E^{Q_1}[\xi I_{\{Z_{T-}<1\}} | {\cal F}_{T-}] &=& E\left[D ^{\mathbb F} \xi I_{\{Z_{T-}<1\}} | {\cal F}_{T-}\right]
   = \frac{E\left[\xi I_{\{\widetilde{Z}_{T}<1\}} | {\cal F}_{T-}\right]}{P(\widetilde{Z}_T<1 | {\cal F}_{T-})}I_{\{Z_{T-}<1\}}. \end{eqnarray*}
   Therefore, (a)$\Longleftrightarrow$(b) follows from combining this equality and the fact that $M$ is a $(Q_1,\mathbb F)$-martingale if and only if
   $E^{Q_1}(M_T\ \big|\ {\cal F}_{T-})I_{\{ T<+\infty\}}=0$.\\
   {\bf 2)} Here, we prove (b)$\Longleftrightarrow$ (c). To this end, we first notice that
$M-M^{\tau}=\xi I_{\{Z_{T-}<1\ \&\ T>\tau\}} I_{\Rbrack\tau,+\infty\Rbrack}$ is a $(Q_2, \mathbb G)$-martingale if and only if
$E^{Q_2}[\xi I_{\{Z_{T-}<1\ \&\ T>\tau\}} | {\cal G}_{T-}]I_{\{ T<+\infty\}} = 0$. Then, using the fact that $E[D ^{\mathbb G} | {\cal G}_{T-}] =1$ on $\{T<+\infty\}$, we get
  \begin{eqnarray}\label{equalities22}
  &&  E^{Q_2}[\xi I_{\{Z_{T-}<1\ \&\ T>\tau\}} | {\cal G}_{T-}] = E\left[D ^{\mathbb G}\xi I_{\{Z_{T-}<1\}} I_{\{T>\tau\}} | {\cal G}_{T-}\right]\nonumber\\
    &&=   E\left[ \frac{\xi I_{\{T>\tau\}}}{1 - \widetilde{Z}_T}
     \Big| {\cal G}_{T-}\right] \frac{1-Z_{T-}}{P(\widetilde{Z}_T<1 | {\cal F}_{T-})} I_{\{Z_{T-}<1\ \&\ T>\tau\}}\nonumber\\
   & &=  \frac{E\left[\xi I_{\{\widetilde{Z}_{T}<1\}} | {\cal F}_{T-}\right]}{P(\widetilde{Z}_T<1 | {\cal F}_{T-})}I_{\{Z_{T-}<1\}}I_{\{T>\tau\}},
  \end{eqnarray}
  where the last equality in (\ref{equalities22}) follows from the fact that, $\tau$ being honest and
   $$
   E\left(H\ \big|\ {\cal G}_{T-}\right)I_{\{T>\tau\}}=E\left(H(1-\widetilde Z_T)\ \big|\ {\cal F}_{T-}\right)\left(1-Z_{T-}\right)^{-1}I_{\{T>\tau\}}.$$
   for any ${\cal F}_T$-measurable random variable $H$ such that the above conditional expectations exist (see Proposition 5.3 of \cite{Jeu}).
   Therefore, if assertion (b) holds, then assertion (c) follows immediately from
   (\ref{equalities22}). Conversely, if assertion (c)
    holds, then  $E^{Q_2}[ \xi I_{\{Z_{T-}<1\}} I_{\{T>\tau\}} | {\cal G}_{T-}]=0$.
     Thus, a combination of this with (\ref{equalities22}) leads to  $E\left[\xi I_{\{\widetilde{Z}_{T}<1\}} | {\cal F}_{T-}\right] (1-Z_{T-}) = 0$.
     This proves assertion (b), and the proof of the theorem is completed.
\end{proof}

 \begin{remark}\label{remarkofexplicitDefthin}
Theorem \ref{cruciallemma2singlejump} can be viewed as continuous-time version of Theorem 4.5 in \cite{choulli/deng}, and it can be generalized easily to the case of a finite number of ordered $\mathbb F$-predictable stopping times on the one hand. On the other hand, when extending this theorem to the case of general thin semimartingales, the main difficulty lies in the fact of finding a positive $\mathbb F$-local martingale, $L$ such that the density of $Q_T^{\mathbb F}$ defined in (\ref{probabilityQ'}) coincides with $L_T$ for any $\mathbb F$-predictable stopping time $T$. This difficulty remains an open problem and we are unable to see how to approach it. In contrast to $Q_T^{\mathbb F}$, the probability $Q^{\mathbb G}_T$ --given also in (\ref{probabilityQ'})-- satisfies $dQ^{\mathbb G}_T/dP={\widetilde L}^{(a)}_T/{\widetilde L}^{(a)}_{T-}$, where ${\widetilde L}^{(a)}$ is a positive $\mathbb G$-local martingale that will be described below.  To this end we need to introduce some notations and recall some results from \cite{aksamit/choulli/deng/jeanblanc3}.
\end{remark}
Throughout the rest of this subsection, we consider the following notations for any $M\in{\cal M}_{loc}(\mathbb F)$
\begin{eqnarray}
&&{\widehat M}^{(a)}:=M-M^{\tau}+(1-Z_{-})^{-1}I_{\Lbrack\tau,+\infty\Rbrack}\cdot \langle m\rangle^{\mathbb F}\in{\cal M}_{loc}(\mathbb G),\label{equa224}\\
&&W^{\mathbb G}:=\sum \ ^{p,\mathbb F}\left(I_{\{\widetilde Z=1\}}\right)I_{\Lbrack\tau,+\infty\Rbrack},\label{equa223}\\
&&K^{(a)}:={{(1-Z_{-})^2(1-\widetilde Z)^{-1}}\over{(1-Z_{-})^2+\Delta\langle m\rangle^{\mathbb F}}}I_{\Lbrack\tau,+\infty\Rbrack}\label{equa224}.
\end{eqnarray}

\noindent In the following, we recall a useful result from \cite{aksamit/choulli/deng/jeanblanc3}.

\begin{proposition}\label{propositionofACDJ2014b} The following assertions hold.\\
(a) The positive process $(1-\Delta W^{\mathbb G})^{-1}$ is $\mathbb G$-locally bounded.\\
(b) The $\mathbb G$-optional process, $K^{(a)}$, is $\widehat m^{(a)}$-integrable (with respect to Definition \ref{OIDefinition}). The resulting integral
\begin{equation}
{\widetilde L}^{(a)}:={\cal E}\left(K^{(a)}(1-\Delta W^{\mathbb G})^{-1}\odot {\widehat m}^{(a)}\right),\label{equa222}
\end{equation}
is a positive $\mathbb bG$-local martingale satisfying $[{\widetilde L}^{(a)}, \widehat M^{(a)}]\in {\cal A}_{loc}(\mathbb G)$.
\end{proposition}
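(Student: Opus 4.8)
The statement is the ``after-$\tau$'' counterpart of Proposition~\ref{ResultoofACDJ1} (which handles stopping at a general random time), and the plan is to reproduce that argument in the present honest-time setting. I would first assemble the following ingredients. From $\widetilde Z=Z_-+\Delta m$, ${}^{p,\mathbb F}(\Delta m)=0$ and $\Delta\langle m\rangle^{\mathbb F}={}^{p,\mathbb F}\big((\Delta m)^2\big)$ one gets ${}^{p,\mathbb F}(1-\widetilde Z)=1-Z_-$ and ${}^{p,\mathbb F}\big((1-\widetilde Z)^2\big)=(1-Z_-)^2+\Delta\langle m\rangle^{\mathbb F}$; the latter being exactly the denominator defining $K^{(a)}$, we may write $K^{(a)}=\frac{(1-Z_-)^2}{{}^{p,\mathbb F}((1-\widetilde Z)^2)}(1-\widetilde Z)^{-1}I_{\Lbrack\tau,+\infty\Rbrack}$, whence $K^{(a)}\le(1-\widetilde Z)^{-1}I_{\Lbrack\tau,+\infty\Rbrack}$ by conditional Jensen. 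Next, the standing assumption $Z_\tau<1$ together with the classical description of the Az\'ema supermartingale of an honest time give that $Z$ and $Z_-$ stay strictly below $1$ on $\Lbrack\tau,+\infty\Rbrack$, and $\widetilde Z<1$ on $\Lbrack\tau,+\infty\Rbrack$ (see \cite{Jeu}); moreover $I_{\Lbrack\tau,+\infty\Rbrack}$ is $\mathbb G$-predictable, since $t\mapsto I_{\{t>\tau\}}$ is left-continuous. I would also use the projection identity ${}^{p,\mathbb G}(H)\,I_{\Lbrack\tau,+\infty\Rbrack}=(1-Z_-)^{-1}\,{}^{p,\mathbb F}\big((1-\widetilde Z)H\big)\,I_{\Lbrack\tau,+\infty\Rbrack}$ valid for $\mathbb F$-optional $H$ --- the computation already invoked in the proof of Theorem~\ref{cruciallemma2singlejump} via Proposition~5.3 of \cite{Jeu} --- the facts that $\widehat m^{(a)}\in\mathcal M_{loc}(\mathbb G)$ with $\Delta\widehat m^{(a)}=\big(\Delta m+(1-Z_-)^{-1}\Delta\langle m\rangle^{\mathbb F}\big)I_{\Lbrack\tau,+\infty\Rbrack}$ and $(\widehat m^{(a)})^c=I_{\Lbrack\tau,+\infty\Rbrack}\centerdot m^c$, and that $m$ has $|\Delta m|\le 1$ (hence is locally bounded), so that $[m,m]\in\mathcal A^+_{loc}(\mathbb F)$ (as noted after Theorem~\ref{MainTheoremMultiJumps}) and $\langle m\rangle^{\mathbb F}$ exists with $\Delta\langle m\rangle^{\mathbb F}\le 1$.

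\emph{Assertion (a).} Off $\Lbrack\tau,+\infty\Rbrack$ one has $\Delta W^{\mathbb G}=0$; on $\Lbrack\tau,+\infty\Rbrack$, linearity of ${}^{p,\mathbb F}$ gives $1-\Delta W^{\mathbb G}={}^{p,\mathbb F}\big(I_{\{\widetilde Z<1\}}\big)$, and from the pointwise inequality $0\le 1-\widetilde Z\le I_{\{\widetilde Z<1\}}$ we obtain $1-\Delta W^{\mathbb G}\ge{}^{p,\mathbb F}(1-\widetilde Z)=1-Z_->0$. Thus $0<1-\Delta W^{\mathbb G}\le 1$ everywhere, and $(1-\Delta W^{\mathbb G})^{-1}$ is dominated by the strictly positive $\mathbb G$-predictable process $\max\big(1,(1-Z_-)^{-1}\big)I_{\Lbrack\tau,+\infty\Rbrack}+I_{\Rbrack 0,\tau\Lbrack}$; a standard localisation after $\tau$ --- exactly as in the proof of Proposition~\ref{ResultoofACDJ1}(b), using $Z_\tau<1$ and the strict inequalities $Z,Z_-<1$ on $\Lbrack\tau,+\infty\Rbrack$ --- shows this bound is $\mathbb G$-locally bounded, proving (a). As a by-product, the conditional Chebyshev estimate $(1-Z_-)^2I_{\{\widetilde Z=1\}}=(\Delta m)^2I_{\{\widetilde Z=1\}}\le(\Delta m)^2$ yields $\Delta W^{\mathbb G}\le\Delta\langle m\rangle^{\mathbb F}(1-Z_-)^{-2}I_{\Lbrack\tau,+\infty\Rbrack}$, so that $W^{\mathbb G}\in\mathcal A^+_{loc}(\mathbb G)$ as well (this auxiliary fact will be used in (b)).

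\emph{Assertion (b).} Put $H:=K^{(a)}(1-\Delta W^{\mathbb G})^{-1}$, a $\mathbb G$-optional process with $H\le(1-\widetilde Z)^{-1}(1-Z_-)^{-1}I_{\Lbrack\tau,+\infty\Rbrack}$, and check the three requirements of Definition~\ref{OIDefinition} for $H$ and $\widehat m^{(a)}$. The first two are routine: on the quasi-left-continuous part $K^{(a)}$ reduces to $(1-Z)^{-1}$, so ${}^{p,\mathbb G}(H)=(1-Z_-)^{-1}$ there, which is $\langle(\widehat m^{(a)})^c\rangle^{\mathbb G}$-integrable by (a) and $[m,m]\in\mathcal A^+_{loc}(\mathbb F)$; and, applying the projection identity with $|\Delta\widehat m^{(a)}|\le|\Delta m|+(1-Z_-)^{-1}\Delta\langle m\rangle^{\mathbb F}$ and ${}^{p,\mathbb F}(|\Delta m|)\le(\Delta\langle m\rangle^{\mathbb F})^{1/2}$, one bounds ${}^{p,\mathbb G}\big(H|\Delta\widehat m^{(a)}|\big)$ by $(1-Z_-)^{-2}\big((\Delta\langle m\rangle^{\mathbb F})^{1/2}+(1-Z_-)^{-1}\Delta\langle m\rangle^{\mathbb F}\big)<+\infty$. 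For the third (square-summability) condition I would exploit the special form of $K^{(a)}$ to collapse the jumps, mimicking equation (B.1) of \cite{aksamit/choulli/deng/jeanblanc3} behind Remark~\ref{remarkfor Ltilde(b)}: substituting $1-\widetilde Z=1-Z_--\Delta m$, using ${}^{p,\mathbb F}(\Delta m)=0$ and the projection identity, one finds
\[
K^{(a)}\Delta\widehat m^{(a)}-{}^{p,\mathbb G}\big(K^{(a)}\Delta\widehat m^{(a)}\big)=\frac{\Delta m}{1-\widetilde Z}\,I_{\Lbrack\tau,+\infty\Rbrack}+\Delta W^{\mathbb G},
\]
so that $H\Delta\widehat m^{(a)}-{}^{p,\mathbb G}(H\Delta\widehat m^{(a)})=(1-\Delta W^{\mathbb G})^{-1}\big(\frac{\Delta m}{1-\widetilde Z}I_{\Lbrack\tau,+\infty\Rbrack}+\Delta W^{\mathbb G}\big)$; since $(1-\Delta W^{\mathbb G})^{-1}$ (and, by the same localisation, $(1-\widetilde Z)^{-1}I_{\Lbrack\tau,+\infty\Rbrack}$) is $\mathbb G$-locally bounded, $\Delta W^{\mathbb G}\le 1$, $W^{\mathbb G}\in\mathcal A^+_{loc}(\mathbb G)$ and $\sum(\Delta m)^2=[m,m]\in\mathcal A^+_{loc}(\mathbb F)$, the square root of $\sum\big(H\Delta\widehat m^{(a)}-{}^{p,\mathbb G}(H\Delta\widehat m^{(a)})\big)^2$ is dominated, after localisation, by a constant multiple of $([m,m])^{1/2}+(W^{\mathbb G})^{1/2}\in\mathcal A^+_{loc}(\mathbb G)$. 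Hence $H$ is $\widehat m^{(a)}$-integrable, $H\odot\widehat m^{(a)}\in\mathcal M_{loc}(\mathbb G)$, and ${\widetilde L}^{(a)}=\mathcal E(H\odot\widehat m^{(a)})\in\mathcal M_{loc}(\mathbb G)$. Reading off the multiplicative jump from the displayed identity, $1+\Delta(H\odot\widehat m^{(a)})=\frac{1-Z_-}{(1-\widetilde Z)(1-\Delta W^{\mathbb G})}$ on $\Lbrack\tau,+\infty\Rbrack$ and $=1$ elsewhere, which is strictly positive by (a) and by $Z_-<1$, $\widetilde Z<1$ on $\Lbrack\tau,+\infty\Rbrack$; therefore ${\widetilde L}^{(a)}>0$. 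Finally $[{\widetilde L}^{(a)},\widehat M^{(a)}]\in\mathcal A_{loc}(\mathbb G)$ follows from $\Delta{\widetilde L}^{(a)}={\widetilde L}^{(a)}_-(1-\Delta W^{\mathbb G})^{-1}\big(\frac{\Delta m}{1-\widetilde Z}I_{\Lbrack\tau,+\infty\Rbrack}+\Delta W^{\mathbb G}\big)$, the identity $\Delta\widehat M^{(a)}=\big(\Delta M+(1-Z_-)^{-1}\Delta\langle M,m\rangle^{\mathbb F}\big)I_{\Lbrack\tau,+\infty\Rbrack}$, the Kunita--Watanabe inequality, and the local integrability of $[m,m]$ and $[M,M]$ together with the $\mathbb G$-local boundedness from (a).

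The only genuinely delicate step is the square-summability condition of Definition~\ref{OIDefinition}: it rests on the algebraic collapse of the jumps of $K^{(a)}\odot\widehat m^{(a)}$ and on the careful accounting of the $W^{\mathbb G}$-term on the graph $\Rbrack\tau\Lbrack$ (the honest-time analogue of the role of $V^{\mathbb G}$ in Proposition~\ref{ResultoofACDJ1}), and it is precisely here that one must invoke the detailed computations of \cite{aksamit/choulli/deng/jeanblanc3}; the remainder is a routine transcription of the ``before-$\tau$'' proof.
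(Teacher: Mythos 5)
The paper does not prove this proposition: the sentence immediately before it reads \emph{``In the following, we recall a useful result from''} \cite{aksamit/choulli/deng/jeanblanc3}, and (like Proposition~\ref{ResultoofACDJ1}) it is stated without proof and deferred entirely to the external reference. There is therefore no in-paper argument against which to compare your attempt; what you have written is a reconstruction of the reference's proof, not a comparison target.

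Taken on its own merits, the outline is the right one: express $K^{(a)}$ through ${}^{p,\mathbb F}(1-\widetilde Z)$ and ${}^{p,\mathbb F}\big((1-\widetilde Z)^2\big)$ and bound it by conditional Jensen, control $1-\Delta W^{\mathbb G}\ge 1-Z_-$ on the post-$\tau$ set, localise using $\mathbb G$-local boundedness of $(1-Z_-)^{-1}$ after $\tau$, and collapse the jumps of $K^{(a)}\odot\widehat m^{(a)}$ algebraically; the multiplicative-jump identity $1+\Delta\big(H\odot\widehat m^{(a)}\big)=(1-Z_-)\big((1-\widetilde Z)(1-\Delta W^{\mathbb G})\big)^{-1}$ on the post-$\tau$ set checks out and is consistent with Remark~\ref{remarkConnections}. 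Two technical slips you should repair before presenting this as a complete proof: (i) $I_{\Lbrack\tau,+\infty\Rbrack}$ is $\mathbb G$-optional but not $\mathbb G$-predictable in general --- what is left-continuous, hence predictable, is $I_{\Rbrack\tau,+\infty\Rbrack}=I_{\{t>\tau\}}$; the Jeulin projection identity and the localisation you invoke must be phrased on $\Rbrack\tau,+\infty\Rbrack$ and the graph $\Rbrack\tau\Lbrack$ then treated separately via the honest-time structure and $Z_\tau<1$. (ii) $\Delta\widehat m^{(a)}$ receives $\Delta m$ only on $\Rbrack\tau,+\infty\Rbrack$ (from $m-m^\tau$), whereas the finite-variation correction lives on $\Lbrack\tau,+\infty\Rbrack$; your displayed formula $\Delta\widehat m^{(a)}=(\Delta m+(1-Z_-)^{-1}\Delta\langle m\rangle^{\mathbb F})I_{\Lbrack\tau,+\infty\Rbrack}$ conflates these. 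Also, ``local integrability of $[M,M]$'' is not available for a general $\mathbb F$-local martingale $M$ (only $[M,M]^{1/2}\in{\cal A}^+_{loc}$); the Kunita--Watanabe step for $[\widetilde L^{(a)},\widehat M^{(a)}]\in{\cal A}_{loc}(\mathbb G)$ needs to be run with that weaker fact. None of these defeat the approach, but they are precisely the details the cited reference has to carry out carefully.
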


\noindent In order to extend Theorem \ref{cruciallemma2singlejump} to the case of general thin semimartingales, we start by connecting the probability $Q^{\mathbb G}_T$ and ${\widetilde L}^{(a)}$ as follows.

\begin{remark}\label{remarkConnections}
Put $L^{\mathbb G}:=K^{(a)}\odot \widehat m^{(a)}$. Then, we derive
 \begin{eqnarray*}
 D^{\mathbb G}(T):&=&{{1-Z_{T-}}\over{1-\widetilde Z_{T}}}{{I_{\{T>\tau\}}}\over{P(\widetilde Z_T<1|{\cal F}_{T-})}}+I_{\{T\leq \tau\}}=
 \left(1+{{\Delta m_T}\over{1-\widetilde Z_T}}\right)I_{\{ T>\tau\}}+I_{\{ T\leq\tau\}}\\
 &=&{{1+\Delta L^{\mathbb G}-\Delta V^{\mathbb G}}\over{1-\Delta V^{\mathbb G}}}=1+\Delta {\widetilde L}^{(a)}={{{\widetilde L^{(a)}}_T}\over{{\widetilde L^{(a)}}_{T-}}}.\end{eqnarray*}
 As a result, assertions (a) and (b) of Theorem \ref{cruciallemma2singlejump} are equivalent to
 \begin{equation}\label{conclusion2}
 {\widetilde L}^{(a)} (M-M^{\tau})\ \mbox{is a}\ \mathbb G\mbox{martingale},\end{equation}
 for any single jump $\mathbb F$-martingale, $M$, with predictable jump time.
\end{remark}

\noindent Now, we are at the stage of extending Theorem \ref{cruciallemma2singlejump} to the general case of thin processes.

\begin{theorem}\label{explicitdeflatorThin4AfterTau}
Let $M$ be a thin $\mathbb F$-local martingale such that
\begin{equation}\label{Condition4AfterTau}
^{p,\mathbb F}\left(\Delta M I_{\{\widetilde Z=1>Z_{-}\}}\right)\equiv 0.\end{equation}
Then, ${\widetilde L}^{(a)}\left(M-M^{\tau}\right)$ is a $\mathbb G$-local martingale.
\end{theorem}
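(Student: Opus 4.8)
The plan is to follow, step by step, the proof of Theorem~\ref{generalThinDeflators}, with ${\widetilde L}^{(a)}$, ${\widehat M}^{(a)}$, $W^{\mathbb G}$, $K^{(a)}$ (and Proposition~\ref{propositionofACDJ2014b}, Remark~\ref{remarkConnections}, Theorem~\ref{cruciallemma2singlejump}) playing the roles of ${\widetilde L}^{(b)}$, ${\widehat m}$, $V^{\mathbb G}$, $K^{\mathbb G}$ (and Proposition~\ref{ResultoofACDJ1}, Remark~\ref{remarkfor Ltilde(b)}, Proposition~\ref{cruciallemma1}). As in that proof, it suffices to exhibit a $\mathbb G$-predictable $\varphi$ with $0<\varphi\le 1$ such that ${\widetilde L}^{(a)}(\varphi\is(M-M^{\tau}))$ is a $\mathbb G$-local martingale, i.e.\ ${\widetilde L}^{(a)}\in{\cal L}(M-M^{\tau},\mathbb G)$: indeed, once $[{\widetilde L}^{(a)},M-M^{\tau}]\in{\cal A}_{loc}(\mathbb G)$ is known, Proposition~3.3 and Corollary~3.5 of \cite{anselstricker1994} upgrade this to the assertion of the theorem. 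The local integrability of $[{\widetilde L}^{(a)},M-M^{\tau}]$ is deduced from Proposition~\ref{propositionofACDJ2014b}(b) (which gives $[{\widetilde L}^{(a)},{\widehat M}^{(a)}]\in{\cal A}_{loc}(\mathbb G)$), since $M-M^{\tau}$ and ${\widehat M}^{(a)}$ differ only by the $\mathbb G$-predictable finite-variation process $(1-Z_{-})^{-1}I_{\Lbrack\tau,+\infty\Rbrack}\is\langle m\rangle^{\mathbb F}$.

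Let $(T_n)_{n\ge 1}$ be the sequence of $\mathbb F$-predictable stopping times exhausting the jumps of $M$. From $[{\widetilde L}^{(a)},M-M^{\tau}]\in{\cal A}_{loc}(\mathbb G)$ we get ${}^{p,\mathbb G}\left({\widetilde L}^{(a)}\vert\Delta(M-M^{\tau})\vert\right)<+\infty$, and I would set
$$\varphi:=\left[1+{}^{p,\mathbb G}\left(\vert\Delta(M-M^{\tau})\vert\right)+{}^{p,\mathbb G}\left({\widetilde L}^{(a)}\vert\Delta(M-M^{\tau})\vert\right)\right]^{-1}\left[I_{\Omega\setminus(\cup_n\Rbrack T_n\Lbrack)}+\sum_{n\ge 1}2^{-n}I_{\Rbrack T_n\Lbrack}\right].$$
Then $0<\varphi\le 1$, and both $\varphi\is(M-M^{\tau})$ and ${\widetilde L}^{(a)}_{-}\varphi\is(M-M^{\tau})+[{\widetilde L}^{(a)},\varphi\is(M-M^{\tau})]=\sum{\widetilde L}^{(a)}\varphi\Delta(M-M^{\tau})$ have integrable variation; since the latter jumps only on the predictable graphs $\Rbrack T_n\Lbrack$, its $\mathbb G$-compensator equals $\sum\varphi\,{}^{p,\mathbb G}\left({\widetilde L}^{(a)}\Delta(M-M^{\tau})\right)$. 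Thus everything reduces to showing that ${}^{p,\mathbb G}\left({\widetilde L}^{(a)}\Delta(M-M^{\tau})\right)\equiv 0$.

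To prove this, fix $n$ and put $N:=\Delta M_{T_n}I_{\{Z_{T_n-}<1\}}I_{\Rbrack T_n,+\infty\Rbrack}$. Because $\{Z_{T_n-}<1\}\in{\cal F}_{T_n-}$ and $E(\Delta M_{T_n}\,\vert\,{\cal F}_{T_n-})=0$ at the predictable time $T_n$, the process $N$ is a single-jump $\mathbb F$-martingale; moreover, using $\{T_n>\tau\}\subset\{\widetilde Z_{T_n}<1\}$ (valid under (\ref{mainassumptionontau}), cf.\ the proof of Theorem~\ref{cruciallemma3}) and $\{\widetilde Z_{T_n}<1\}\subset\{Z_{T_n-}<1\}$ (since $Z_{T_n-}=E(\widetilde Z_{T_n}\,\vert\,{\cal F}_{T_n-})$ and $\widetilde Z\le 1$), one checks $N-N^{\tau}=\Delta M_{T_n}I_{\{T_n>\tau\}}I_{\Rbrack T_n,+\infty\Rbrack}$, which is exactly the contribution of the $n$-th jump to $M-M^{\tau}$. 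On the other hand, (\ref{Condition4AfterTau}) reads $E(\Delta M_{T_n}I_{\{\widetilde Z_{T_n}=1\}}\,\vert\,{\cal F}_{T_n-})=0$ on $\{Z_{T_n-}<1\}$, and combined with $E(\Delta M_{T_n}\,\vert\,{\cal F}_{T_n-})=0$ this gives $E(\Delta M_{T_n}I_{\{\widetilde Z_{T_n}<1\}}\,\vert\,{\cal F}_{T_n-})=0$ on $\{Z_{T_n-}<1\}$, which is precisely assertion (b) of Theorem~\ref{cruciallemma2singlejump} for $N$. Hence, by Theorem~\ref{cruciallemma2singlejump} and Remark~\ref{remarkConnections}, ${\widetilde L}^{(a)}(N-N^{\tau})$ is a $\mathbb G$-martingale; subtracting the $\mathbb G$-local martingales ${\widetilde L}^{(a)}_{-}\is(N-N^{\tau})$ and $(N-N^{\tau})_{-}\is{\widetilde L}^{(a)}$ shows that $[{\widetilde L}^{(a)},N-N^{\tau}]=\sum{\widetilde L}^{(a)}\Delta(N-N^{\tau})$ is a $\mathbb G$-local martingale, so its $\mathbb G$-compensator vanishes and ${}^{p,\mathbb G}\left({\widetilde L}^{(a)}\Delta(M-M^{\tau})\right)$ is null on $\Rbrack T_n\Lbrack$. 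Summing over $n$ gives ${}^{p,\mathbb G}\left({\widetilde L}^{(a)}\Delta(M-M^{\tau})\right)\equiv 0$; consequently ${\widetilde L}^{(a)}_{-}\varphi\is(M-M^{\tau})+[{\widetilde L}^{(a)},\varphi\is(M-M^{\tau})]$, equivalently ${\widetilde L}^{(a)}(\varphi\is(M-M^{\tau}))$, is a $\mathbb G$-local martingale, which finishes the proof.

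I expect the main obstacle to be the reduction step, namely verifying $[{\widetilde L}^{(a)},M-M^{\tau}]\in{\cal A}_{loc}(\mathbb G)$: this requires controlling the bracket of ${\widetilde L}^{(a)}$ with the predictable finite-variation correction $(1-Z_{-})^{-1}I_{\Lbrack\tau,+\infty\Rbrack}\is\langle m\rangle^{\mathbb F}$ (using that $\tau$ is honest with $Z_{\tau}<1$, that $(1-\Delta W^{\mathbb G})^{-1}$ is $\mathbb G$-locally bounded by Proposition~\ref{propositionofACDJ2014b}(a), and the comparability of $\Delta\langle m\rangle^{\mathbb F}$ with $(1-Z_{-})^{2}$ visible in the definition of $K^{(a)}$), together with the bookkeeping in the last step that upgrades the single-jump equivalence of Theorem~\ref{cruciallemma2singlejump} to the vanishing of ${}^{p,\mathbb G}\left({\widetilde L}^{(a)}\Delta(M-M^{\tau})\right)$ on all of $\cup_n\Rbrack T_n\Lbrack$.
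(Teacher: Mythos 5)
Your proposal is correct and follows essentially the same route as the paper's proof: reduce, via Ansel--Stricker and a $\mathbb G$-predictable weight built from the $2^{-n}$-weights and $^{p,\mathbb G}\left({\widetilde L}^{(a)}\vert\Delta (M-M^{\tau})\vert\right)$, to the vanishing of the compensator $\sum\varphi\,^{p,\mathbb G}\left({\widetilde L}^{(a)}\Delta(M-M^{\tau})\right)$, which is then obtained jump by jump from Theorem \ref{cruciallemma2singlejump} and Remark \ref{remarkConnections}; in fact you make explicit the passage from (\ref{Condition4AfterTau}) to assertion (b) of Theorem \ref{cruciallemma2singlejump}, which the paper leaves implicit behind the citation of (\ref{conclusion2}). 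One minor slip to repair: ${\widetilde L}^{(a)}_{-}\is(N-N^{\tau})$ need not be a $\mathbb G$-local martingale (that is the whole point of the enlargement drift) and $[{\widetilde L}^{(a)},N-N^{\tau}]=\sum\Delta{\widetilde L}^{(a)}\Delta(N-N^{\tau})$, but since $N-N^{\tau}$ is a pure-jump thin process one has $\sum{\widetilde L}^{(a)}\Delta(N-N^{\tau})={\widetilde L}^{(a)}(N-N^{\tau})-(N-N^{\tau})_{-}\is{\widetilde L}^{(a)}$, so subtracting only the genuine local martingale $(N-N^{\tau})_{-}\is{\widetilde L}^{(a)}$ gives exactly the conclusion you need.
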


\begin{proof}
Thanks to It\^o's formula, it is immediate that ${\widetilde L}^{(a)}\left(M-M^{\tau}\right)$ is a $\mathbb G$-local martingale if and only if \begin{equation}\label{deflatorthin1}
X^{\mathbb G}:=M-M^{\tau}+[{\widetilde L}^{(a)},M-M^{\tau}]\end{equation}
 is a $\mathbb G$-local martingale. Since $X^{\mathbb G}$ is a $\mathbb G$-special semimartingale, hence it is enough to prove that $X^{\mathbb G}$ is a $\sigma$-martingale under $\mathbb G$. To prove this latter fact, thanks to Proposition 3.3 and Corollary 3.5 of \cite{anselstricker1994}, it is enough to prove that $\Phi\cdot X^{\mathbb G}$ is $\mathbb G$-local martingale for some $\mathbb G$-predictable process $\Phi$ such that $0<\Phi\leq 1$. Since $M$ is a thin process with predictable jump times only that we denote by $(T_n)_{n\geq 1}$, we get
 $$
 X^{\mathbb G}=\sum {\widetilde L}^{(a)}\Delta M I_{\Lbrack\tau,+\infty\Rbrack},$$
 and jumps on the sequence of stopping times $(T_n)_{n\geq 1}$ only on the one hand. On the other hand, due to Proposition \ref{propositionofACDJ2014b} (assertion (b)), we have $^{p,\mathbb G}({\widetilde L}^{(a)}\vert \Delta M\vert)I_{\Lbrack\tau,+\infty\Rbrack}<+\infty$, and hence the $\mathbb G$-predictable process
  $$
  \Phi:=\left[\sum I_{\Rbrack T_n\Lbrack} 2^{-n} +I_{\Omega\setminus(\cup_n\Rbrack T_n\Lbrack)}\right]\left(1+\ ^{p,\mathbb G}({\widetilde L}^{(a)}\vert \Delta M\vert)I_{\Lbrack\tau,+\infty\Rbrack}\right)^{-1},$$ satisfies $0<\Phi\leq 1$, $\Phi\cdot X^{\mathbb G}\in {\cal A}(\mathbb G)$, and its $\mathbb G$-compensator is given by
  $$
  (X^{\mathbb G})^{p,\mathbb G}=\sum_n \Phi\ ^{p,\mathbb G}({\widetilde L}^{(a)}\Delta M^{(n)})I_{\Lbrack\tau,+\infty\Rbrack}=0.$$
  Here $M^{(n)}:=\Delta M_{T_n}I_{\Rbrack T_n,+\infty\Rbrack}$, while the last equality follows from (\ref{conclusion2}) of Remark \ref{remarkConnections}. This proves that $\Phi\cdot X^{\mathbb G}$ is a $\mathbb G$-local martingale, and the proof of the theorem is completed.
   \end{proof}

\begin{corollary} a) If $M$ be a thin $\mathbb F$-local martingale such that $\{\Delta M\not=0\}\cap\{\widetilde Z=1>Z_{-}\}=\emptyset$, then ${\widetilde L}^{(a)}(M-M^{\tau})$ is a $\mathbb G$-local martingale.\\
b) Suppose that $S$ is thin, $\{\Delta S\not=0\}\cap\{\widetilde Z=1>Z_{-}\}=\emptyset$, and $S$ satisfies the NUPBR$(\mathbb F)$. Then $S-S^{\tau}$ satisfies the NUPBR$(\mathbb G)$.
\end{corollary}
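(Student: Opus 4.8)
My plan is to obtain (a) directly from Theorem \ref{explicitdeflatorThin4AfterTau} and to reduce (b) to (a) through the $\sigma$-martingale-density characterisation of NUPBR. Part (a) is immediate: under $\{\Delta M\neq0\}\cap\{\widetilde Z=1>Z_-\}=\emptyset$ one has $\Delta M\,I_{\{\widetilde Z=1>Z_-\}}\equiv0$, hence ${}^{p,\mathbb F}\bigl(\Delta M\,I_{\{\widetilde Z=1>Z_-\}}\bigr)\equiv0$, which is exactly hypothesis (\ref{Condition4AfterTau}); Theorem \ref{explicitdeflatorThin4AfterTau} then gives $\widetilde L^{(a)}(M-M^\tau)\in\mathcal M_{loc}(\mathbb G)$, and since $\widetilde L^{(a)}>0$, Proposition \ref{charaterisationofNUPBRloc} shows $M-M^\tau$ satisfies NUPBR$(\mathbb G)$.

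For part (b), Proposition \ref{charaterisationofNUPBRloc} first provides $Y\in\mathcal L(\mathbb F,S)$, i.e. a positive $\mathbb F$-local martingale $Y$ and an $\mathbb F$-predictable $\theta$, $0<\theta\leq1$, with $Y(\theta\is S)\in\mathcal M_{loc}(\mathbb F)$. Since $\theta\is S$ is again thin, inherits $\{\Delta(\theta\is S)\neq0\}\subset\{\Delta S\neq0\}$, and satisfies $(\theta\is S)-(\theta\is S)^\tau=\theta\is(S-S^\tau)$, Proposition \ref{NUPBRLocalization} lets me replace $S$ by $\theta\is S$; so I may assume $YS\in\mathcal M_{loc}(\mathbb F)$ with $Y>0$. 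Now set $M:=Y_-\is S+[Y,S]=YS-Y_0S_0-S_-\is Y$: as $S_-$ is locally bounded this is an $\mathbb F$-local martingale, and as $S$ is thin (so $S^c=0$) the process $Y_-\is S$ has no continuous part and $[Y,S]=\sum\Delta Y\Delta S$ is pure jump, whence $M$ is a thin $\mathbb F$-local martingale with $\Delta M=Y\Delta S$, so $\{\Delta M\neq0\}=\{\Delta S\neq0\}$, disjoint from $\{\widetilde Z=1>Z_-\}$. Part (a) applied to $M$ gives $\widetilde L^{(a)}(M-M^\tau)\in\mathcal M_{loc}(\mathbb G)$, and integration by parts together with stopping at $\tau$ identifies
\[
M-M^\tau=Y_-\is(S-S^\tau)+[Y,S-S^\tau]=Y(S-S^\tau)-(S-S^\tau)_-\is Y .
\]

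It remains to turn this into a $\mathbb G$-$\sigma$-martingale density for $S-S^\tau$. The idea is that $Y$, being a positive $\mathbb F$-local martingale, admits — after the honest time $\tau$, using $Z_\tau<1$ — a positive $\mathbb G$-local martingale lift $Y^{\mathbb G}$, built from the $\mathbb G$-semimartingale decompositions of $Y^\tau$ and $Y-Y^\tau$ recalled in Sections \ref{SectionBeforeTau} and \ref{SectionAfterTau}; one then checks that $\widetilde Y:=\widetilde L^{(a)}Y^{\mathbb G}>0$ and that, for a suitable $\mathbb G$-predictable $\widetilde\theta$ with $0<\widetilde\theta\leq1$, $\widetilde Y\bigl(\widetilde\theta\is(S-S^\tau)\bigr)\in\mathcal M_{loc}(\mathbb G)$, whence NUPBR$(\mathbb G)$ for $S-S^\tau$ by Proposition \ref{charaterisationofNUPBRloc}. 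I expect this last step to be the main obstacle: one must reconcile the jumps of $Y^{\mathbb G}$ with those of $Y$ and verify that the $\mathbb G$-compensator of the resulting jump process $\sum\widetilde Y\widetilde\theta\,\Delta(S-S^\tau)$ vanishes, and here one uses decisively that $\Delta S\equiv0$ on $\{\widetilde Z=1>Z_-\}$, so that the correction terms the enlargement injects into $Y^{\mathbb G}$ — which are carried by the jumps of $m$ — never interact with $\Delta S$. Operationally one combines ${}^{p,\mathbb G}\bigl(\widetilde L^{(a)}|\Delta(M-M^\tau)|\bigr)<\infty$ and ${}^{p,\mathbb G}\bigl(\widetilde L^{(a)}\Delta(M-M^\tau)\bigr)=0$, both consequences of $\widetilde L^{(a)}(M-M^\tau)\in\mathcal M_{loc}(\mathbb G)$, arguing as in the proof of Theorem \ref{explicitdeflatorThin4AfterTau}.

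A self-contained alternative avoiding this transfer is to split $S=S^{(a)}+S^{(qc)}$ as in (\ref{Sdecomposition}); both parts are thin, inherit the jump condition, and by Lemma \ref{NUPBRdecomposed} satisfy NUPBR$(\mathbb F)$. For $S^{(a)}$ one argues as in Remark \ref{remarkformmain4}: with $Y\in\mathcal L(\mathbb F,S^{(a)})$, Lemma \ref{LY} gives ${}^{p,\mathbb F}(Y|\Delta S^{(a)}|)<\infty$ and ${}^{p,\mathbb F}(Y\Delta S^{(a)})=0$, and the jump condition forces, on $\{1-Z_-\geq\delta\}$, $Y\Delta S^{(a)}I_{\{\widetilde Z<1\}}=Y\Delta S^{(a)}$, so condition (b) of Theorem \ref{MainTheoremMultiJumps} holds and $S^{(a)}-(S^{(a)})^\tau$ satisfies NUPBR$(\mathbb G)$. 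For $S^{(qc)}$ one invokes the corresponding result of \cite{aksamit/choulli/deng/jeanblanc3}. Finally Lemma \ref{NUPBRdecomposed}, applied in $\mathbb G$ along the $\mathbb F$-predictable (hence $\mathbb G$-predictable) set $\Gamma_S$, glues these into NUPBR$(\mathbb G)$ for $S-S^\tau$; in this route the only non-self-contained point is the quasi-left-continuous part, handled in \cite{aksamit/choulli/deng/jeanblanc3}.
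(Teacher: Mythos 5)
Your treatment of part (a) is exactly the paper's: the hypothesis $\{\Delta M\not=0\}\cap\{\widetilde Z=1>Z_{-}\}=\emptyset$ makes $\Delta M\,I_{\{\widetilde Z=1>Z_-\}}\equiv 0$, hence (\ref{Condition4AfterTau}) holds trivially and Theorem \ref{explicitdeflatorThin4AfterTau} applies. No issues there.

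For part (b), your first route has a genuine gap, and you correctly flag it yourself. Setting $M:=Y_{-}\is S+[Y,S]$ and applying (a) does give $\widetilde L^{(a)}(M-M^\tau)\in\mathcal M_{loc}(\mathbb G)$, but $M-M^\tau=Y_{-}\is(S-S^\tau)+[Y,S-S^\tau]$ is \emph{not} of the form $\psi\is(S-S^\tau)$ for a $\mathbb G$-predictable $\psi$ (the bracket term carries the optional factor $\Delta Y$), so knowing NUPBR$(\mathbb G)$ for $M-M^\tau$ does not by itself give NUPBR$(\mathbb G)$ for $S-S^\tau$. The repair you sketch — multiply $\widetilde L^{(a)}$ by a ``$\mathbb G$-lift'' $Y^{\mathbb G}$ of $Y$ — is not a routine matter: $Y$ (or even $Y-Y^\tau$) need not be a $\mathbb G$-local martingale, and even if one constructs a positive $\mathbb G$-local martingale $Y^{\mathbb G}$ tracking $Y$ after $\tau$, the product $\widetilde L^{(a)}Y^{\mathbb G}$ picks up a nonzero predictable covariation and is not in general a $\mathbb G$-local martingale. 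Nothing in the paper supplies these facts, and you do not prove them, so as written this route does not close.

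The paper avoids this difficulty entirely by working under a changed measure rather than with the product process. Since $S$ satisfies NUPBR$(\mathbb F)$, there is a predictable $\phi\in(0,1]$, an increasing sequence of $\mathbb F$-stopping times $(T_n)$, and probabilities $Q_n\sim P$ (on ${\cal F}_{T_n}$, obtained by localizing the deflator $Y$) such that $\phi\is S^{T_n}\in\mathcal M_{loc}(Q_n,\mathbb F)$. The crucial observation — which your argument does not use — is the invariance $\{\widetilde Z^Q=1\}=\{\widetilde Z=1\}$ for every $Q\sim P$ (equation (\ref{ZtildeforQandP})). This shows the hypothesis $\{\Delta(\phi\is S^{T_n})\not=0\}\cap\{\widetilde Z^{Q_n}=1>Z^{Q_n}_-\}=\emptyset$ is preserved under $Q_n$, so Theorem \ref{explicitdeflatorThin4AfterTau} applies directly under $Q_n$ and produces a genuine $\sigma$-martingale density (under $Q_n$) for $\phi\is(S^{T_n}-S^{T_n\wedge\tau})$. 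One then glues across $n$ with Proposition \ref{NUPBRLocalization}. This is both shorter and avoids any product-of-deflators construction.

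Your ``self-contained alternative'' — split $S=S^{(a)}+S^{(qc)}$, handle the accessible part via Lemma \ref{LY} and Theorem \ref{MainTheoremMultiJumps}, the quasi-left-continuous part via \cite{aksamit/choulli/deng/jeanblanc3}, then glue by (the general form of the converse implication in) Lemma \ref{NUPBRdecomposed} — is logically sound and would give the result, but it is a materially different route: it invokes the much stronger characterisation Theorem \ref{MainTheoremMultiJumps} and an external result, whereas the paper's proof only needs Theorem \ref{explicitdeflatorThin4AfterTau} (i.e.\ part (a)), a localization, and the $Q$-invariance of $\{\widetilde Z=1\}$.
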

\begin{proof} Since $S$ satisfies the NUPBR$(\mathbb F)$, then there exist an $\mathbb F$-predictable process $\phi$, a sequence of $\mathbb F$-stopping times $(T_n)_{n\geq 1}$ that increases to infinity, and a probability measure $Q_n\sim P$ on $(\Omega, {\cal F}_{T_n})$ such that
 $$
 0<\phi\leq 1,\ \ \ \ \phi\is S^{T_n}\in {\cal M}_{0,loc}(Q_n,\mathbb F).$$
Recall that for any $Q\sim P$, $\{\widetilde Z=1\}=\{\widetilde Z^Q=1\}$ where $\widetilde Z^Q_t:=Q(\tau\geq t|{\cal F}_t)$. Thus, a combination of this fact with $\{\Delta S\not=0\}\cap\{\widetilde Z=1>Z_{-}\}=\emptyset$ leads to
$$
\{\Delta (\phi\is S^{T_n})\not=0\}\cap\{\widetilde Z^{Q_n}=1>Z^{Q_n}_{-}\}=\emptyset.$$
Therefore, by applying directly Theorem \ref{explicitdeflatorThin4AfterTau} to $\phi\is S^{T_n}$ under $Q_n$, we conclude that $\phi\is S^{T_n}-(\phi\is S^{T_n})^{\tau}$ (or equivalently $S^{T_n}-S^{T_n\wedge\tau}$) satisfies the  NUPBR$(\mathbb G, Q_n)$. Hence, the corollary follows immediately from Proposition \ref{NUPBRLocalization}. This ends the proof of the corollary.
\end{proof}
\section{Proofs of Theorems \ref{main3} and \ref{cruciallemma2}}\label{proofs}

In this section, we prove Theorems \ref{main3} and \ref{cruciallemma2}. These  proofs are not technical, but are long instead.

\subsection{Proof of Theorem \ref{main3}} \label{proofmain3}
The proof is achieved in four steps, where we prove ${\rm{(c)}}\Longleftrightarrow$(d), (d)$ \Longleftrightarrow {\rm{(b)}}$, (a)$\Longrightarrow {\rm{(c)}}$, and ${\rm{(b)}}\Longrightarrow {\rm{(a)}}$ respectively.\\
\noindent{\bf Step 1:} In this step, we  prove
${\rm{(c)}}\Longleftrightarrow {\rm{(d)}}$. Since $S$ is a single
jump process with predictable jump time $T$, then it is easy to
see that $S$ satisfies the NUPBR$(R)$, for some probability $R$, is
equivalent to the fact that $I_A S$ and $I_{A^c} S$ satisfies
NUPBR$(R)$ for any ${\cal F}_{T-}$-measurable event $A$. Hence, it
is enough to prove the equivalence between assertions \rm{(d)}
and \rm{(c)} separately on the events $\{Z_{T-} = 0 \}$ and
$\{Z_{T-} > 0 \}$. Since $\{Z_{T-}=0\}\subset\{\widetilde Z_T=0\}$
and $E(\widetilde Z_T|{\cal F}_{T-})=Z_{T-}$ on $\{T<+\infty\}$,
by putting $\Gamma_0:=\Bigl\{ P(\widetilde{Z}_T > 0 \big| {\cal
F}_{T-}) =0\ \&\ T<+\infty\Bigr\}$, we derive
$$E\left(Z_{T-}I_{\Gamma_0\cap\{T<+\infty\}}\right)=E\left({\widetilde
Z}_{T}I_{\Gamma_0\cap\{T<+\infty\}}\right)=0,$$ and
  $$\begin{array}{lll}
  0=P\left(\{Z_{T-}=0\}\cap\{\widetilde Z_{T}>0\}\cap\{T<+\infty\}\right)\\
  \\
  \hskip 0.35cm=E\left(I_{\{ Z_{T-}=0\}\cap\{T<+\infty\}}P\left(\widetilde Z_T>0|{\cal F}_{T-}\right)\right).\end{array}
  $$
These equalities imply that on $\{T<+\infty\}$, $P-a.s.$, we
have
  \begin{equation}\label{equalitybetweensets}
  \{Z_{T-} = 0 \}=\Gamma_0   \subset \{\widetilde{Z}_T =0\}.
  \end{equation}
  Thus, on the set $\{T<+\infty\}\cap\Gamma_0$, the three probabilities $P$, $Q_T$ and $\widetilde{Q}_T$ coincide, and the equivalence between assertions \rm{(c)}
   and \rm{(d)} is obvious. On the set $\{T<+\infty\ \&\  P[\widetilde{Z}_T >0 | {\cal F}_{T-}] >0\}$, one has $\widetilde{Q}_T \sim Q_T$, and the equivalence between \rm{(c)} and \rm{(d)} is also obvious. This achieves this first step.\\
   \noindent{\bf Step 2:} This step proves
\rm{(d)}$\Longleftrightarrow$ \rm{(b)}. Thanks to $\{Z_{T-}=0\}\subset\{\widetilde Z_T=0\}$, we deduce that on $\{ Z_{T-}=0\}$,
$\widetilde S\equiv S\equiv 0$ and $Q_T$ coincides with $P$ as
well. Hence, the equivalence between assertions \rm{(d)} and
\rm{(b)} is obvious for this case. Thus, it is enough to prove the
equivalence between these assertions on $\{T<+\infty\ \&\
P(\widetilde{Z}_T >0 | {\cal F}_{T-}) >0\}$.\\ Assume that
\rm{(d)} holds. Then, there exists   an ${\cal F}_T$-measurable
random variable, $Y$, such that $Y>0$ $Q_T-a.s.$  and on
$\{T<+\infty\}$, we have
 $$E^{Q_T}(Y| {\cal F}_{T-}) =1,\ \ E^{Q_T}(Y|\xi| | {\cal F}_{T-}) <+\infty,\ \ \&\  \ \ E^{Q_T}(Y\xi I_{\{\widetilde{Z}_T >0\}} | {\cal F}_{T-}) =0 .$$
 Since $Y>0$ on $\{\widetilde Z_T>0\}$, by putting
  $$
  Y_1 := YI_{\{\widetilde{Z}_T >0\}} + I_{\{\widetilde{Z}_T =0\}} \ \ \mbox{and}\ \ \widetilde{Y}_1 := \frac{Y_1}{E[Y_1 | {\cal F}_{T-}]} ,
  $$
  it is easy to check that $Y_1>0$, $\widetilde Y_1>0$,
  \begin{eqnarray*}
    E\left[\widetilde{Y}_1 | {\cal F}_{T-}\right] = 1 \mbox{ and } E\left[\widetilde{Y}_1 \xi I_{\{\widetilde{Z}_T >0\}} | {\cal F}_{T-}\right] = \frac{E\left[Y \xi I_{\{\widetilde{Z}_T >0\}} | {\cal F}_{T-}\right]}{E[Y_1 | {\cal F}_{T-}]} = 0.
  \end{eqnarray*}
   Therefore, $\widetilde{S}$ is  a martingale under $R:=\widetilde Y_1\cdot P\sim P$, and hence $\widetilde S$ satisfies NUPBR$(\mathbb F)$.
   This ends the proof of (a)$\Rightarrow$(b). To prove the reverse sense, we suppose  that assertion
   \rm{(b)} holds. Then, there exists $0<Y\in L^0({\cal F}_T)$,
   such that $E[Y |\xi| I_{\{\widetilde{Z}_T >0\}} | {\cal F}_{T-}] <+\infty$,
    $E[Y| {\cal F}_{T-}] =1$ and
    $ E[Y\xi I_{\{\widetilde{Z}_T >0\}} | {\cal F}_{T-}] = 0$ on $\{Z_{T-}>0\}$.
    Then, consider
    $$ Y_2 :=
\frac{Y I_{\{\widetilde{Z}_T >0\}}
    P(\widetilde{Z}_T >0 | {\cal F}_{T-})}{E[Y I_{\{\widetilde{Z}_T >0\}} | {\cal F}_{T-}]}+
    I_{\{\widetilde{Z}_T=0\}}
  $$
  Then it is easy to verify that $Y_2>0\ \ Q_T-a.s.$,
  $$
    E^{Q_T}\left(Y_2 | {\cal F}_{T-}\right) = 1, \ \ \ \ \ \mbox{and}\ \ \ \ \
    E^{Q_T}\left(Y_2 \xi I_{\{Z_{T-}>0\}}| {\cal F}_{T-}\right) = \frac{E\left[Y \xi I_{\{\widetilde{Z}_T >0\}} | {\cal F}_{T-}\right]}{E[Y I_{\{\widetilde{Z}_T >0\}} | {\cal F}_{T-}]} =0.
  $$
  This proves assertion \rm{(d)}, and the proof of \rm{(d)}$\Longleftrightarrow$\rm{(b)} is achieved.\\
\noindent{\bf Step 3:} Herein, we prove \rm{(a)} $\Rightarrow$
\rm{(c)}. Suppose that $S^{\tau}$ satisfies NUPBR$(\mathbb G)$.
Then there exists a positive ${\cal G}_T$-measurable random
variable $Y^{\mathbb G}$ such that $E[\xi Y^{\mathbb G} I_{\{T\leq
\tau\}} | {\cal G}_{T-}] = 0$ on $\{T<+\infty\}$. Due to Lemma
\ref{lemma:predsetFG}--\rm{(a)}, we deduce the existence of two positive ${\cal F}_T$-measurable variables $Y^{\mathbb F}_1 $  and $Y^{\mathbb F}_2 $ such
that $Y^{\mathbb G} I_{\{T\leq \tau\}} = Y^{\mathbb F}_1 I_{\{T<
\tau\}}+Y^{\mathbb F}_2 I_{\{T=\tau\}} $. Then, on $\{T<+\infty\}$, we obtain
  \begin{eqnarray*}
    0&=& E[\xi Y^{\mathbb G} I_{\{T \leq \tau\}} | {\cal G}_{T-}] = E[\xi (Y^{\mathbb F}_1 Z_T+(Z_T-\widetilde{Z}_T)Y^{\mathbb F}_2 | {\cal F}_{T-}]  \frac{I_{\{T\leq \tau\}}}{Z_{T-}}.
  \end{eqnarray*}
  Therefore, by taking conditional expectation in the above equality and putting $${\widetilde Y}:= Y^{\mathbb F}_1 {{Z_T}\over{\widetilde{Z}_T}}I_{\{\widetilde Z_T>0\}}+({{Z_T}\over{\widetilde{Z}_T}}-1)I_{\{\widetilde Z_T>0\}}Y^{\mathbb F}_2+I_{\{\widetilde Z_T=0\}}>0,$$ we get $$0=E[\xi \widetilde{Y} \frac{\widetilde{Z}_T}{ Z_{T-}} I_{\{Z_{T-}>0\}}| {\cal F}_{T-}] = E^{\widetilde{Q}_T}[\xi \widetilde{Y}| {\cal F}_{T-}]I_{\{Z_{T-}>0\}}=E^{\widetilde{Q}_T}[S_T  \widetilde{Y}| {\cal F}_{T-}].$$ This proves that assertion (d) holds and the proof of \rm{(a)}$\Rightarrow $\rm{(d)} is achieved.\\
\noindent{\bf Step 4:} This last step proves \rm{(b)}$\Rightarrow
$\rm{(a)}. Suppose that $\widetilde{S}$ satisfies   NUPBR$(\mathbb
F)$. Then, there exists $Y\in L^1({\cal F}_T)$ such that on
$\{T<+\infty\}$ we have
  \begin{eqnarray*}
    E[Y|{\cal F}_{T-}] = 1, \ \ Y>0,\ \ \  \ E[Y|\xi|I_{\{\widetilde{Z}_T>0\}} | {\cal F}_{T-}] <+\infty,\ \ P-a.s.
  \end{eqnarray*}
  and
  \begin{eqnarray*}
    E[Y \xi I_{\{\widetilde{Z}_T>0\}} | {\cal F}_{T-}] = 0.
  \end{eqnarray*}
  Then by putting $R:=Y\cdot P\sim P$, we deduce that $\widetilde S$ is an $(\mathbb F, R)$-martingale and $\Delta S I_{\{\widetilde Z=0\}}\equiv 0$. As a result, assertions (a) follows from direct application of Proposition \ref{cruciallemma1} to $M:=\widetilde S$ under $R\sim P$ (it is easy to see that (\ref{zeroequationbeforetau}) holds for $(\widetilde S, R)$, i.e.  $E^R({\widetilde S}_T I_{\{ \widetilde Z_T=0\}}|{\cal F}_{T-})=0$). This ends the fourth step and the proof of the theorem is completed.\qed


\subsection{Proof of Theorem \ref{cruciallemma2}}\label{proofofcruciallemma2} Due to $\{Z_{T-} =1\} = \{P(\widetilde{Z}_T<1 | {\cal F}_{T-}) = 0\} \subset \{\widetilde{Z}_T = 1\}$, it is obvious that ${\widetilde Q}'_T \sim Q'_T \ll P.$ Thus, (b)$\Longleftrightarrow $(c) follows immediately. Thus, the remaining part of the proof consists of three steps, where (c)$\Longrightarrow $(d), (d)$\Longrightarrow $(a) and (a)$\Longrightarrow $(b) are proven respectively.\\
\noindent{\bf  Step 1:}(c)$\Rightarrow $(d). Suppose (c) holds. Then, there exists an ${\cal F}_T$-measurable random variable $Y_T>0,\ Q'_T$-a.s. such that $E^{Q'_T}[S_TY_T | {\cal F}_{T-}] = 0$, or equivalently $$E[\xi Y_TI_{\{ \widetilde Z_T<1\}} | {\cal F}_{T-}]I_{\{ Z_{T-}<1\}} = 0\ \ \mbox{and}\ \ E[\xi Y_{T} | {\cal F}_{T-}]I_{\{ Z_{T-}=1\}} =0.$$ Since, on the set $\{ Z_{T-}=1\}$, $\widetilde{S} \equiv 0$, it is enough to focus on the part corresponding to $\{ Z_{T-}<1\}$.  Put
  $$\widetilde Y_T:= Y_TI_{\{ \widetilde Z_T<1\}} + I_{\{ \widetilde Z_T=1\}}\ \ \ \mbox{ and}\ \  Q_1:= {\widetilde Y}_T/E({\widetilde Y}_T|{\cal F}_{T-})\cdot P \sim P.$$
   Then, we derive that $E^{Q_1}[\xi  I_{\{ \widetilde Z_T<1\}} | {\cal F}_{T-}] = 0$. Therefore, we conclude that $\widetilde{S}$ is a ($Q_1,\mathbb F)$-martingale, and hence assertion (d) follows.\\
\noindent{\bf  Step 2:} (d)$\Rightarrow$ (a). Since $\widetilde{S}$ satisfies NUPBR($\mathbb F$), then there exists an ${\cal F}_T$-measurable $Y_3>0$ such that $E[Y_3 \xi  I_{\{ \widetilde Z_T<1\}}|{\cal F}_{T-}] = 0$. Put $Q_3:= Y_3/E(Y_3|{\cal F}_{T-})\cdot P\sim P$ and remark that $\{ \widetilde Z_T<1\} = \{ \widetilde Z^{Q_3}_T<1\}$, where $\widetilde Z^{Q_3}_t:= Q_3(\tau \geq t |{\cal F}_t)$. Therefore, a direct application of Theorem \ref{cruciallemma2singlejump} under $Q_3$, we conclude that $S-S^\tau = \widetilde{S} - \widetilde{S}^\tau$ satisfies NUPBR$(\mathbb G)$.\\
\noindent{\bf  Step 3:} (a)$\Rightarrow$ (b). Suppose $S-S^{\tau}$ satisfies NUPBR$(\mathbb G)$. There exists a ${\cal G}_T$-measurable $Y^{\mathbb G} >0$ such that $E[XY^{\mathbb G} I_{\{T>\tau\}} | {\cal G}_{T-}] = 0.$ Then, thanks to Proposition ??, we deduce the existence of a positive ${\cal F}_T$-measurable $\overline{Y}^{\mathbb F} $ such that $Y^{\mathbb G} I_{\{T>\tau\}} =\overline{Y}^{\mathbb F} I_{\{T>\tau\}}$. Then, we calculate
  \begin{eqnarray*}
    0&=& E[\xi  Y^{\mathbb G} I_{\{T>\tau\}} | {\cal G}_{T-}] = E[\xi  Y^{\mathbb F} (1-\widetilde{Z}_T) | {\cal F}_{T-}]  \frac{I_{\{T>\tau\}}}{1 - Z_{T-}}\nonumber\\
    &=&E^{\widetilde Q'(T)}\left(XY^{\mathbb F}\big|\ {\cal F}_{T-}\right)I_{\{T>\tau\}}.
  \end{eqnarray*}
  Therefore, by taking conditional expectation and using the fact that the support of $\widetilde Q'(T)$ is included in $\{ Z_{T-}<1\}$, we obtain  $$(1-Z_{T-})E^{\widetilde Q'(T)}[\xi  Y^{\mathbb F} \big| {\cal F}_{T-}]=0,\ \mbox{or equivalently}\ \ E^{\widetilde Q'(T)}[S_{T}Y^{\mathbb F} \big| {\cal F}_{T-}]=0 \ \ P-a.s.$$ This proves assertion (b), and the proof of the theorem is achieved.\qed

\section{Proof of Theorems \ref{main4} and \ref{MainTheoremMultiJumps}}\label{Proof4theoremsNotTech}
This section is devoted to the proofs of Theorems \ref{main3} and \ref{MainTheoremMultiJumps}. These proofs are technical and require some notations on random measures and semimartingale characteristics. For any filtration
$\mathbb H$,   we denote
$$\label{sigmaFields}
\widetilde {\cal O}(\mathbb H):={\cal O}(\mathbb H)\otimes {\cal
B}({\mathbb R}^d),\ \ \ \ \ \widetilde{\cal P}(\mathbb H):= {\cal
P}(\mathbb H)\otimes {\cal B}({\mathbb R}^d),
$$
where ${\cal B}({\mathbb R}^d)$ is the Borel $\sigma$-field on
${\mathbb R}^d$.  To a c\`adl\`ag $\mathbb H$-adapted process $X$, we associate the following optional random measure $\mu_X$ defined by

\begin{equation}\label{mesuresauts}
\mu_X(dt,dx):=\sum_{u>0} I_{\{\Delta X_u \neq 0\}}\delta_{(u,\Delta
X_u)}(dt,dx)\,.\end{equation}

\noindent For a product-measurable functional
$W\geq 0$ on $\Omega\times[0,+\infty[\times{\mathbb R}^d$, we
denote $W\star\mu_X$ (or sometimes, with abuse of notation
$W(x)\star\mu_X$) the process
\begin{equation}\label{Wstarmu}
(W\star\mu_X)_t:=\int_0^t \int_{{\mathbb R}^d-\{0\}}
W(u,x)\mu_X(du,dx)=\sum_{0<u\leq t} W(u,\Delta X_u) I_{\{ \Delta
X_u\not=0\}}.\end{equation}

\begin{definition}\label{g1} Consider a c\`adl\`ag $\mathbb H$-adapted process $X$, and its optional random measure $\mu_X$. \\
(a) We denote by ${\cal
G}^1_{loc}(\mu_X, \mathbb H)$, the set of all $\widetilde {\cal
P}(\mathbb H)$-measurable functions, $W$, such that
$$\left[\sum_{t\leq\cdot} \left(W(t,\Delta S_t)I_{\{\Delta S_t\not=0\}}-\int W_t(x)\nu_X(\{t\},dx)\right)^2\right]^{1/2} \in {\cal A}_{loc}^+(\mathbb H).$$
(b) The set ${\cal
H}^1_{loc}(\mu_X,\mathbb H)$) is the set of all $\widetilde
{\cal O}(\mathbb H)$-measurable functions,  $W$, such that $(W^2\star\mu_X)^{1/2}\in {\cal A}^+_{loc}(\mathbb H).$
\end{definition}
\noindent Also on
$\Omega\times[0,+\infty[\times{\mathbb R}^d$, we define the
measure $M^P_{\mu_X}:=P\otimes\mu_X$ by $$\int W
dM^P_{\mu_X}:=E\left[(W\star\mu_X)_\infty\right],$$ (when the expectation is well defined). The conditional ``expectation" given $
\widetilde{\cal P}(\mathbb H)$ of a product-measurable functional
$W$, denoted by $M^P_{\mu_X}(W|\widetilde{\cal P}(\mathbb H))$, is the unique $ \widetilde{\cal P}(\mathbb H)$-measurable
functional $\widetilde W$ satisfying
$$
E\left[(W I_{\Sigma}\star\mu_X)_\infty \right]=E\left[({\widetilde W}
I_{\Sigma}\star\mu_X)_\infty \right],\ \ \ \mbox{for all}\
\Sigma\in\widetilde{\cal P}(\mathbb H).$$

\noindent When $X=S$, for the sake of simplicity, we denote $\mu:=\mu_S$. Then, the $\mathbb F$-canonical decomposition of $S$ is
\begin{equation}\label{representationS}
S=S_0+h\star(\mu-\nu)+b\cdot A+(x-h)\star\mu,\end{equation}
where $h$, defined as $h(x):=xI_{\{ \vert x\vert\leq 1\}}$, is the
truncation function. We associate to $\mu$ defined in (\ref{Wstarmu}) when $X=S$,  its predictable
compensator random measure $\nu$. A direct application of  Theorem A.1 in \cite{aksamit/choulli/deng/jeanblanc2} (see also Theorem 3.75 in \cite{Jacod} (page 103), or Lemma 4.24 in \cite{JacodShirayev} (Chap III)), to the martingale $m$ defined in (\ref{processmZ}), leads
to the existence of a local martingale $m^\bot$ as well as $f_m\in{\cal
G}^1_{loc}(\mu,\mathbb F )$, $g_m\in{\cal H}^1_{loc}(\mu,\mathbb F
)$ and $\beta_m\in L( {S^c})$ such that
\begin{eqnarray}\label{decompositionofm}
 m=\beta_m\is S^c+f_m\star(\mu -\nu)+ g_m\star \mu +  m^ \bot.
\end{eqnarray}

\noindent The corresponding canonical decomposition of $S^{\tau}$
under $\mathbb G$ is given by
\begin{equation}\label{StauDecomposition}
S^{\tau}=S_0+h\star(\mu^{\mathbb G}_b-\nu^{\mathbb G}_b)+h{{f_m}\over{Z_{-}}}I_{\Lbrack 0,\tau\Lbrack}\star\nu+b\is A^{\tau}+(x-h)\star\mu^{\mathbb G}_b\end{equation}
where $(\beta_m, f_m)$  is given by (\ref{decompositionofm}) and $\mu^{\mathbb G}_b$ and $\nu^{\mathbb G}_b$ are given by
 \begin{equation}\label{muGnuGbeforeTau}
 \mu^{\mathbb G}_b(dt,dx):=I_{\Rbrack 0,\tau\Lbrack}(t)\mu(dt,dx),\ \ \nu^{\mathbb G}_b(dt,dx):=(1+Z_{-}^{-1}f_m)I_{\Rbrack 0,\tau\Lbrack}(t)\nu(dt,dx).\end{equation}

\subsection{Proof of Theorem \ref{main4}} \label{proofmain4} This proof consists of four steps, where we prove (b)$\Longleftrightarrow$(c), (b)$\Longrightarrow $(a), and (a)$\Longrightarrow $(b) respectively. Only the last step is technically involved.\\
\noindent {\bf Step 1:} Here, we prove (b)$\Longleftrightarrow$(c). Remark that (c)$\Longrightarrow$(b) follows immediately from Lemma \ref{LY}. Suppose that assertion (b) holds, and consider the following $\mathbb F$-predictable process
$$\varphi:=\left[1+\ ^{p,\mathbb F}\left(Y\vert \Delta S\vert I_{\{\widetilde Z>0\}}\right)\right]^{-1}\left[I_{\Omega\setminus(\cup_n \Rbrack T_n\Lbrack)}+\sum 2^{-n}I_{\Rbrack T_n\Lbrack}\right],$$
where $(T_n)_n$ a sequence of $\mathbb F$-predictable stopping times such that $\{\Delta S\not=0\}\subset\displaystyle\bigcup_{n=1}^{+\infty}\Rbrack T_n\Lbrack$. Then, it is easy to see that the process $$X:=Y_{-}\varphi\cdot S^{(0)}+[\varphi\cdot S^{(0)}, Y]=\sum Y\varphi\Delta S I_{\{\widetilde Z<1\ \&\ Z_{-}\geq\delta\}}$$has an integrable variation and its $\mathbb F$-compensator is given by (due to the fact it is a pure jump process with finite variation and it jumps on predictable stopping times only)
$$
X^{p,\mathbb F}=\sum\ ^{p,\mathbb F}\left(Y\varphi\Delta S I_{\{\widetilde Z>0\}}\right)I_{\{Z_{-}\geq\delta\}}\equiv 0.$$
Thus, $Y(\varphi\cdot S^{(0)})$ is an $\mathbb F$-local martingale, and $S^{(0)}$ satisfies the NUPBR$(\mathbb F)$. This ends the proof of (b)$\Longleftrightarrow$(c).\\
\noindent{\bf Step 2:}  Here, we prove (b)$\Rightarrow$ (a).
Suppose that assertion (b) holds, and consider a sequence of
$\mathbb F$-stopping times $(\tau_n)_n$ that increases to infinity
such that $Y^{\tau_n}$ is an $\mathbb F$-martingale, and
$Q_n:=Y_{\tau_n}/Y_0\cdot P\sim P$. Then, (\ref{mainassumptionbeforetau}) implies that $(S^{(0)})^{\sigma_n}$ is a $Q_n$-local martingale and satisfies (\ref{Condition1}) under $Q_n$ due to
\begin{equation}\label{Zzerosetbeforetau}
\{\widetilde{Z}^Q_T=0\} = \{\widetilde{Z}_T = 0\},\ \mbox{for any}\  Q\sim P\ \mbox{and any ${\mathbb F}$-stopping time}\ T,\end{equation}
 where $\widetilde{Z}^Q_t := Q[\tau \geq t| {\cal F}_t]$. This follows from \begin{eqnarray*}
    E\left[\widetilde{Z}_T I_{\{\widetilde{Z}^Q_T=0\}}\right] = E\left[ I_{\{\tau \geq T\}} I_{\{\widetilde{Z}^Q_T=0\}}\right] = 0,
  \end{eqnarray*}
(which implies $\{\widetilde{Z}^Q_T=0\} \subset \{\widetilde{Z}_T
= 0\}$) and the symmetric role of $Q$ and $P$.\\ Thus, a direct application of Theorem \ref{generalThinDeflators} to $\left((S^{(0)})^{\sigma_n},Q_n\right)$ leads to the NUPBR$(\mathbb G, Q_n)$ of $(S^{(0)})^{\sigma_n\wedge\tau}=\left(I_{\{Z_{-}\geq \delta\}}\cdot S\right)^{\sigma_n\wedge\tau}$. Thanks to Proposition \ref{NUPBRLocalization}, this implies the NUPBR$(\mathbb G)$ of $I_{\{Z_{-}\geq \delta\}}\cdot S$ for any $\delta>0$. Since $Z_{-}^{-1}I_{\Rbrack0,\tau\Lbrack}$ is $\mathbb G$-locally bounded, there exists a family of $\mathbb G$-stopping times $\tau_{\delta}$ that increases to infinity when $\delta$ decreases to zero, and $\Rbrack 0, \tau\wedge\tau_{\delta}\Lbrack\subset\{Z_{-}\geq \delta\}$. Therefore, we conclude that $S^{\tau\wedge\tau_{\delta}}$ satisfies the NUPBR$(\mathbb G)$. Hence, again Proposition \ref{NUPBRLocalization} implies finally that $S^{\tau}$ satisfies the NUPBR$(\mathbb G)$. This ends the second part. \\
\noindent {\bf Step 3:} In this step, we focus on proving (a)$\Rightarrow $(b).  Suppose that
$S^\tau$ satisfies NUPBR($\mathbb G$). Then, there exists a
$\sigma$-martingale density under $\mathbb G$, for $I_{\{Z_{-}\geq
\delta\}}\is S^{\tau},$ ($\delta>0$), that we denote by $
D^\mathbb G$. Then, from a direct application of Theorem \ref{theosigmadensityiff},
  we deduce the existence of a positive $\widetilde{\cal P}(\mathbb G)$-measurable functional, $f^{\mathbb G}\in{\cal G}^1_{loc}(\mu^{\mathbb G}_b,\mathbb G)$, such that $D^\mathbb G :={\cal E}(N^{\mathbb G})>0$, with
$$
  N^{\mathbb G}:=W^{\mathbb G}\star (\mu^{\mathbb G} - \nu^{\mathbb G}),  \ W^{\mathbb G}:= f^{\mathbb G}-1 + \frac{\widehat{f}^{\mathbb G} - a^{\mathbb G}}{1 - a^{\mathbb G}}I_{\{a^{\mathbb G} <1\}},
 $$
where $\nu^{\mathbb G}$ was defined in (\ref{muGnuGbeforeTau}),
and, introducing $f_m$ defined in (\ref{decompositionofm})
  \begin{eqnarray}\label{mgequationbeforetau}
    xf^{\mathbb G}I_{\{Z_{-}\geq\delta\}}\star \nu^{\mathbb G}  = xf^{\mathbb G}\left(1 + \frac{f_m}{Z_{-}}\right)I_{\Lbrack 0,\tau\Lbrack}I_{\{Z_{-}\geq\delta\}}\star  \nu \equiv 0.
  \end{eqnarray}
Thanks to Lemma \ref{lemma:predsetFG}, we conclude the existence of a positive
  $\widetilde{\cal P}(\mathbb F)$-measurable functional, $f$,
  such that $f^{\mathbb G}I_{\Lbrack 0,\tau\Lbrack} = fI_{\Lbrack 0,\tau\Lbrack}$.
  Thus, (\ref{mgequationbeforetau}) becomes
 $$ U^{(b)}:=xf\left(1 + \frac{f_m}{Z_{-}}\right)I_{\Lbrack 0,\tau\Lbrack}I_{\{ Z_{-}>0\}}\star  \nu \equiv 0.
$$
 Introduce the following notations
  \begin{equation}\label{eq:cruYzero}\left\{\begin{array}{llll}
   \mu_0 := I_{\{\widetilde{Z}>0\ \&\ Z_{-} \geq \delta\}} \cdot \mu , \ \ \nu_0 := h_0I_{\{ Z_{-} \geq \delta\}}\cdot \nu, \   h_0:= M^P_{\mu}\left(I_{\{\widetilde{Z}>0\}} | \widetilde{\cal P}\right), \nonumber \\
    g := \frac{f(1 + \frac{f_m}{Z_{-}})}{h_0} I_{\left\{h_0>0\right\}} + I_{\left\{h_0=0\right\}}, \ \ a_0(t):= \nu_0(\{t\}, {\mathbb{R}^d}),
    \end{array}\right.\end{equation}
  and assume that
  \begin{equation}\label{mainassumtpionbeforetaubis}
  \sqrt{(g-1)^2\star\mu_0}\in {\cal A}^+_{loc}(\mathbb F).\end{equation}
  Then, thanks to Lemma \ref{boundednessofuhat}, we deduce that $W:=(g-1)/(1- a^0 + \widehat{g})\in {\cal G}^1_{loc}(\mu_0,\mathbb F)$, and the local martingales
  \begin{equation}\label{Nzerobeforetau}
  N^{(0)}:= \frac{g-1}{1 - a^0 + \widehat{g}}\star(\mu_0 - \nu_0), \ \ Y^{(0)} := {\cal E}(N^{(0)}),\end{equation}
  are well defined satisfying $1 + \Delta N^{(0)} > 0$, $[N^{(0)},S]\in{\cal A}(\mathbb F)$,  and on $\{ Z_{-}>0\}$ we have
  \begin{eqnarray*}
    {{^{p,\mathbb F}\left(Y^{(0)}\Delta S I_{\{\widetilde{Z}>0\}}\right)}\over{Y^{(0)}_{-}}} &=& \ ^{p,\mathbb F}\left((1+\Delta N^{(0)})\Delta S I_{\{\widetilde{Z}>0\}}\right) =\  ^{p,\mathbb F}\left(\frac{g}{1 - a^0 + \widehat{g}}\Delta S I_{\{\widetilde{Z}>0\}}\right)\\
     &=& \Delta \frac{gxh_0}{1 - a^0 + \widehat{g}}\star \nu = \Delta\frac{ xf(1 + f_m/{Z_{-}})}{1 - a^0 + \widehat{g}}I_{\{Z_{-}>0\}}\star \nu\\
     &=&\frac{ ^{p,\mathbb F}\left(\Delta U^{(b)}\right)}{1 - a^0 + \widehat{g}} \equiv 0.
  \end{eqnarray*}
  This proves that assertion (b) holds under the assumption (\ref{mainassumtpionbeforetaubis}). The remaining part of the proof will show that this assumption holds always. To this end, we start by noticing that on the set $ \left\{h_0>0\right\}$,
  \begin{eqnarray*}
    g-1 &=& \frac{f(1 + \frac{f_m}{Z_{-}})}{h_0} - 1 =
     \frac{(f-1)(1 + \frac{f_m}{Z_{-}})}{h_0} + \frac{f_m}{Z_{-}h_0} + \frac{M^P_{\mu}\left(I_{\{\widetilde{Z}=0\}} | \widetilde{\cal P}\right)}{h_0}\nonumber \\
    &:=&  \frac{(f-1)(1 + \frac{f_m}{Z_{-}})}{h_0} + \frac{M^P_{\mu}\left(\Delta m I_{\{\widetilde{Z}>0\}} | \widetilde{\cal P}\right)}{Z_{-}h_0}=:g_1 + \frac{M^P_{\mu}\left(\Delta m I_{\{\widetilde{Z}>0\}} | \widetilde{\cal P}\right)}{Z_{-}h_0}.
  \end{eqnarray*}
  Since $\left((f-1)^2 I_{\Lbrack 0,\tau \Lbrack}\star \mu \right)^{1/2}  \in {\cal A}^+_{loc}(\mathbb{G})$,
  then due to Proposition \ref{prop:alocundergf}--(e)
  \begin{eqnarray*}
    \sqrt{(f-1)^2 I_{\{Z_{-}\geq \delta\}}\star (\widetilde{Z}\cdot \mu)}\in {\cal A}^+_{loc}(\mathbb F),\ \ \ \mbox{for any}\ \ \delta>0.\end{eqnarray*}
    Then, a direct application of Proposition \ref{prop:alocundergf}--(a), for any $\delta>0$, we have
    \begin{eqnarray*} (f-1)^2 I_{\{\vert f-1\vert\leq \alpha\ \&\ Z_{-}\geq \delta\}}\star (\widetilde{Z}\cdot \mu),\ \  |f-1|I_{\{|f-1|> \alpha\ \&\ Z_{-}\geq\delta\}}\star (\widetilde{Z}
    \cdot \mu) \in {\cal A}^+_{loc}(\mathbb F).
  \end{eqnarray*}
  By stopping, without loss of generality, we assume these two processes and $[m,m]$ belong to ${\cal A}^+(\mathbb F)$. Remark that $Z_{-} + f_m = M^P_{\mu}\left(\widetilde{Z}| \widetilde{\cal P}\right) \leq M^P_{\mu}\left(I_{\{\widetilde{Z}>0\}} | \widetilde{\cal P}\right) = h_0$ that follows from $\widetilde{Z} \leq I_{\{\widetilde{Z}>0\}}$. Therefore, we derive
  \begin{eqnarray*}
    E\left[g_1^2I_{\{|f-1|\leq \alpha\}}\star \mu_0(\infty)\right] &=& E\left[\frac{(f-1)^2(1 + \frac{f_m}{Z_{-}})^2}{h_0^2}I_{\{|f-1|\leq \alpha\}}\star \mu_0(\infty)\right] \nonumber \\
    &=& E\left[\frac{(f-1)^2(1 + \frac{f_m}{Z_{-}})^2}{h_0^2}I_{\{|f-1|\leq \alpha\}} \star \nu_0(\infty)\right] \nonumber \\
    &\leq& \delta^{-2}E\left[(f-1)^2( Z_{-} + f_m) I_{\{|f-1|\leq \alpha\ \&\ Z_{-}\geq \delta\}} \star \nu(\infty)\right] \nonumber \\
    &=& \delta^{-2}E\left[(f-1)^2 I_{\{|f-1|\leq \alpha\}} \star (\widetilde{Z}I_{\{Z_{-}\geq \delta \}}
    \cdot \mu)(\infty)\right]<+\infty, \nonumber
  \end{eqnarray*}
 and
  \begin{eqnarray*}
    E\left[g_1I_{\{|f-1|> \alpha\}}\star \mu_0(\infty)\right] &=& E\left[\frac{|f-1|(1 + \frac{f_m}{Z_{-}})}{h_0}I_{\{|f-1|> \alpha\}} \star \mu_0(\infty)\right] \nonumber \\
    &=& E\left[|f-1|(1 + \frac{f_m}{Z_{-}})I_{\{|f-1|> \alpha\}} I_{\{Z_{-}\geq\delta\}}\star \nu_0(\infty)\right] \nonumber \\
    &\leq& \delta^{-1}E\left[|f-1| I_{\{|f-1|> \alpha\}} \star (\widetilde{Z}I_{\{Z_{-}\geq \delta \}}
    \cdot \mu)(\infty)\right]<+\infty.\nonumber
  \end{eqnarray*}
  Here $\mu_0$ and $\nu_0$ are defined in (\ref{eq:cruYzero}). Therefore, again by Proposition \ref{prop:alocundergf}--(a), we conclude that $\sqrt{g_1^2\star\mu_0}\in {\cal A}_{loc}^+(\mathbb F)$.\\

  \noindent Due to $M^P_{\mu}(HK|\widetilde{\cal P}(\mathbb F))^2\leq M^P_{\mu}(H^2|\widetilde{\cal P}(\mathbb F))M^P_{\mu}(K^2|\widetilde{\cal P}(\mathbb F))$ , we derive
  \begin{eqnarray*}
  E\left[\frac{M^P_{\mu}\left(\Delta mI_{\{\widetilde{Z}>0\}} | \widetilde{\cal P}\right)^2}{Z_{-}^2h_0^2}\star \mu_0(\infty)\right] &\leq&E\left[\frac{M^P_{\mu}\left((\Delta m)^2 | \widetilde{\cal P}\right) M^P_{\mu}\left(I_{\{\widetilde{Z}>0\}} | \widetilde{\cal P}\right)}{Z_{-}^2h_0^2}\star \mu_0(\infty)\right] \nonumber \\
  &=& E\left[\frac{M^P_{\mu}\left((\Delta m)^2 | \widetilde{\cal P}\right) }{Z_{-}^2}I_{\{Z_{-}\geq \delta \}}\star \mu(\infty)\right]\\
  & \leq& \delta^{-2} E\left[ [m,m]_{\infty}\right]<+\infty. \nonumber
  \end{eqnarray*}
  Hence, we conclude that $\sqrt{(g-1)^2\star \mu_0} \in {\cal A}^+_{loc}(\mathbb{F}).$ This ends the proof of (\ref{mainassumtpionbeforetaubis}), and the proof of the theorem is completed.\qed


  \subsection{Proof of Theorem \ref{MainTheoremMultiJumps}}\label{proofofMainTheoremMultiJumps}
  Before proving the equivalence between the two assertions of the theorem, we will start outlining a number of remarks that simplify tremendously the proof. It is easy to prove that on $\{T<+\infty\}$ we have
\begin{equation}\label{ZtildeforQandP}
\{\widetilde{Z}^Q_T=1\} = \{\widetilde{Z}_T = 1\}\ \ \mbox{for}\ Q\sim P\ \mbox{and}\ \mathbb F-\mbox{stopping time}\ T,\end{equation}
where $\widetilde{Z}^Q_t := E^Q(\tau \geq t | {\cal F}_t)$. Indeed, due to
  \begin{eqnarray*}
    E\left[(1-\widetilde{Z}_T) I_{\{\widetilde{Z}^Q_T=1\}}\right] = E\left[ I_{\{\tau< T\}} I_{\{\widetilde{Z}^Q_T=1\}}\right] = 0,
  \end{eqnarray*}
   the inclusion $\{\widetilde{Z}^Q_T=1\} \subset \{\widetilde{Z}_T = 1\}$ follows, while the reverse inclusion follows by symmetry. This proves (\ref{ZtildeforQandP}).\\
  Since $S$ is a thin process with predictable jump times only, then there exists a sequence of $\mathbb F$-predictable stopping times, $(T_n)_{n\geq 1}$, such that
  $$\{\Delta S\not=0\}\subset\bigcup_{n=1}^{+\infty}\Rbrack T_n\Lbrack.$$
The proof of the theorem consists of three steps in which we prove (b)$\Longleftrightarrow$(c), (b)$\Longrightarrow$(a) and (a)$\Longrightarrow$(b) respectively.\\
   \noindent{\bf  Step 1:} Here, we prove (b)$\Longleftrightarrow$(c). Remark that,  thanks to Lemma \ref{LY}, (c)$\Longrightarrow$(b) follows immediately. To prove the reverse (i.e. (b)$\Longrightarrow$(c)), we consider the following $\mathbb F$-predictable process
$$\varphi:= \left[1+\ ^{p,\mathbb F}\left(Y\vert \Delta S\vert I_{\{\widetilde Z<1\}}\right)\right]^{-1}\left[I_{\Omega\setminus(\bigcup_{n=1}^{+\infty} \Rbrack T_n\Lbrack)}+\sum_{n=1}^{+\infty} 2^{-n}I_{\Rbrack T_n\Lbrack}\right].$$
It is easy to check that $0<\varphi\leq 1$ and $U:=Y_{-}\varphi\cdot S^{(1)}+[Y,\varphi\cdot S^{(1)}]$ is a process with integrable variation whose compensator (since it is a pure jump process with finite variation and jumps on predictable stopping times only) is
$$U^{p,\mathbb F}=\sum \varphi\ ^{p,\mathbb F}\left(Y\Delta S I_{\{ \widetilde Z=1>Z_{-}\}}\right)\equiv 0.$$
This proves that $Y$ is $\sigma$-martingale density for $S^{(1)}$ (i.e. $Y\in {\cal L}(S^{(1)},\mathbb F)$), and hence assertion (c) follows immediately.\\
{\bf Step 2:} Here we will prove (b)$\Rightarrow $ (a). Suppose that assertion (b) holds, and consider a sequence of $\mathbb F$-stopping times $(\sigma_n)_n$ such that $Y^{\sigma_n}$ is a martingale, and put $Q_n:=(Y_{\sigma_n}/Y_0)\cdot P\sim P$. Then, since $\widetilde S:=\sum \Delta S I_{\{\widetilde Z<1\}}$ is a thin process with predictable jump times only, the condition (\ref{mainEquationMultiJumps}) translates into the fact that $ {\widetilde S}^{\sigma_n}$ is a $Q_n$-local martingale satisfying $$\{\Delta \widetilde S^{\sigma_n}\not=0\}\cap\{\widetilde Z^{Q_n}=1>Z^{Q_n}_{-}\}=\emptyset, $$
due to (\ref{ZtildeforQandP}). Therefore, thanks to Proposition \ref{NUPBRLocalization}, it is enough to prove that assertion (a) holds true under $Q_n$ for $S^{\sigma_n}$. Therefore, without loss of generality, we assume $Y\equiv 1$ and hence $\widetilde S$ is a $\mathbb F$-local martingale satisfying (\ref{Condition4AfterTau}). Thus, a direct application of Theorem \ref{explicitdeflatorThin4AfterTau} implies that ${\widetilde S}^{\tau}$ satisfies the NUPBR$(\mathbb G)$.\\
\noindent {\bf Step 3:} Here, we will prove (a)$\Rightarrow $(b). Suppose that $S -
S^\tau$ satisfies the NUPBR($\mathbb G$). A direct application Theorem
\ref{theosigmadensityiff} implies the existence of $f^{\mathbb
G}\in {\cal G}^1_{loc}(\mu^{\mathbb G}_a,\mathbb G)$ such that $f^{\mathbb G}>0$,
  \begin{eqnarray*}
    N^{\mathbb G}&:=&W^{\mathbb G}\star (\mu^{\mathbb G}_a - \nu^{\mathbb G}_a),\ \  W^{\mathbb G}:= f^{\mathbb G}-1 + \frac{\widehat{f}^{\mathbb G} - a^{\mathbb G}}{1 - a^{\mathbb G}}I_{\{a^{\mathbb G} <1\}},\end{eqnarray*}
   and
  \begin{eqnarray}\label{mgunderGaftertau}
    xf^{\mathbb G}\star \nu^{\mathbb G}_a  = xf^{\mathbb G}\left(1 - \frac{f_m}{1-Z_{-}}\right)I_{\Lbrack \tau, +\infty\Rbrack}\star  \nu \equiv 0.
  \end{eqnarray}
  Here $f_m := M^P_{\mu}(\Delta m | \widetilde{\cal P}(\mathbb F))$ (given also by (\ref{decompositionofm}), and $\mu^{\mathbb G}_a$ and $\nu^{\mathbb G}_a$ are given by $$\mu^{\mathbb G}_a := I_{\Lbrack \tau, +\infty\Rbrack}\cdot \mu,\ \nu^{\mathbb G}_a := \left(1 - \frac{f_m}{1 - Z_{-}}\right)I_{\Lbrack \tau, +\infty\Rbrack}\cdot \nu.$$
   Thanks to Lemma \ref{lemma:predsetFG}, there exists an ${\cal P}(\mathbb F)$-measure functional $f>0$ such that $f^{\mathbb G} = f$ on the stochastic interval $\Lbrack \tau, +\infty\Rbrack$, and (\ref{mgunderGaftertau}) becomes
  \begin{eqnarray}\label{eq:aftertaufmcru}
    xf\left(1 - \frac{f_m}{1 -Z_{-}}\right)I_{\Lbrack \tau, +\infty\Rbrack}\star  \nu \equiv 0.
  \end{eqnarray}
  Due to Proposition \ref{Gstoppingtimeaftertau} and $\mathbb G$-locally boundedness of $(1-Z_{-})^{-1}I_{\Lbrack \tau, +\infty\Rbrack}$, we could find a sequence of $\mathbb F$-stopping time $(\sigma_n^{\mathbb F})_{n\geq 1}$ that increases to infinity and  $(1-Z_{-})^{-1}I_{\Rbrack 0,\sigma^{\mathbb F}_n\Lbrack}I_{\Lbrack \tau, +\infty\Rbrack} $ is bounded by $(n+1)$. Also, since $\left((f-1)^2 I_{\Lbrack \tau, +\infty\Rbrack}\star \mu \right)^{1/2}  \in {\cal A}^+_{loc}(\mathbb{G})$, thanks to Proposition \ref{Gstoppingtimeaftertau} (both assertions (c) and (a)) we deduce the existence of a sequence of $\mathbb F$-stopping times $(\tau_n)_n$ that increases to infinity such that the three processes $[m,m]^{\tau_n}$
  \begin{eqnarray*}
   (f-1)^2 I_{\{|f-1|\leq \alpha\ \&\ 1-Z_{-}\geq\delta\}}\star\overline{\mu})^{\tau_n}\ \mbox{and}\  |f-1|I_{\{|f-1|> \alpha \&\ 1-Z_{-}\geq\delta\}}\star{\overline \mu})^{\tau_n}
  \end{eqnarray*}
  are integrable, where $\overline \mu:=(1-\widetilde{Z})\cdot \mu$. Consider the following notations
  \begin{eqnarray*}\label{eq:cruYzeroafter}
   \mu_1 &:=& I_{\{\widetilde{Z}<1\ \&\ 1-Z_{-}\geq\delta\}} \cdot \mu , \ \ \nu_1 := h_1 I_{\{ 1-Z_{-}\geq\delta\}}\cdot \nu, \   h_1:= M^P_{\mu}\left(I_{\{\widetilde{Z}<1\}} | \widetilde{\cal P}\right), \nonumber \\
    g &:=& \frac{f(1 - \frac{f_m}{1-Z_{-}})}{h_1} I_{\left\{h_1>0\ \&\ Z_{-}<1\right\}} + I_{\left\{h_1=0\ \mbox{or}\ Z_{-}=1\right\}},
  \end{eqnarray*}
  and suppose that
  \begin{equation}\label{mainassumption3.4}
  W^{(1)}(t,x):={{g_t(x)-1}\over{1-a^{(1)}_t+\widehat g_t}}\in {\cal G}^1_{loc}(\mu_1,\mathbb F),
  \end{equation}
  where $ a^{(1)}_t:=\nu_1(\{t\},\mathbb R^d)$ and $\widehat g_t:=\int g_t(x)\nu_1(\{t\},dx).$

  \noindent Then, we can easily prove that assertion (b) holds. In fact, we take
  $$
   N^{(1)}:= \frac{g-1}{1 - a^{(1)} + \widehat{g}}\star(\mu_1 - \nu_1)\ \ \mbox{and}\ \  \ Y := {\cal E}(N^{(1)}).$$
  Then, it is clear that $$1 + \Delta N^{(1)} = {1\over{1 - a^{(1)} + \widehat{g}}}I_{\{\Delta S=0\ \mbox{or}\ \widetilde Z=1\}}+\frac{g(\Delta S)}{1 - a^{(1)} + \widehat{g}} I_{\{ \Delta S\not=0\ \&\ \widetilde Z<1\}}> 0,$$ and on $\{Z_{-}<1\}$ we get
  \begin{eqnarray*}
    ^{p,\mathbb F}\left(Y\Delta S I_{\{\widetilde{Z}<1\}}\right)_t &=& Y_{t-}\ ^{p,\mathbb F}\left((1+\Delta N^{(1)})\Delta S I_{\{\widetilde{Z}<1\}}\right)_t =\frac{Y_{t-}\  ^{p,\mathbb F}\left(g(\Delta S)\Delta S I_{\{\widetilde{Z}<1\}}\right)_t}{1 - a^{(1)}_t + \widehat{g}_t}\\
     &=& \frac{Y_{t-}}{1 - a^{(1)}_t + \widehat{g}_t}\int g_t(x)xh_1(t,x) \nu(\{t\},dx) \\
     &=& \frac{Y_{t-}}{1 - a^{(1)}_t + \widehat{g}_t} \int xf_t(x)\left(1 - \frac{f_m(t,x)}{1-Z_{-}}\right)\nu(\{t\},dx) \equiv 0.
  \end{eqnarray*}
  The last equality in the above string of equalities follows direct from (\ref{eq:aftertaufmcru}). Therefore, assertion (b) will follow immediately as long as we prove (\ref{mainassumption3.4}). To this end, on $\left\{h_1>0\ \&\ Z_{-}<1\right\}$ we calculate
  \begin{eqnarray*}\label{gdecomposition}
    g-1 &=& \frac{f(1-Z_{-}- f_m)}{h_1(1-Z_{-})} - 1= \frac{(f-1)(1-Z_{-}- f_m)}{h_1(1-Z_{-})}-\frac{M^P_{\mu}\left(\Delta mI_{\{\widetilde{Z}<1\}} | \widetilde{\cal P}\right)}{(1-Z_{-})h_1}\\
    &=&:g_1 + g_2,
  \end{eqnarray*}
  and remark that $\{ 1-Z_{-}-f_m>0\}\subset\{h_0>0\}$ which is due to $$1-Z_{-}-f_m=1 - M^P_{\mu}\left(\widetilde{Z}| \widetilde{\cal P}\right) \leq M^P_{\mu}\left(I_{\{\widetilde{Z}<1\}} | \widetilde{\cal P}\right) = h_1,$$ that is implied by $I_{\{\widetilde{Z}=1\}} \leq \widetilde{Z}$. Therefore, we derive that
  \begin{eqnarray*}
    E\left[g_1^2I_{\{|f-1|\leq \alpha\}}\star \mu_0(\sigma_n\wedge\tau_n)\right] &=& E\left[\frac{(f-1)^2(1 -Z_{-}- f_m)^2}{h_1^2(1 -Z_{-})^2}I_{\{|f-1|\leq \alpha\}} I_{\{\widetilde{Z}<1\}}\star \mu(\sigma_n\wedge\tau_n)\right] \nonumber \\
    &=& E\left[\frac{(f-1)^2(1 -Z_{-}- f_m)^2}{h_1(1 -Z_{-})^2}I_{\{|f-1|\leq \alpha\}} \star \nu(\sigma_n\wedge\tau_n)\right] \nonumber \\
    &\leq& E\left[(f-1)^2 \frac{1-Z_{-} - f_m}{(1-Z_{-})^2} I_{\{|f-1|\leq \alpha\}} \star \nu(\sigma_n\wedge\tau_n)\right] \nonumber \\
    &\leq& E\left[(f-1)^2\frac{1}{(1-Z_{-})^2} I_{\{|f-1|\leq \alpha\}} \star ((1-\widetilde{Z})\cdot \mu(\sigma_n\wedge\tau_n))\right] \nonumber\\
    &=& E\left[(f-1)^2\frac{1}{(1-Z_{-})^2} I_{\{|f-1|\leq \alpha\}}I_{\Lbrack \tau,+\infty\Rbrack} \star  \mu(\sigma_n\wedge\tau_n)\right]\nonumber\\
    &\leq& (n+1)^2 E\left[(f-1)^2I_{\{|f-1|\leq \alpha\}} \star  \mu(\tau_n)\right]<+\infty.
  \end{eqnarray*}
  and
  \begin{eqnarray*}
    E\left[g_1I_{\{|f-1|> \alpha\}}\star \mu_1(\sigma_n\wedge\tau_n)\right] &=& E\left[\frac{|f-1|(1-Z_{-}- f_m)}{h_1(1-Z_{-})}I_{\{|f-1|> \alpha\}} I_{\{\widetilde{Z}<1\}}\star \mu(\sigma_n\wedge\tau_n)\right] \nonumber \\
    &=& E\left[|f-1|\frac{1-Z_{-}-f_m}{1-Z_{-}}I_{\{|f-1|> \alpha\}} \star \nu(\sigma_n\wedge\tau_n)\right] \nonumber \\
    &\leq& E\left[|f-1| \frac{1}{1-Z_{-}} I_{\{|f-1|> \alpha\}} \star ((1-\widetilde{Z})\cdot \mu)(\sigma_n\wedge\tau_n)\right] \nonumber \\
    &=& E\left[|f-1| \frac{1}{1-Z_{-}} I_{\{|f-1|> \alpha\}}I_{\Lbrack \tau,+\infty\Rbrack} \star  \mu(\sigma_n\wedge\tau_n)\right] \nonumber \\
    &\leq& (n+1)E\left[|f-1| I_{\{|f-1|> \alpha\}}\star  ((1-\widetilde Z)\cdot \mu)(\tau_n)\right] <+\infty.
  \end{eqnarray*}
  This proves that $\sqrt{g_1^2\star \mu_1}\leq \sqrt{2}\sqrt{g_1^2 I_{\{ \vert f-1\vert>\alpha\}}\star \mu_1}+\sqrt{2}\left(g_1 I_{\{ \vert f-1\vert>\alpha\}}\star \mu_1\right)$ belongs to ${\cal A}^+_{loc}(\mathbb F)$. To prove $\sqrt{g_2^2\star \mu_1}\in {\cal A}^+_{loc}(\mathbb F)$, we derive
  \begin{eqnarray*}
  E\left[(g_2)^2\star \mu_0(\sigma_n\wedge\tau_n)\right] &=&E\left[\frac{M^P_{\mu}\left(\Delta mI_{\{\widetilde{Z}<1\}} | \widetilde{\cal P}\right)^2}{(1-Z_{-})^2h_0^2}I_{\{\widetilde{Z}<1\ \&\ Z_{-}<1\}}\star \mu(\sigma_n\wedge\tau_n)\right] \nonumber \\
  &\leq&E\left[\frac{M^P_{\mu}\left((\Delta m)^2 | \widetilde{\cal P}\right) M^P_{\mu}\left(I_{\{\widetilde{Z}<1\ \&\ Z_{-}<1\}} | \widetilde{\cal P}\right)^2}{(1-Z_{-})^2h_1^2}\star \nu(\sigma_n\wedge\tau_n)\right] \nonumber \\
  &=& E\left[\frac{M^P_{\mu}\left((\Delta m)^2 | \widetilde{\cal P}\right) }{(1-Z_{-})^2}I_{\{Z_{-}<1\}}\star \nu(\sigma_n\wedge\tau_n)\right] \nonumber \\
  &\leq&  E\left[\frac{1}{(1-Z_{-})^3}I_{\Lbrack \tau,+\infty\Rbrack}\is\left(M^P_{\mu}\left((\Delta m)^2 | \widetilde{\cal P}\right)\star\nu\right)_{\sigma_n\wedge\tau_n}\right]\nonumber\\
   &\leq&(n+1)^3 E([m,m]_{\tau_n}) <+\infty.
  \end{eqnarray*}
  Hence, $\sqrt{(g-1)^2\star \mu_1} \in {\cal A}_{loc}^+(\mathbb{F})$ follows. Thanks to Lemma \ref{boundednessofuhat} (see Choulli and Schweizer(2012) \cite{Choulli2012}), (\ref{mainassumption3.4}) follows immediately. This ends the proof of the theorem. \qed

\appendix

\section{Integrality Results}
\begin{theorem}\label{theosigmadensityiff}
  Let $S$ be a semi-martingale with predictable characteristic triplet $(b,c,\nu=A\otimes F)$, $N$ is a local martingale such that ${\cal E}(N)>0$, and $(\beta,f,g,N')$ are its Jacod's parameters.  Then the following assertions hold.\\
  1) ${\cal E}(N)$ is a $\sigma$-martingale density of $S$  if and only if the following two properties hold:
  \begin{equation}\label{integrabilitycondition}
  \displaystyle\int \vert x - h(x) + xf(x)\vert F(dx)<+\infty,\ \ \ P\otimes A-a.e.\end{equation}
  and
  \begin{equation}\label{martingalerequation}
  b + c\beta + \displaystyle\int\Bigl(x - h(x) + x f(x)\Bigr)F(dx) = 0,\ \ \ \ \ P\otimes A-a.e.\end{equation}
\end{theorem}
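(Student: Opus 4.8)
The plan is to reduce the characterization, by integration by parts, to a drift condition on an auxiliary semimartingale, and then to read off the two displayed identities from the structure of Jacod's parameters together with the characterization of $\sigma$-martingales via characteristics. This is essentially Theorem A.1 of \cite{aksamit/choulli/deng/jeanblanc2}, and I would either invoke it directly or reproduce the argument as follows.

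First I would use that $d{\cal E}(N)={\cal E}(N)_{-}\,dN$, hence $[{\cal E}(N),S]={\cal E}(N)_{-}\is[N,S]$, and integration by parts gives ${\cal E}(N)S-S_{-}\is{\cal E}(N)={\cal E}(N)_{-}\is(S+[N,S])$. Since $S_{-}\is{\cal E}(N)$ is a local martingale and ${\cal E}(N)_{-}^{-1}$ is locally bounded, ${\cal E}(N)$ is a $\sigma$-martingale density of $S$ if and only if $Y:=S+[N,S]$ is a $\sigma$-martingale; the $\mathbb H$-predictable integrand $\theta$ with $0<\theta\le 1$ appearing in (\ref{LFS}) is then recovered from the localizing integrand of $Y$ after rescaling by ${\cal E}(N)_{-}$, and conversely.

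Next I would compute the characteristics of $Y$ from those of $S$ and the parameters $(\beta,f,g,N')$ of $N$. The continuous local martingale part of $Y$ is $S^{c}$, while $[N,S]=\langle N^{c},S^{c}\rangle+\sum\Delta N\,\Delta S=c\beta\is A+\sum\Delta N\,\Delta S$, so $[N,S]$ adds $c\beta\is A$ to the predictable finite-variation part of $S$ and, on $\{\Delta S\neq 0\}$, replaces the jump $x$ of $S$ by $x(1+\Delta N)$. The key point is that the $\widetilde{\cal P}(\mathbb H)$-compensator, in the sense of $M^{P}_{\mu}$, of the functional $x\,\Delta N$ equals $xf(x)$, since the contributions of the $\widetilde{\cal P}$-orthogonal pieces $g\star\mu$ and $N'$ vanish; adding this to the compensator $(x-h(x))\star\nu$ of $(x-h(x))\star\mu$ and to the first characteristic $b\is A$ of $S$, the total predictable drift of $Y$ is $\bigl(b+c\beta+\int(x-h(x)+xf(x))F(dx)\bigr)\is A$, provided $\int|x-h(x)+xf(x)|F(dx)<+\infty$ $P\otimes A$-a.e.

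Finally I would invoke the characterization of $\sigma$-martingales through their characteristics (see \cite{JacodShirayev}, Chapter III): $Y$ is a $\sigma$-martingale if and only if the integrability condition (\ref{integrabilitycondition}) holds, so that the drift computed above is well defined, and that drift vanishes, which is precisely (\ref{martingalerequation}). Reading the equivalence in both directions yields the theorem. The main obstacle, and the only genuinely delicate step, is the second one: keeping careful track of the truncation function $h$ and of the correction terms $\widehat f$ and $a:=\nu(\{t\},{\mathbb R}^{d})$ that come from the accessible (predictable) jumps of $S$ when $\Delta N$ is expressed through Jacod's parameters, so that the extra drift created by $[N,S]$ is exactly $xf(x)\star\nu$ and the two conditions emerge in the stated form.
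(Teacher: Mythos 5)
The paper does not actually prove Theorem A.1: the "proof" is a pointer to Choulli–Stricker–Li (Lemma 2.4) and Choulli–Schweizer (2013). Your sketch is a reasonable reconstruction of the argument those sources give, and the overall strategy — integration by parts to reduce to the condition that $Y:=S+[N,S]$ be a $\sigma$-martingale, then a drift computation in terms of the characteristics of $S$ and the Jacod parameters of $N$, finished off with the Jacod–Shiryaev / Ansel–Stricker characterization of $\sigma$-martingales via characteristics — is the right one. The reduction step is fine: by Itô, ${\cal E}(N)(\theta\is S)=(\theta\is S)_{-}\is{\cal E}(N)+{\cal E}(N)_{-}\theta\is(S+[N,S])$, $(\theta\is S)_{-}$ is locally bounded (càglàd), and ${\cal E}(N)_{-}^{-1}$ is locally bounded because ${\cal E}(N)>0$ and ${\cal E}(N)_{-}>0$, so ${\cal E}(N)\in{\cal L}(\mathbb H,S)$ iff $S+[N,S]$ is a $\sigma$-martingale.

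The one place where your write-up is imprecise — and you do flag it yourself at the end — is the justification of $M^P_{\mu}(x\,\Delta N\mid\widetilde{\cal P})=xf(x)$. You argue that the contributions of $g\star\mu$ and of the orthogonal part $N'$ vanish under $M^P_{\mu}(\,\cdot\mid\widetilde{\cal P})$, which is true, but if $f$ is taken to be the integrand in the Jacod decomposition $N=\beta\is S^c+f\star(\mu-\nu)+g\star\mu+N'$ with the standard jump formula $\Delta(f\star(\mu-\nu))_t=f(t,\Delta S_t)I_{\{\Delta S_t\neq0\}}-\widehat f_t$, then $M^P_{\mu}(\Delta N\mid\widetilde{\cal P})=f-\widehat f$, not $f$, and the extra $-x\widehat f\star\nu$ term has to be carried into the drift of $Y$. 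The statement you want is clean only once one fixes the normalization the paper consistently uses, namely $f:=M^P_{\mu}(\Delta N\mid\widetilde{\cal P})$ (this is how $f_m$ is identified in Section~5 of the paper). With the other normalization one can check that the $-\widehat f\int x F(dx)$ contribution cancels against the shift in $f$, so the final identity $\int x(1+f_t(x))\,\nu(\{t\},dx)=0$ is unchanged — but this is exactly the "careful bookkeeping of $\widehat f$ and $a=\nu(\{t\},\mathbb{R}^d)$" that you defer. So: right approach, right endpoint, essentially the argument the cited references give; the only genuine soft spot is that the middle paragraph asserts $M^P_{\mu}(x\Delta N\mid\widetilde{\cal P})=xf$ with a justification that, as written, omits the $\widehat f$ correction rather than explaining why it is harmless under the convention in force.
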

2) In particular, we have
 \begin{equation}\label{martingalerequationJumps}
 \int x(1+f_t(x)\nu(\{t\},dx) = \int x(1+f_t(x)F_t(dx)\Delta A_t=0,\ \ \ \ \ P-a.e.\end{equation}
\begin{proof}
  The proof can be found in Choulli et al. \cite[Lemma 2.4]{choullistricker07} 2007, and also Choulli and Schweizer (2013) \cite{Choulli2012}.\qed
\end{proof}

\begin{lemma}\label{boundednessofuhat} Let $f$ be a $\widetilde{\cal P}(\mathbb H)$-measurable functional such that $f>0$  and
\begin{equation}\label{(f-1)G1loc}\Bigl[(f-1)^2\star\mu\Bigr]^{1/2}\in {\cal A}^+_{loc}(\mathbb  H).\end{equation}
Then, the $\mathbb H$-predictable process $\left(1-a^{\mathbb H}+{\widehat f}^{\mathbb H}\right)^{-1}$ is locally bounded, and hence
\begin{equation}\label{WinG1loc}
W_t(x):={{f_t(x)-1}\over{1-a^{\mathbb H}_t+{\widehat f}^{\mathbb H}_t}}\in {\cal G}^1_{loc}(\mu,\mathbb H).\end{equation}
Here, $a^{\mathbb H}_t:=\nu^{\mathbb H}(\{t\},\mathbb R^d)$, ${\widehat f}^{\mathbb H}_t:=\int f_t(x)\nu^{\mathbb H}(\{t\},dx)$ and $\nu^{\mathbb H}$ is the $\mathbb H$-predictable random measure compensator of $\mu$ under $\mathbb H$.
\end{lemma}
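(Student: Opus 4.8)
The statement packages two conclusions --- local boundedness of $(1-a^{\mathbb H}+{\widehat f}^{\mathbb H})^{-1}$, and then the membership $W\in{\cal G}^1_{loc}(\mu,\mathbb H)$ --- and the plan is to prove them in that order, the second being a soft consequence of the first. Write $\beta_t:=1-a^{\mathbb H}_t+{\widehat f}^{\mathbb H}_t=1+\int(f_t(x)-1)\,\nu^{\mathbb H}(\{t\},dx)$. Since $0\le a^{\mathbb H}\le 1$ and $f>0$ forces ${\widehat f}^{\mathbb H}\ge 0$ --- indeed ${\widehat f}^{\mathbb H}_t>0$ whenever $a^{\mathbb H}_t>0$, because then the kernel $\nu^{\mathbb H}(\{t\},\cdot)$ has strictly positive mass --- the process $\beta$ is $\mathbb H$-predictable and strictly positive; what has to be shown is that this pointwise positivity is locally uniform.

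The engine is an elementary pointwise estimate combined with a truncation. Splitting the kernel $\nu^{\mathbb H}(\{t\},dx)$ along $\{f_t\le \frac{1}{2}\}$ and $\{f_t>\frac{1}{2}\}$, bounding ${\widehat f}^{\mathbb H}_t$ below by half the mass of $\{f_t>\frac{1}{2}\}$, and using $a^{\mathbb H}_t\le 1$ together with $I_{\{f_t\le 1/2\}}\le 4(f_t-1)^2$, one obtains $\beta_t\ge \frac{1}{2}-2\ell_t$, where $\ell_t:=\int (f_t(x)-1)^2 I_{\{|f_t(x)-1|\le 1\}}\,\nu^{\mathbb H}(\{t\},dx)$; the truncation is free here because $\{f_t\le\frac{1}{2}\}\subset\{|f_t-1|\le 1\}$. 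The decisive point is to control $\ell$. Put $g:=(f-1)I_{\{|f-1|\le 1\}}$. The process $g^2\star\mu$ has jumps bounded by $1$ and, by hypothesis, finite values (its square root is dominated by $[(f-1)^2\star\mu]^{1/2}\in{\cal A}^+_{loc}(\mathbb H)$); by the standard fact that a finite-valued increasing adapted c\`adl\`ag process with uniformly bounded jumps lies in ${\cal A}^+_{loc}$ (stop it at the first time it reaches level $n$), we get $g^2\star\mu\in{\cal A}^+_{loc}(\mathbb H)$, hence also its $\mathbb H$-compensator $g^2\star\nu^{\mathbb H}\in{\cal A}^+_{loc}(\mathbb H)$, whose jump process is precisely $\ell$. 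In particular $g^2\star\nu^{\mathbb H}$ is finite-valued, so the predictable thin set $\{\ell>\frac{1}{8}\}$ --- which is contained in $\{a^{\mathbb H}>0\}$ --- is locally finite along each path. This is exactly what forbids $\beta$ from accumulating to $0$: setting $\rho_n:=\inf\{t:\beta_t\le \frac{1}{n}\}$, if $\rho_n$ did not increase to $+\infty$ there would be infinitely many distinct times in a bounded interval at which $\beta\to 0$, hence --- via $\beta\ge\frac{1}{2}-2\ell$ --- at which $\ell>\frac{1}{8}$, contradicting the local finiteness just established. A standard (pre)localization argument for predictable processes then upgrades $\rho_n\uparrow\infty$ to the statement that $\beta^{-1}$ is $\mathbb H$-locally bounded.

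For the last assertion, on any interval where $\beta^{-1}$ is bounded by a constant $c$ one has $|W|=|f-1|/\beta\le c\,|f-1|$, so $[W^2\star\mu]^{1/2}\le c\,[(f-1)^2\star\mu]^{1/2}\in{\cal A}^+_{loc}(\mathbb H)$, and $W\in{\cal G}^1_{loc}(\mu,\mathbb H)$ then follows from the stability of ${\cal G}^1_{loc}$ under multiplication by locally bounded $\mathbb H$-predictable processes; for this I would invoke \cite{Jacod} (Chap.~III.4.b) or \cite{Choulli2012}. I expect the genuine obstacle to be precisely the transfer from the pathwise process $(f-1)^2\star\mu$, on which the hypothesis is stated, to the compensator-side quantity $\ell$ that bounds $\beta$ from below; the truncation $g=(f-1)I_{\{|f-1|\le 1\}}$ together with the ``uniformly bounded jumps $\Rightarrow$ locally integrable'' observation is what bridges the two, after which only routine localization remains.
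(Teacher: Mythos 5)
Your proof is correct, but it runs on a genuinely different mechanism than the paper's. You truncate in space: with $g:=(f-1)I_{\{|f-1|\le 1\}}$ you convert the hypothesis into $g^{2}\star\mu\in{\cal A}^{+}_{loc}(\mathbb H)$ (finite values plus jumps bounded by $1$), pass to the compensator $g^{2}\star\nu^{\mathbb H}$, and use the pointwise estimate $\beta:=1-a^{\mathbb H}+{\widehat f}^{\mathbb H}\ge \frac{1}{2}-c\,\ell$ with $\ell_t=\Delta\bigl(g^{2}\star\nu^{\mathbb H}\bigr)_t$, so the times where $\beta$ could be small lie in the locally finite set where a finite-valued increasing predictable process jumps above a fixed threshold; combined with the pointwise strict positivity of $\beta$ this yields $\sup_{s\le t}\beta_s^{-1}<+\infty$ a.s. The paper instead truncates in time: it restarts the accumulated squared jumps of $1-f$ at the stopping times $T_n$ where they exceed $\delta^{2}$, uses the inequality $\widehat U_s\le V_n(s-)+\ ^{p,\mathbb H}(\Delta V_n)_s$ for $\widehat U:=a^{\mathbb H}-{\widehat f}^{\mathbb H}$, and notes that $\{\ ^{p,\mathbb H}(\Delta V_n)\ge 1-\eta-\delta\}$ meets each compact interval in finitely many points because $(V_n)^{p,\mathbb H}$ is c\`adl\`ag. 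Both arguments end identically: a predictable process with a.s. finite running supremum is locally bounded (the paper cites Dellacherie--Meyer; your ``standard (pre)localization'' is the same appeal), and then $[W^{2}\star\mu]^{1/2}\le c\,[(f-1)^{2}\star\mu]^{1/2}$ locally gives $W\in{\cal G}^{1}_{loc}(\mu,\mathbb H)$, a step the paper also treats as immediate. Your route buys a shorter and more elementary argument (no restarting construction, no predictable projection of $\Delta V_n$), at the price of the space-truncation trick. Two cosmetic points: the bound obtained from $I_{\{f\le 1/2\}}\le 4(f-1)^{2}$ is $\frac{1}{2}-4\ell$ rather than $\frac{1}{2}-2\ell$ (use $1-f\le 2(1-f)^{2}$ on $\{f\le \frac{1}{2}\}$ to get your constant; either threshold works), and the contradiction via $\rho_n$ is stated more cleanly as: $\{\beta<\frac{1}{4}\}\cap[0,t]$ is a.s. finite and $\beta>0$ everywhere, hence $\inf_{s\le t}\beta_s>0$ a.s., after which the localization lemma applies verbatim.
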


\begin{proof} The proof of this lemma can be found in Choulli and Schweizer (2013). Below we provide this proof for the sake of completeness. In this proof we put
$$
U_t(x)=1-f_t(x),\ \ \ \mbox{and}\ \ \ \widehat U_t:=a^{\mathbb H}_t-{\widehat f}^{\mathbb H}_t.$$
 We start by remarking that (\ref{WinG1loc}) follows from the combination of (\ref{(f-1)G1loc}) and the local boundedness of $1/(1-\widehat U)$. Therefore, in what follows, we will focus on proving this latter fact. Consider $\delta\in (0,1)$, $\eta\in (0,1)$, and the stopping times and processes defined by
$$\begin{array}{llll}
T_0=0,\ \ \ T_{n+1}:=\displaystyle\inf\left\{t>T_n\ \big|\  \sum_{T_n<v\leq t}\left(U_v(\Delta S_v)I_{\{ \Delta S_v\not=0\}}\right)^2>\delta^2\ \right\},\\
\\
V_n(t):=\Bigl[ \displaystyle\sum_{T_n<v\leq t}\left(U_v(\Delta S_v)I_{\{ \Delta S_v\not=0\}}\right)^2\Bigr]^{1/2}\end{array}.$$
Remark that ---since for each $n\geq 0$, the process $(V_n(t))^2$ is RCLL and nondecreasing real-valued process---- we have
$$
(V_n(T_{n+1}))^2:=\displaystyle \sum_{T_n<v\leq T_{n+1}}\left(U_v(\Delta S_v)I_{\{ \Delta S_v\not=0\}}\right)^2\geq\delta^2\ \ \mbox{on}\ \ \{T_{n+1}<+\infty\}.$$
This implies that $T_n$ increases to $+\infty$ almost surely, and
$$
V_n(t-)\leq \delta,\ \ P-a.s.\ \ \ \ \mbox{for all}\ t\leq T_{n+1}.$$
Due to $0\leq (1-\widehat U)^{-1}I_{\{ \widehat U< 1-\eta\}}\leq \eta^{-1}$ and
$$
(1-\widehat U)^{-1}=(1-\widehat U)^{-1}I_{\{ \widehat U\geq 1-\eta\}}+(1-\widehat U)^{-1}I_{\{ \widehat U< 1-\eta\}} , $$
we deduce that the proof of the lemma will achieved once we prove that
$$Y:={1\over{1-\widehat U}}I_{\{ \widehat U\geq 1-\eta\}}$$ is locally bounded.  Thanks to Dellacherie and Meyer (1980), this fact is equivalent to
$$
\sup_{0\leq u\leq t} Y_u<+\infty\ \ P-a.s.\ \ \  \mbox{for any}\ t\in (0,+\infty).$$
Since $T_n$ increases to $\infty$ almost surely, then this fact is implied by
$$
\sup_{T_n\leq u\leq t\wedge T_{n+1}} Y_u<+\infty\ \ P-a.s.\ \ \ \mbox{on}\ \ \{t>T_n\}.$$
Simple calculation leads to
$$
\widehat U_s\leq V_n(s-)+\ ^{p,\mathbb H}(\Delta V_n)_s,\ \ \ \mbox{for all}\ \ T_n<s\leq T_{n+1}. $$ Thus, it is easy to see that for $\delta+\eta<1$,
$$\begin{array}{llll}
\{s\in ]T_n,T_{n+1}]\ \big| \ \widehat U_s\geq 1-\eta\}\subset\{ s\in ]T_n,T_{n+1}]\ \big| \ ^{p,\mathbb H}(\Delta V_n)_s\geq 1-\eta-V_n(s-)\}\\
\\
\hskip 3cm \subset \{ s\in ]T_n,T_{n+1}]\ \big| \ \Delta\left((V_n)^{p,\mathbb H}\right)=\ ^{p,\mathbb H}(\Delta V_n)_s\geq 1-\eta-\delta\}=:\Gamma_n.\end{array}$$
It is obvious that $\#\left(\Gamma_n\cap [0,t]\right)<+\infty\ P-a.s.$ since $(V_n)^{p,\mathbb H}$ is a c\`adl\`ag process. Thus, we deduce that
$$
\sup_{T_n\leq u\leq t\wedge T_{n+1}} Y_u=\max_{T_n\leq u\leq t\wedge T_{n+1}} Y_u<+\infty.$$
This ends the proof of the lemma.\qed\end{proof}

\begin{proposition}\label{prop:alocundergf}
  For any $\alpha>0$, the following assertions hold:\\
  {\rm{(a)}} Let $h$ be a ${\widetilde{\cal P}(\mathbb H)}$-measurable functional. Then, $\sqrt{(h-1)^2\star \mu} \in {\cal A}^+_{loc}(\mathbb H)$ iff
  \begin{eqnarray*}
    (h-1)^2I_{\{|h-1|\leq \alpha\}}\star \mu \mbox{ and } \ |h-1|I_{\{|h-1|>\alpha\}}\star \mu\ \ \mbox{belong to}\ \ {\cal A}^+_{loc}(\mathbb H).
  \end{eqnarray*}
   {\rm{(b)}} Let $V$ be an $\mathbb F$-predictable and non-decreasing process.
   Then, $V^{\tau}\in{\cal A}_{loc}^+(\mathbb G)$ if and only if $I_{\{ Z_{-}\geq\delta\}}
   \is V\in {\cal A}_{loc}^+(\mathbb F)$ for any $\delta>0$.\\
 {\rm{(c)}}
 Let $h$ be a nonnegative and $\widetilde{\cal P}(\mathbb F)$-measurable functional.
 Then, $hI_{\Lbrack 0,\tau \Lbrack}\star \mu \in {\cal A}^+_{loc}(\mathbb G)$ if and
 only if for all $\delta >0$,
  $hI_{\{Z_{-}\geq \delta \}}\star \mu^1 \in {\cal A}^+_{loc}(\mathbb  F)$,
  where $\mu^1:=\widetilde{Z}\centerdot\mu.$\\
  {\rm{(d)}} Let $f$ be positive and $\widetilde{\cal P}(\mathbb F)$-measurable, and $\mu^1:=\widetilde{Z}\centerdot\mu.$ Then $\sqrt{(f-1)^2I_{\Lbrack 0,\tau \Lbrack}\star \mu} \in {\cal A}^+_{loc}(\mathbb G)$ iff $\sqrt{(f-1)^2I_{\{Z_{-}\geq \delta\}}\star \mu^1} \in {\cal A}^+_{loc}(\mathbb F)$, for all $\delta >0.$
\end{proposition}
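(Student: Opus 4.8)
The plan is to prove {\rm (a)}, then {\rm (b)}, then {\rm (c)}, and finally to obtain {\rm (d)} by applying {\rm (a)} twice together with {\rm (c)}. Assertions {\rm (b)}--{\rm (d)} will use two standard facts about the progressive enlargement before $\tau$. Since $Z_{-}^{-1}I_{\Lbrack 0,\tau\Lbrack}$ is $\mathbb G$-locally bounded, there is a family $(\tau_\delta)_{\delta>0}$ of $\mathbb G$-stopping times with $\tau_\delta\uparrow+\infty$ as $\delta\downarrow 0$ and $\Lbrack 0,\tau\wedge\tau_\delta\Lbrack\subset\{Z_{-}\geq\delta\}$; by left-continuity of $Z_{-}$ this also forces $Z_{(\tau\wedge\tau_\delta)-}\geq\delta$. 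Moreover, every $\mathbb G$-stopping time $\sigma$ satisfies $\sigma\wedge\tau=\sigma^{\mathbb F}\wedge\tau$ for some $\mathbb F$-stopping time $\sigma^{\mathbb F}$, while ${}^{p,\mathbb F}(I_{\Lbrack 0,\tau\Lbrack})=Z_{-}$ and ${}^{o,\mathbb F}(I_{\Lbrack 0,\tau\Lbrack})=\widetilde Z$. These ingredients are exactly what is needed to convert $\mathbb G$-local integrability of processes supported on $\Lbrack 0,\tau\Lbrack$ into $\mathbb F$-local integrability of their $\{Z_{-}\geq\delta\}$-truncations, and conversely.

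For {\rm (a)}, set $A:=(h-1)^2$, so that the claim reads: $C:=\sqrt{A\star\mu}\in{\cal A}^+_{loc}(\mathbb H)$ iff $AI_{\{A\leq\alpha^2\}}\star\mu$ and $\sqrt{A}\,I_{\{A>\alpha^2\}}\star\mu$ lie in ${\cal A}^+_{loc}(\mathbb H)$. For the ``only if'' sense I would take a localizing sequence $T_n\uparrow+\infty$ with $E[C_{T_n}]<+\infty$, refined so that $C_{T_n-}\leq n$, whence $(A\star\mu)_{T_n-}=C_{T_n-}^2\leq n^2$; then $AI_{\{A\leq\alpha^2\}}\star\mu$ is bounded by $n^2+\alpha^2$ up to time $T_n$ (its terminal jump being $\leq\alpha^2$), and $(\sqrt{A}\,I_{\{A>\alpha^2\}}\star\mu)_{T_n}\leq\alpha^{-1}(A\star\mu)_{T_n-}+\sqrt{(A\star\mu)_{T_n}}\leq\alpha^{-1}n^2+C_{T_n}$, both integrable at $T_n$. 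For ``if'' I would use $C\leq\sqrt{AI_{\{A\leq\alpha^2\}}\star\mu}+\sqrt{AI_{\{A>\alpha^2\}}\star\mu}$, the first summand being $\mathbb H$-locally bounded (as $AI_{\{A\leq\alpha^2\}}\star\mu\in{\cal A}^+_{loc}(\mathbb H)$ has bounded jumps) and the second dominated by $\sqrt{A}\,I_{\{A>\alpha^2\}}\star\mu$ via $\sqrt{\sum_i b_i}\leq\sum_i\sqrt{b_i}$. Nothing in this argument uses that $\mu$ is the jump measure of a fixed process, so it applies verbatim to any thin random measure.

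For {\rm (b)}: given a $\mathbb G$-stopping time $\sigma$, reduce it to $\sigma^{\mathbb F}$; since $V^\tau_\sigma=V^\tau_{\sigma^{\mathbb F}}$ and $I_{\{Z_{-}\geq\delta\}}\leq\delta^{-1}Z_{-}$, the predictable-projection identity for $V$ gives $E[(I_{\{Z_{-}\geq\delta\}}\is V)_{\sigma^{\mathbb F}}]\leq\delta^{-1}E[(Z_{-}\is V)_{\sigma^{\mathbb F}}]=\delta^{-1}E[(I_{\Lbrack 0,\tau\Lbrack}\is V)_{\sigma^{\mathbb F}}]\leq\delta^{-1}E[V^\tau_{\sigma^{\mathbb F}}]$, and running $\sigma$ through a $\mathbb G$-localizing sequence of $V^\tau$ yields the right-hand side. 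Conversely, if $I_{\{Z_{-}\geq\delta\}}\is V\in{\cal A}^+_{loc}(\mathbb F)$ for all $\delta$, then $Z_{-}\geq\delta$ on all of $\Lbrack 0,\tau\wedge\tau_\delta\Lbrack$, endpoint included, so $V^{\tau\wedge\tau_\delta}=V_0+(I_{\{Z_{-}\geq\delta\}}\is V)^{\tau\wedge\tau_\delta}\in{\cal A}^+_{loc}(\mathbb G)$, and letting $\delta\downarrow 0$ gives $V^\tau\in{\cal A}^+_{loc}(\mathbb G)$. Assertion {\rm (c)} follows from the same scheme applied to the optional increasing process $h\star\mu$, the extra ingredient being $E[(hI_{\Lbrack 0,\tau\Lbrack}\star\mu)_T]=E[(h\star\mu^1)_T]$ for nonnegative $\widetilde{\cal P}(\mathbb F)$-measurable $h$ and $\mathbb F$-stopping times $T$, obtained by replacing the optional factor $I_{\Lbrack 0,\tau\Lbrack}$ in the integral against $h\star\mu$ by its $\mathbb F$-optional projection $\widetilde Z$; because this integrand vanishes after $\tau$, a reduced $\mathbb G$-stopping time gives the same value, and the $\tau_\delta$-argument handles the converse exactly as in {\rm (b)}.

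Finally, {\rm (d)} is obtained by chaining: {\rm (a)} with $\mathbb H=\mathbb G$ (for the measure $I_{\Lbrack 0,\tau\Lbrack}\is\mu$) replaces $\sqrt{(f-1)^2I_{\Lbrack 0,\tau\Lbrack}\star\mu}\in{\cal A}^+_{loc}(\mathbb G)$ by the $\mathbb G$-local integrability of $(f-1)^2I_{\{|f-1|\leq\alpha\}}I_{\Lbrack 0,\tau\Lbrack}\star\mu$ and $|f-1|I_{\{|f-1|>\alpha\}}I_{\Lbrack 0,\tau\Lbrack}\star\mu$; {\rm (c)}, applied to each of these nonnegative functionals, replaces them by the $\mathbb F$-local integrability, for every $\delta>0$, of $(f-1)^2I_{\{|f-1|\leq\alpha\}}I_{\{Z_{-}\geq\delta\}}\star\mu^1$ and $|f-1|I_{\{|f-1|>\alpha\}}I_{\{Z_{-}\geq\delta\}}\star\mu^1$; and {\rm (a)} with $\mathbb H=\mathbb F$ (for $I_{\{Z_{-}\geq\delta\}}\is\mu^1$) reassembles $\sqrt{(f-1)^2I_{\{Z_{-}\geq\delta\}}\star\mu^1}\in{\cal A}^+_{loc}(\mathbb F)$. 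I expect the main obstacle to be the endpoint bookkeeping in the ``if'' directions of {\rm (b)} and {\rm (c)}: one must check that the terminal jump of $V^{\tau\wedge\tau_\delta}$ (respectively of $(hI_{\Lbrack 0,\tau\Lbrack}\star\mu)^{\tau\wedge\tau_\delta}$) still lands on $\{Z_{-}\geq\delta\}$, so that it remains dominated by the $\{Z_{-}\geq\delta\}$-truncation; the remaining steps, including the refinement $C_{T_n-}\leq n$ in {\rm (a)}, are routine localization and projection manipulations.
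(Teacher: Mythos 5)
The paper itself offers no proof of this proposition (it is recalled from the companion papers \cite{aksamit/choulli/deng/jeanblanc2, aksamit/choulli/deng/jeanblanc3}), so your proposal has to stand on its own. Part (a) is correct and complete: the refinement of the localizing times so that $C_{T_n-}\leq n$, the bound $\sqrt{A}I_{\{A>\alpha^2\}}\leq \alpha^{-1}A$ below the terminal jump, and the inequalities $\sqrt{x+y}\leq\sqrt{x}+\sqrt{y}$, $\sqrt{\sum_i b_i}\leq\sum_i\sqrt{b_i}$ give exactly the standard argument, and it indeed applies to any thin optional random measure. The ``if'' ($\mathbb F\Rightarrow\mathbb G$) halves of (b) and (c) are also sound in substance: an $\mathbb F$-localizing sequence is a $\mathbb G$-localizing sequence, the projection identity can be applied at $\mathbb F$-stopping times (so no reduction is needed in that direction), and the family $\tau_\delta$ with $\Rbrack 0,\tau\wedge\tau_\delta\Lbrack\subset\{Z_-\geq\delta\}$, including your endpoint remark $Z_{(\tau\wedge\tau_\delta)-}\geq\delta$, finishes the argument; this is the same device the paper uses in Step 2 of the proof of Theorem \ref{main4}. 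The chaining that yields (d) from (a) and (c) is also fine.

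The genuine gap is in the ``only if'' ($\mathbb G\Rightarrow\mathbb F$) halves of (b) and (c), which (d) inherits. From a $\mathbb G$-localizing sequence $(\sigma_n)$ of $V^\tau$ (resp.\ of $hI_{\Lbrack 0,\tau\Lbrack}\star\mu$) you pass to reduced $\mathbb F$-stopping times $\sigma_n^{\mathbb F}$ with $\sigma_n\wedge\tau=\sigma_n^{\mathbb F}\wedge\tau$ and correctly obtain $E\bigl[(I_{\{Z_-\geq\delta\}}\is V)_{\sigma_n^{\mathbb F}}\bigr]<+\infty$ (resp.\ the analogous bound for $hI_{\{Z_-\geq\delta\}}\star\mu^1$). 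But the reduction pins $\sigma_n^{\mathbb F}$ down only on $\Lbrack 0,\tau\Rbrack$: on the event where $\tau$ has already occurred the $\sigma_n^{\mathbb F}$ may stay bounded even though $\sigma_n\uparrow+\infty$, so $(\sigma_n^{\mathbb F})$ is in general \emph{not} an $\mathbb F$-localizing sequence, while the $\mathbb F$-process keeps accruing mass after $\tau$ on $\{Z_-\geq\delta\}$ (which is an $\mathbb F$-set and does persist beyond $\tau$ for paths where $\tau$ came early). Hence the sentence ``running $\sigma$ through a $\mathbb G$-localizing sequence of $V^\tau$ yields the right-hand side'' is exactly the unproved step; this transfer is the crux of the proposition and requires an additional argument that quantitatively uses the link between $Z$ and $\tau$ (for instance, modifying $\sigma_n^{\mathbb F}$ on $\{Z_{\sigma_n^{\mathbb F}}=0\}$, where $\{Z_-\geq\delta\}$ is never charged afterwards, and controlling the remaining event, or passing through dual predictable projections and the reduction of the $\mathbb G$-compensator), none of which appears in your sketch. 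You in fact identify the ``endpoint bookkeeping'' in the easy direction as the main obstacle, whereas that part is routine; the hard part is precisely the localization you wave through. Note also that for a finite real-valued predictable $V$ assertion (b) is vacuous (such a $V$ is automatically $\mathbb F$-locally bounded), so the nontrivial content of (b)--(d) lives exactly in the regime your reduction step does not reach.
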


\begin{proposition}\label{Gstoppingtimeaftertau}
  Suppose that  $\tau$ is a finite honest time satisfying (\ref{mainassumptionontau}). Then, the following properties hold.\\
   {\rm{(a)}} Let $\Phi^{\mathbb G}$ a $\mathbb G$-predictable process and $k$ a nonnegative and $\widetilde{\cal P}(\mathbb F)$-measurable functional such that $0<\Phi^{\mathbb G}\leq 1 $ and $\Phi^{\mathbb G}k\star\mu_{\mathbb G}\in {\cal A}^+_{loc}(\mathbb G)$. Then, $P\otimes A$-a.e.
   \begin{equation}\label{finitenessmuG}
   \int k(x)\left(1-Z_{-}-f_m(x)\right)F(dx)<+\infty\ \ \ \ \mbox{on}\ \ \ \{Z_{-}<1\}.\end{equation}
  {\rm{(b)}} Let $f$ be a $\widetilde{\cal P}(\mathbb F)$-measurable and positive functional, and ${\overline\mu}:=(1-\widetilde{Z})\cdot \mu.$ Then $\sqrt{(f-1)^2I_{\Lbrack \tau,+\infty \Rbrack}\star \mu} \in {\cal A}^+_{loc}(\mathbb G)$ if and only if $\sqrt{(f-1)^2 I_{\{ 1-Z_{-}\geq\delta\}}\star {\overline\mu} }\in {\cal A}^+_{loc}(\mathbb F)$ for any $\delta>0$.
\end{proposition}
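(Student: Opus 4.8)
The plan is to prove both assertions as the post-$\tau$ counterparts of the pre-$\tau$ statements collected in Proposition~\ref{prop:alocundergf}, exploiting the symmetry that honesty of $\tau$ together with $Z_{\tau}<1$ (see (\ref{mainassumptionontau})) creates between the two regimes, with the roles of $Z$, $\widetilde Z$, $\mu^1=\widetilde Z\is\mu$ now played by $1-Z$, $1-\widetilde Z$, $\overline\mu=(1-\widetilde Z)\is\mu$. Two structural facts would be set up first. (i) The $\mathbb G$-predictable compensator of $\mu^{\mathbb G}_a:=I_{\Lbrack\tau,+\infty\Rbrack}\is\mu$ is $\nu^{\mathbb G}_a=\bigl(1-f_m(1-Z_-)^{-1}\bigr)I_{\Lbrack\tau,+\infty\Rbrack}\is\nu$, with $f_m$ as in (\ref{decompositionofm}), together with the identity $1-Z_--f_m=M^P_{\mu}\bigl(1-\widetilde Z\mid\widetilde{\cal P}(\mathbb F)\bigr)\ge 0$, so that $(1-Z_--f_m)\is\nu$ is exactly the $\mathbb F$-predictable compensator of $\overline\mu$. (ii) Since $\tau$ is honest with $Z_{\tau}<1$, the process $(1-Z_-)^{-1}I_{\Lbrack\tau,+\infty\Rbrack}$ is $\mathbb G$-locally bounded and $\Lbrack\tau,+\infty\Rbrack\subseteq\{Z_-<1\}$ off an evanescent set, whence, for any $\mathbb F$-predictable nondecreasing $V$, $I_{\Lbrack\tau,+\infty\Rbrack}\is V\in{\cal A}^+_{loc}(\mathbb G)$ iff $I_{\{1-Z_-\ge\delta\}}\is V\in{\cal A}^+_{loc}(\mathbb F)$ for all $\delta>0$ (same stopping-time argument as Proposition~\ref{prop:alocundergf}--(b)), and the $\mathbb G$-dual projections over $\Lbrack\tau,+\infty\Rbrack$ are obtained from the $\mathbb F$-ones by inserting $1-\widetilde Z$ and dividing by $1-Z_-$ (Proposition~5.3 of \cite{Jeu}). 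Both (i) and (ii) are essentially recorded in \cite{aksamit/choulli/deng/jeanblanc3} (cf. Proposition~\ref{propositionofACDJ2014b}).

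For (a): as $0<\Phi^{\mathbb G}\le 1$ and $\Phi^{\mathbb G}k\star\mu_{\mathbb G}\in{\cal A}^+_{loc}(\mathbb G)$, where $\mu_{\mathbb G}=\mu^{\mathbb G}_a$ is the $\mathbb G$-jump measure of $S-S^{\tau}$, its $\mathbb G$-compensator $\Phi^{\mathbb G}k\star\nu^{\mathbb G}_a$ again lies in ${\cal A}^+_{loc}(\mathbb G)$ and is therefore $P$-a.s. finite valued; reading off the integrand against $A$ and using $\Phi^{\mathbb G}>0$ and $1-Z_->0$ on $\Lbrack\tau,+\infty\Rbrack$ forces $\int k(x)\bigl(1-f_m(x)(1-Z_-)^{-1}\bigr)F(dx)<+\infty$, i.e. $\int k(x)\bigl(1-Z_--f_m(x)\bigr)F(dx)<+\infty$, $P\otimes A$-a.e. on $\Lbrack\tau,+\infty\Rbrack$. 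It remains to extend this to the $\mathbb F$-predictable set $\{Z_-<1\}$: by (i) the integrand is the atom of the $\mathbb F$-compensator of $k\is\overline\mu$, which is carried by $\{\,^{p,\mathbb F}(I_{\{\widetilde Z<1\}})>0\,\}$; the honest-time analysis of \cite{aksamit/choulli/deng/jeanblanc3} shows that the $\mathbb F$-predictable set where this atom is infinite meets $\Lbrack\tau,+\infty\Rbrack$ only on an evanescent set and is carried by $\{Z_-=1\}$, where $1-\widetilde Z\le 1-Z_-=0$ forces $1-Z_--f_m=M^P_{\mu}(1-\widetilde Z\mid\widetilde{\cal P}(\mathbb F))=0$ and the integral is trivially null; hence the finiteness holds on all of $\{Z_-<1\}$.

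For (b): by Proposition~\ref{prop:alocundergf}--(a), applied first in $\mathbb G$ to $(f-1)^2I_{\Lbrack\tau,+\infty\Rbrack}\star\mu$ and then in $\mathbb F$ to $(f-1)^2I_{\{1-Z_-\ge\delta\}}\star\overline\mu$, it suffices to prove, for a nonnegative $\widetilde{\cal P}(\mathbb F)$-measurable $h$ (namely $h=(f-1)^2I_{\{|f-1|\le\alpha\}}$ and $h=|f-1|I_{\{|f-1|>\alpha\}}$), that $hI_{\Lbrack\tau,+\infty\Rbrack}\star\mu\in{\cal A}^+_{loc}(\mathbb G)$ iff $hI_{\{1-Z_-\ge\delta\}}\star\overline\mu\in{\cal A}^+_{loc}(\mathbb F)$ for every $\delta>0$. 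For the direct implication, take the $\mathbb G$-compensator $h\star\nu^{\mathbb G}_a=h(1-Z_--f_m)(1-Z_-)^{-1}I_{\Lbrack\tau,+\infty\Rbrack}\star\nu\in{\cal A}^+_{loc}(\mathbb G)$, multiply by the $\mathbb G$-locally bounded process $(1-Z_-)I_{\Lbrack\tau,+\infty\Rbrack}$ to get $h(1-Z_--f_m)I_{\Lbrack\tau,+\infty\Rbrack}\star\nu\in{\cal A}^+_{loc}(\mathbb G)$, and invoke (ii) to land on $hI_{\{1-Z_-\ge\delta\}}\star\bigl((1-Z_--f_m)\is\nu\bigr)\in{\cal A}^+_{loc}(\mathbb F)$ for all $\delta$; by (i) this is the $\mathbb F$-compensator of $hI_{\{1-Z_-\ge\delta\}}\star\overline\mu$, which is thus in ${\cal A}^+_{loc}(\mathbb F)$. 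For the converse, run the chain backwards: from $hI_{\{1-Z_-\ge\delta\}}\star\overline\mu\in{\cal A}^+_{loc}(\mathbb F)$ for all $\delta$ take its $\mathbb F$-compensator $h(1-Z_--f_m)I_{\{1-Z_-\ge\delta\}}\star\nu$, divide by $1-Z_-\ge\delta$, use (ii) to transfer to $\mathbb G$ over $\Lbrack\tau,+\infty\Rbrack$, obtaining $h(1-f_m(1-Z_-)^{-1})I_{\Lbrack\tau,+\infty\Rbrack}\star\nu=h\star\nu^{\mathbb G}_a\in{\cal A}^+_{loc}(\mathbb G)$, which is the $\mathbb G$-compensator of $hI_{\Lbrack\tau,+\infty\Rbrack}\star\mu$.

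The main obstacle is fact (ii): establishing the $\mathbb G$-local boundedness of $(1-Z_-)^{-1}$ on $\Lbrack\tau,+\infty\Rbrack$ and carefully keeping track of which $\mathbb F$-predictable and $\mathbb G$-optional sets ($\{Z_-<1\}$, $\{1-Z_-\ge\delta\}$, $\Lbrack\tau,+\infty\Rbrack$) carry the various integrands and how local integrability transfers across the honest time $\tau$ --- precisely the material imported from \cite{aksamit/choulli/deng/jeanblanc3} and \cite{Jeu} --- whereas the identification of $\nu^{\mathbb G}_a$ and the $M^P_{\mu}(1-\widetilde Z\mid\cdot)$ identity in (i) come from the canonical-decomposition computations already used in Section~\ref{SectionAfterTau}; granting these, the remainder is the routine ${\cal A}^+_{loc}$-calculus of Proposition~\ref{prop:alocundergf} and \cite{aksamit/choulli/deng/jeanblanc2}.
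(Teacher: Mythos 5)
The paper itself states Proposition~\ref{Gstoppingtimeaftertau} in the appendix \emph{without proof} (it is imported from the honest-time analysis in \cite{aksamit/choulli/deng/jeanblanc3}), so there is no in-paper proof to compare your attempt against; I therefore assess the proposal on its own merits.

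Your overall strategy is the right one: recognize the statement as the post-$\tau$ mirror of Proposition~\ref{prop:alocundergf} under the substitution $Z\leadsto 1-Z$, $\widetilde Z\leadsto 1-\widetilde Z$, $\widetilde Z\is\mu\leadsto(1-\widetilde Z)\is\mu$, with fact (i) (the form of $\nu^{\mathbb G}_a$ and the identity $1-Z_--f_m=M^P_{\mu}(1-\widetilde Z\mid\widetilde{\cal P}(\mathbb F))$, hence that $(1-Z_--f_m)\is\nu$ compensates $\overline\mu$ under $\mathbb F$) and fact (ii) (local boundedness of $(1-Z_-)^{-1}$ on $\Lbrack\tau,+\infty\Rbrack$ and the transfer of $\mathcal A^+_{loc}$ between $I_{\Lbrack\tau,+\infty\Rbrack}\is V$ under $\mathbb G$ and $I_{\{1-Z_-\geq\delta\}}\is V$ under $\mathbb F$) as the engine. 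Part (b) is handled correctly: the reduction via Proposition~\ref{prop:alocundergf}--(a) to the two nonnegative integrands, then compensate, multiply or divide by $1-Z_-$ (bounded, resp.\ $\leq\delta^{-1}$ on the relevant set), and transfer across $\tau$ via (ii) in both directions. This is a clean, correct proof granted the imported facts.

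The one step in (a) that is stated too loosely is the extension from $\Lbrack\tau,+\infty\Rbrack$ to the $\mathbb F$-predictable set $\{Z_-<1\}$. You write that ``the honest-time analysis of \cite{aksamit/choulli/deng/jeanblanc3} shows that the $\mathbb F$-predictable set where this atom is infinite \dots is carried by $\{Z_-=1\}$,'' and then observe the integrand vanishes on $\{Z_-=1\}$; but the first assertion is precisely the content to be established, and the subsequent observation is redundant once it is. The missing argument is short and should be spelled out: let $B:=\{\int k(1-Z_--f_m)F(dx)=+\infty\}\cap\{Z_-<1\}$, an $\mathbb F$-predictable set, and note that what the finiteness of the $\mathbb G$-compensator gives is $E\int I_B\,I_{\Lbrack\tau,+\infty\Rbrack}\,dA=0$. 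Since $A$ is $\mathbb F$-predictable, one may replace $I_{\Lbrack\tau,+\infty\Rbrack}$ by its $\mathbb F$-predictable projection, which equals $1-Z_-$ (this follows from $\widetilde Z=Z_-+\Delta m$, so that $^{p,\mathbb F}(I_{\Lbrack 0,\tau\Rbrack})=Z_-$). Hence $E\int I_B\,(1-Z_-)\,dA=0$, and since $1-Z_->0$ on $B$ this forces $I_B=0$, $P\otimes dA$-a.e. With that substitution the proof of (a) is complete.

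Two smaller points worth recording explicitly if you write this up: (i) the application of Proposition~\ref{prop:alocundergf}--(a) in $\mathbb G$ is to the integrand $(f-1)I_{\Lbrack\tau,+\infty\Rbrack}$, which is only $\widetilde{\cal O}(\mathbb G)$-measurable; the pointwise truncation argument behind that proposition does not use predictability, so this is harmless, but it is not a literal application of the statement as worded; (ii) the transfer lemma (ii) is itself nontrivial and is the analogue, for the part after an honest time, of Proposition~\ref{prop:alocundergf}--(b); it does need a proof (via a stopping-time argument on $\Rbrack 0,\tau_\delta\Lbrack\subset\{1-Z_-\geq\delta\}$ composed with the $\mathbb G$-stopping times localizing $(1-Z_-)^{-1}I_{\Lbrack\tau,+\infty\Rbrack}$, together with the $\mathbb F/\mathbb G$ dual-projection dictionary from Proposition~5.3 of \cite{Jeu}), and you correctly flag this as the main external input.
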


\section{Representation Results}

\begin{lemma}\label{lemma:predsetFG} The  following assertions
 hold.\\
 {\rm{(a)}} If $H^{\mathbb G}$ is a $\widetilde{\cal P}(\mathbb G)$-measurable functional, then there exist an $\widetilde{\cal
P}(\mathbb F)$-measurable functional $H^{\mathbb F}$such that
\begin{eqnarray}\label{eq:widePGandwidePF}
H^{\mathbb G}(\omega,t,x)I_{\Lbrack 0,\tau\Lbrack} = H^{\mathbb F}(\omega,t,x) I_{\Lbrack 0,\tau\Lbrack}.
\end{eqnarray}
 {\rm{(b)}} If furthermore $H^{\mathbb G}>0$ (respectively $H^{\mathbb G}\leq 1$), then we can choose $H^{\mathbb F}>0$ (respectively $H^{\mathbb F}\leq 1$) such that $$H^{\mathbb G}(\omega,t,x)I_{\Lbrack 0,\tau\Lbrack} = H^{\mathbb F}(\omega,t,x) I_{\Lbrack 0,\tau\Lbrack}.$$
( {\rm{(c)}} For any $\mathbb F$-stopping time, $T$, and any positive  ${\cal G}_T$-measurable random variable $Y^{\mathbb G}$, there exist two positive ${\cal F}_{T}$-measurable random variables, $Y^{(1)}$ and $Y^{(2)}$, satisfying
\begin{equation}\label{splitatT}
 Y^{\mathbb G}I_{\{ T\leq \tau\}}=Y^{(1)}I_{\{T<\tau\}}+Y^{(2)}I_{\{\tau=T\}}.\end{equation}
\end{lemma}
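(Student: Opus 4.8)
The three items rest on the classical transfer principle between $\mathbb F$ and $\mathbb G$ on the stochastic interval before $\tau$, lifted from processes to $\widetilde{\cal P}$-measurable functionals by a monotone-class argument; parts (b) and (c) then follow by elementary manipulations. For \textbf{(a)} the plan is two-staged. First, one invokes the process-level statement: every ${\cal P}(\mathbb G)$-measurable process coincides on $\Lbrack 0,\tau\Lbrack$ with a ${\cal P}(\mathbb F)$-measurable one. This is classical (see \cite{Jeu}); its root is the elementary fact that for each $s\geq 0$ and each $A\in{\cal G}_s$ there is $B\in{\cal F}_s$ with $A\cap\{s<\tau\}=B\cap\{s<\tau\}$ (because $D$ vanishes on $\Lbrack 0,\tau\Lbrack$), combined with the fact that ${\cal P}(\mathbb G)$ is generated by $\{0\}\times{\cal G}_0$ and the sets $(s,+\infty)\times A$ with $A\in{\cal G}_s$, together with the left-limit construction of predictable processes. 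Second, since $\widetilde{\cal P}(\mathbb G)={\cal P}(\mathbb G)\otimes{\cal B}(\mathbb R^d)$, one runs a functional monotone-class argument: the set of bounded $\widetilde{\cal P}(\mathbb G)$-measurable functionals $H^{\mathbb G}$ for which an $\mathbb F$-version agreeing with $H^{\mathbb G}$ on $\Lbrack 0,\tau\Lbrack$ exists is a vector space, is stable under bounded monotone limits (for $H^{\mathbb G}_n\uparrow H^{\mathbb G}$ with $H^{\mathbb G}_n=H^{\mathbb F}_n$ on $\Lbrack 0,\tau\Lbrack$, take $H^{\mathbb F}:=\limsup_n H^{\mathbb F}_n$, which is $\widetilde{\cal P}(\mathbb F)$-measurable and still equals $H^{\mathbb G}$ on $\Lbrack 0,\tau\Lbrack$), and contains the multiplicative generating class $\{K^{\mathbb G}(\omega,t)g(x):K^{\mathbb G}\in{\cal P}(\mathbb G),\ g\in{\cal B}(\mathbb R^d)\}$ by the first stage; hence it contains all bounded $\widetilde{\cal P}(\mathbb G)$-functionals, and truncation removes boundedness.

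Part \textbf{(b)} is a cosmetic correction of the $H^{\mathbb F}$ obtained in (a). If $H^{\mathbb G}>0$, put $\overline H^{\mathbb F}:=H^{\mathbb F}I_{\{H^{\mathbb F}>0\}}+I_{\{H^{\mathbb F}\leq 0\}}$: this is $\widetilde{\cal P}(\mathbb F)$-measurable and strictly positive, and on $\Lbrack 0,\tau\Lbrack$, where $H^{\mathbb F}=H^{\mathbb G}>0$ so that $\Lbrack 0,\tau\Lbrack\subset\{H^{\mathbb F}>0\}$, it equals $H^{\mathbb F}$, hence equals $H^{\mathbb G}I_{\Lbrack 0,\tau\Lbrack}$. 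Symmetrically, for $H^{\mathbb G}\leq 1$ one takes $\overline H^{\mathbb F}:=H^{\mathbb F}I_{\{H^{\mathbb F}\leq 1\}}+I_{\{H^{\mathbb F}>1\}}$ (and, when both constraints are imposed, $\overline H^{\mathbb F}:=H^{\mathbb F}I_{\{0<H^{\mathbb F}\leq 1\}}+I_{\{H^{\mathbb F}\leq 0\}\cup\{H^{\mathbb F}>1\}}$), which works for the same reason.

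For \textbf{(c)}, I would write $Y^{\mathbb G}I_{\{T\leq\tau\}}=Y^{\mathbb G}I_{\{T<\tau\}}+Y^{\mathbb G}I_{\{T=\tau\}}$ (a disjoint split) and treat the two pieces with the corresponding ${\cal G}_T$-to-${\cal F}_T$ transfer facts for an $\mathbb F$-stopping time $T$ (both classical, \cite{Jeu}): on $\{T<\tau\}$ there is an ${\cal F}_T$-measurable $\widetilde Y^{(1)}$ with $Y^{\mathbb G}I_{\{T<\tau\}}=\widetilde Y^{(1)}I_{\{T<\tau\}}$, and on $\{T=\tau\}$ there is an ${\cal F}_T$-measurable $\widetilde Y^{(2)}$ with $Y^{\mathbb G}I_{\{T=\tau\}}=\widetilde Y^{(2)}I_{\{T=\tau\}}$ (the second because, on $\{T=\tau\}$, $D$ stopped at $T$ is a deterministic function of $T$, so ${\cal G}_T$ adds nothing ${\cal F}_T$-measurable there). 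Adding the two identities and replacing each $\widetilde Y^{(i)}$ by $Y^{(i)}:=\widetilde Y^{(i)}I_{\{\widetilde Y^{(i)}>0\}}+I_{\{\widetilde Y^{(i)}\leq 0\}}$ — legitimate since $Y^{\mathbb G}>0$ forces $\widetilde Y^{(i)}>0$ on the relevant event — yields the stated decomposition with positive ${\cal F}_T$-measurable coefficients.

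The only genuinely non-routine ingredients are the transfer facts quoted from \cite{Jeu} (the ${\cal G}_s$–$\{s<\tau\}$ statement and its counterparts at an $\mathbb F$-stopping time $T$ and at $\tau$); the step I expect to demand the most care in writing up is the functional monotone-class passage in (a), where one must ensure that the chosen $\mathbb F$-version stays measurable along the approximation (hence the use of $\limsup$) and that the equality on $\Lbrack 0,\tau\Lbrack$ persists through both the monotone limit and the tensoring with ${\cal B}(\mathbb R^d)$. This is not deep, but it is the place where a careless argument would fail.
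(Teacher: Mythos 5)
The paper does not prove Lemma~\ref{lemma:predsetFG}; it is stated in Appendix~B (``Representation Results'') with no argument at all, the implicit attitude being that these are known transfer facts from the enlargement-of-filtration literature, with \cite{Jeu} the natural source. So there is no paper proof to compare against, and your write-up is best assessed on its own terms.

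Your reconstruction is correct and is the standard way to prove such statements. For (a), lifting the process-level transfer (any ${\cal P}(\mathbb G)$-process coincides on $\Lbrack 0,\tau\Lbrack$, indeed even on $\Lbrack 0,\tau\Rbrack$, with a ${\cal P}(\mathbb F)$-process, Jeulin) to $\widetilde{\cal P}$-functionals via a functional monotone-class argument over the multiplicative class $\{K^{\mathbb G}(\omega,t)g(x)\}$ is exactly what one should do, and your $\limsup$ device is the right way to preserve $\widetilde{\cal P}(\mathbb F)$-measurability through the monotone limit. An equivalent formulation, slightly cleaner to cite, is at the level of trace $\sigma$-fields: $\widetilde{\cal P}(\mathbb G)\cap(\Lbrack 0,\tau\Lbrack\times\mathbb R^d)=\widetilde{\cal P}(\mathbb F)\cap(\Lbrack 0,\tau\Lbrack\times\mathbb R^d)$, proved by the same monotone-class run on sets rather than functions; the two routes buy the same thing. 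Your (b) is the correct cosmetic fix and costs nothing.

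The one place where your argument is looser than it should be is the $\{T=\tau\}$ piece of (c). The transfer on $\{T<\tau\}$ (existence of ${\cal F}_T$-measurable $\widetilde Y^{(1)}$ with $Y^{\mathbb G}I_{\{T<\tau\}}=\widetilde Y^{(1)}I_{\{T<\tau\}}$) is unambiguously classical. For $\{T=\tau\}$ you write that ``$D$ stopped at $T$ is a deterministic function of $T$, so ${\cal G}_T$ adds nothing ${\cal F}_T$-measurable there'': this is the right intuition but it is an assertion about the trace $\sigma$-field ${\cal G}_T\cap\{T=\tau\}$, and to make it a proof you should invoke (or derive) the structural fact that, for an $\mathbb F$-stopping time $T$,
$$
{\cal G}_T\cap\{T\leq\tau\}=\bigl(\sigma\bigl({\cal F}_T,\{T<\tau\}\bigr)\bigr)\cap\{T\leq\tau\},
$$
from which the two-variable decomposition $Y^{\mathbb G}I_{\{T\leq\tau\}}=Y^{(1)}I_{\{T<\tau\}}+Y^{(2)}I_{\{T=\tau\}}$ drops out in one step (a set in $\sigma({\cal F}_T,\{T<\tau\})$ intersected with $\{T<\tau\}$ or with $\{T=\tau\}$ lies in the corresponding trace of ${\cal F}_T$). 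This also explains why the statement genuinely needs two ${\cal F}_T$-measurable variables rather than one: the event $\{T<\tau\}$ itself is ${\cal G}_T$- but not ${\cal F}_T$-measurable in general. Your final positivity correction $Y^{(i)}:=\widetilde Y^{(i)}I_{\{\widetilde Y^{(i)}>0\}}+I_{\{\widetilde Y^{(i)}\leq 0\}}$ is fine for the same reason as in (b). In short: the approach is sound and matches what the authors implicitly delegate to \cite{Jeu}; tighten the $\{T=\tau\}$ step with the trace-$\sigma$-field identity above (or a precise citation) and the proof is complete.
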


\begin{lemma}\label{lemma:predsetFG} Suppose that $\tau$ is honest. Let $H^{\mathbb G}$ be an $\widetilde{\cal P}(\mathbb
G)$-measurable functional. Then the following assertions
 hold.\\
{\rm{(a)}}  There exist two $\widetilde{\cal
P}(\mathbb F)$-measurable functional $H^{\mathbb F}$ and $K^{\mathbb F}$ such
that
\begin{eqnarray}\label{eq:widePGandwidePF}
H^{\mathbb G}(\omega,t,x) = H^{\mathbb F}(\omega,t,x) I_{\Lbrack
0,\tau\Lbrack}+ K^{\mathbb F}(\omega,t,x)I_{\Lbrack
\tau,+\infty\Rbrack}.
\end{eqnarray}
{\rm{(b)}}  If furthermore $H^{\mathbb G}>0$ (respectively $H^{\mathbb
G}\leq 1$), then we can choose $K^{\mathbb F}>0$ (respectively
$K^{\mathbb F}\leq 1$) such that $$H^{\mathbb
G}(\omega,t,x)I_{\Lbrack\tau,+\infty\Lbrack} = K^{\mathbb
F}(\omega,t,x) I_{\Lbrack\tau,+\infty\Rbrack}.$$
\end{lemma}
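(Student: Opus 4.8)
The plan is to derive the functional statement from the classical description of $\mathbb G$-predictable \emph{processes} in a progressive enlargement by an honest time, via a functional monotone class argument, and then to get the positivity (resp.\ boundedness) refinement in (b) by an elementary truncation. Throughout, recall $\widetilde{\cal P}(\mathbb H)={\cal P}(\mathbb H)\otimes{\cal B}(\mathbb R^d)$. The single input I would use is the classical decomposition of $\mathbb G$-predictable processes that is valid because $\tau$ is honest (Jeulin; see \cite{Jeu} and its use in \cite{aksamit/choulli/deng/jeanblanc3}): every bounded $\mathbb G$-predictable process $U$ can be written as $U=U'I_{\Lbrack 0,\tau\Lbrack}+U''I_{\Lbrack\tau,+\infty\Rbrack}$ with $U'$ and $U''$ bounded and $\mathbb F$-predictable, the two stochastic intervals forming a partition of $\Omega\times(0,+\infty)$.

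For assertion (a), I would first treat a functional of the multiplicative form $H^{\mathbb G}(\omega,t,x)=U_t(\omega)\,g(x)$ with $U$ a bounded $\mathbb G$-predictable process and $g$ a bounded Borel function on $\mathbb R^d$; such functionals form a multiplicative class generating $\widetilde{\cal P}(\mathbb G)$. Applying the decomposition above to $U$ and setting $H^{\mathbb F}(\omega,t,x):=U'_t(\omega)g(x)$ and $K^{\mathbb F}(\omega,t,x):=U''_t(\omega)g(x)$ produces bounded $\widetilde{\cal P}(\mathbb F)$-measurable functionals satisfying (\ref{eq:widePGandwidePF}). The collection of bounded $\widetilde{\cal P}(\mathbb G)$-measurable functionals admitting such a representation is a vector space containing the constants and closed under bounded monotone limits: given $H_n\uparrow H$ with representations $(H_n^{\mathbb F},K_n^{\mathbb F})$, the functionals $H^{\mathbb F}:=\liminf_n H_n^{\mathbb F}$ and $K^{\mathbb F}:=\liminf_n K_n^{\mathbb F}$ agree respectively with $H$ on $\Lbrack 0,\tau\Lbrack$ and on $\Lbrack\tau,+\infty\Rbrack$. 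By the functional monotone class theorem this collection then contains all bounded $\widetilde{\cal P}(\mathbb G)$-measurable functionals, and truncating $|H^{\mathbb G}|$ and passing to the limit removes the boundedness assumption, which proves (a).

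For assertion (b), start from a representation (\ref{eq:widePGandwidePF}) given by (a). Since the two intervals partition $\Omega\times(0,+\infty)$, evaluating (\ref{eq:widePGandwidePF}) at points of $\Lbrack\tau,+\infty\Rbrack$ shows that $K^{\mathbb F}(\omega,t,x)=H^{\mathbb G}(\omega,t,x)$ for all $x$ whenever $(\omega,t)\in\Lbrack\tau,+\infty\Rbrack$. Hence, if $H^{\mathbb G}>0$, then $K^{\mathbb F}>0$ on $\Lbrack\tau,+\infty\Rbrack$, so that $\widetilde K^{\mathbb F}:=K^{\mathbb F}I_{\{K^{\mathbb F}>0\}}+I_{\{K^{\mathbb F}\leq 0\}}$ is a strictly positive, $\widetilde{\cal P}(\mathbb F)$-measurable functional coinciding with $K^{\mathbb F}$ on $\{K^{\mathbb F}>0\}$, which contains $\Lbrack\tau,+\infty\Rbrack$; therefore $\widetilde K^{\mathbb F}I_{\Lbrack\tau,+\infty\Rbrack}=K^{\mathbb F}I_{\Lbrack\tau,+\infty\Rbrack}=H^{\mathbb G}I_{\Lbrack\tau,+\infty\Rbrack}$. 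The case $H^{\mathbb G}\leq 1$ is identical, replacing $\widetilde K^{\mathbb F}$ by $K^{\mathbb F}I_{\{K^{\mathbb F}\leq 1\}}+I_{\{K^{\mathbb F}>1\}}$.

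The only point that really needs care is the invocation of the honest-time decomposition of $\mathbb G$-predictable processes with exactly the stochastic intervals $\Lbrack 0,\tau\Lbrack$ and $\Lbrack\tau,+\infty\Rbrack$: one must check that the graph $\Lbrack\tau\Lbrack$ is attached to the correct piece of the partition and that the $\mathbb F$-predictable representatives $U',U''$ can indeed be taken bounded. If the off-the-shelf statement is phrased over a slightly different pair of intervals, one patches the graph $\Lbrack\tau\Lbrack$ separately, using that honesty of $\tau$ lets its $\mathbb F$-accessible part be exhausted by $\mathbb F$-predictable times along which the predictable structure transports from $\mathbb F$ to $\mathbb G$. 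Everything else — the monotone class argument and the two truncations in (b) — is routine.
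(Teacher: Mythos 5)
The paper does not supply a proof of this lemma; it is stated in the appendix (under ``Representation Results'') without an argument, apparently as a citation of the classical Jeulin decomposition. So there is no paper proof to compare against, and your attempt must be judged on its own merits.

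Your overall strategy is the natural one and is essentially sound: invoke the honest--time decomposition of $\mathbb G$-predictable \emph{processes} (Jeulin), upgrade it to $\widetilde{\cal P}(\mathbb G)$-measurable \emph{functionals} by a functional monotone class argument over the generating multiplicative class $\{U_t(\omega)g(x)\}$, and obtain the positivity/boundedness refinement in (b) by the elementary ``keep the piece you are forced to keep and overwrite the rest'' truncation. The vector-space and monotone-limit verifications you sketch are correct ($\liminf$ of the $\mathbb F$-representatives is $\widetilde{\cal P}(\mathbb F)$-measurable and agrees with the limit on the relevant stochastic interval), and the truncation argument for (b) is correct as written.

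The one point that is not resolved is exactly the one you flag at the end, and it is a genuine gap rather than a technicality. The classical honest-time result gives a decomposition over the pair $(\Lbrack 0,\tau\Rbrack,\ \Rbrack\tau,+\infty\Lbrack)$ --- the graph $\Lbrack\tau\Lbrack$ is attached to the \emph{before} piece --- whereas the lemma as stated requires $(\Lbrack 0,\tau\Lbrack,\ \Lbrack\tau,+\infty\Rbrack)$, with the graph attached to the \emph{after} piece. Passing from one to the other means producing a single $\widetilde{\cal P}(\mathbb F)$-measurable $K^{\mathbb F}$ that simultaneously equals $J$ on $\Lbrack\tau\Lbrack$ and $K$ on $\Rbrack\tau,+\infty\Lbrack$, and this is not automatic because $\Lbrack\tau\Lbrack$ is not in general an $\mathbb F$-predictable set. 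Your proposed patch (``exhaust the $\mathbb F$-accessible part of $\tau$ by $\mathbb F$-predictable times and transport the predictable structure'') is the right idea but is stated at a level of vagueness that would not survive refereeing: on the graph $\{\tau=T\}$ for an $\mathbb F$-predictable $T$, the value $U_\tau=U_T$ is only ${\cal G}_{T-}$-measurable, and it still has to be shown that it admits an ${\cal F}_{T-}$-measurable version on $\{\tau=T\}$ (this uses that $\tau$ is honest, hence ${\cal F}_T$-measurable on $\{\tau\le T\}$, but the step needs to be written out); for the part of $\tau$ that avoids $\mathbb F$-stopping times, $\Lbrack\tau\Lbrack$ is $\mathbb F$-evanescent in the relevant sense and the modification is harmless, but again this has to be argued. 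Until the graph is handled, the proof establishes the lemma for the open interval $\Rbrack\tau,+\infty\Lbrack$ (which in fact covers all the places the lemma is used in the body of the paper), but not for the closed interval $\Lbrack\tau,+\infty\Rbrack$ as stated.
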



\section*{Acknowledgements}
The research of Tahir Choulli  and Jun Deng is supported financially by the
Natural Sciences and Engineering Research Council of Canada,
through Grant G121210818. The research of Anna Aksamit and Monique Jeanblanc is supported
by Chaire Markets in transition, French Banking Federation.




\end{document}